\def\sqr#1#2{{\vcenter{\vbox{\hrule height.#2pt
        \hbox{\vrule width.#2pt height#1pt \kern#2pt
        \vrule width.#2pt}
        \hrule height.#2pt}}}}
\newcommand{\nc}{\newcommand}
\nc{\parent}[1]{$[\![#1]\!]$}
\newtheorem{theorem}{Theorem}[section]
\newtheorem{lemma}{Lemma}[section]
\newtheorem{example}{Example}[section]
\newtheorem{corollary}{Corollary}[section]
\newtheorem{proposition}{Proposition}[section]
\newtheorem{remark}{Remark}
\newtheorem{definition}{Definition}[section]
\newtheorem{assumption}{Assumption}[section]
\newenvironment{pf-main}{{\bf \sc Proof of Theorem \ref{mainresult}.}\hspace{3mm}}{\qed}
\nc{\cadlag}{c\`{a}dl\`{a}g } \nc{\ba}{\begin{array}}
\nc{\ea}{\end{array}} \nc{\be}{\begin{equation}}
\nc{\ee}{\end{equation}} \nc{\bea}{\begin{eqnarray}}
\nc{\eea}{\end{eqnarray}} \nc{\bean}{\begin{eqnarray*}}
\nc{\eean}{\end{eqnarray*}} \nc{\bu}{\bullet} \nc{\nn}{\nonumber}
\nc{\cA}{{\mathcal A}} \nc{\cB}{{\mathcal B}} \nc{\cC}{{\mathcal C}} \nc{\bfE}{\mathbf{E}}
\nc{\cD}{{\mathcal D}} \nc{\bbD}{\mathbb{D}}\nc{\bbH}{\mathbb{H}}
\nc{\bbF}{\mathbb{F}}\nc{\bbG}{\mathbb{G}}\nc{\cG}{{\mathcal G}} \nc{\cF}{{\mathcal F}}
\nc{\cS}{{\mathcal S}} \nc{\cU}{{\mathcal U}} \nc{\cH}{{\mathcal H}}
\nc{\cK}{{\mathcal K}} \nc{\cL}{{\mathcal L}} \nc{\cM}{{\mathcal M}}
\nc{\cO}{{\mathcal O}} \nc{\cP}{{\mathcal P}} \nc{\cX}{{\mathcal X}}\nc{\bbE}{\mathbb{E}}
\nc{\bbEQ}{\mathbb{E}_{\mathbb{Q}}} \nc{\eps}{\varepsilon}
\nc{\bbEP}{\mathbb{E}_{\mathbb{P}}}\nc{\bbL}{\mathbb{L}}
\nc{\bbP}{\mathbb{P}} \nc{\bbQ}{\mathbb{Q}} \nc{\del}{\partial}
\nc{\Om}{\Omega} \nc{\om}{\omega} \nc{\bbR}{\mathbb{R}}
\nc{\bbC}{\mathbb{C}} \nc{\bfr}{\begin{flushright}}
\nc{\efr}{\end{flushright}} \nc{\dXt}{\Delta X_{t}} \nc{\dXs}{\Delta
X_{s}} \nc{\bs}{\blacksquare} \nc{\dX}{\Delta X} \nc{\dY}{\Delta Y}
\nc{\dnkx}{\left(X(T^{n}_{k})-X(T^{n}_{k-1})\right)}
\nc{\esssup}{\mathrm{ess}\mbox{ }\mathrm{sup}}
\nc{\essinf}{\mathrm{ess}\mbox{ } \mathrm{inf}}
\nc{\dhats}{\widehat{\delta_s}} \nc{\tX}{\tilde{X}}
\nc{\tZ}{\tilde{Z}}
\nc{\what}{\widehat}
 \nc{\half}{\frac{1}{2}}
\def\rar{\rightarrow} \nc{\uar}{\uparrow}
\nc{\sgn}{\mbox{sgn}}
\nc{\chf}{\mbox{$\mathbf1$}} \nc{\eid}{\stackrel{d}{=}}
\begin{document}

\title[Informed trading and the limit order book]{Informed trading, limit order book and implementation shortfall: equilibrium and asymptotics}
\author{Umut \c{C}et\.in}
\address{Department of Statistics, London School of Economics and Political Science, 10 Houghton st, London, WC2A 2AE, UK}
\email{u.cetin@lse.ac.uk}
\author{Henri Waelbroeck}
\address{Baruch College, CUNY, NY, USA and AlgoCortex LLC}
\email{H.Waelbroeck@algocortex.com}
\date{\today}
\begin{abstract}
We propose a static equilibrium model for limit order book where $N\geq 1$ profit-maximizing investors receive an information signal regarding the liquidation value of the asset and execute via a competitive dealer with random initial inventory. While the dealer's initial position plays a role similar to noise traders in Kyle \cite{Kyle1985}, he trades against a competitive limit order book populated by liquidity suppliers as in Glosten \cite{Glosten94}.  We show that an equilibrium exists for bounded signal distributions, obtain closed form solutions for Bernoulli-type signals and propose a straightforward iterative algorithm to compute the equilibrium order book for the general case.  We obtain the exact analytic asymptotics for the market impact of large trades and show that the functional form depends on the tail distribution of the private signal of the insiders. In particular, the impact follows a power law if the signal has fat tails while the law is logarithmic in case of lighter tails. Moreover, the tail distribution of the trade volume in equilibrium obeys a power law in our model.  We find that  the liquidity suppliers charge a minimum bid-ask spread that is independent of the amount of `noise' trading but increasing in the degree of informational advantage of insiders in equilibrium. The model also predicts that the order book flattens as the amount of noise trading increases converging to a model with proportional transactions costs. In case of a monopolistic insider we show that the {\em last slice} traded against the limit order book is priced at the liquidation value of the asset. However, competition among the insiders leads to aggressive trading causing the aggregate profit to vanish in the limiting case $N\rar \infty$.   The numerical results also show that the spread increases with the number of insiders keeping the other parameters fixed.  Finally, an equilibrium may not exist if the liquidation value is unbounded. We conjecture that existence of equilibrium requires a sufficient amount of competition among insiders if the signal distribution exhibit fat tails. 
\end{abstract}
\maketitle

\section{Introduction}
Kyle \cite{Kyle1985} studies in a simple but remarkably powerful framework a single risk neutral informed trader and a number of non-strategic uninformed liquidity traders submitting orders to a market maker, who aggregates all the orders and clear the market at a single price. Consequently, Kyle's batch trading model does not produce a bid-ask spread. However, his model allows for an explicit characterisation of equilibrium parameters - including the optimal strategy of the informed trader as well as the equilibrium pricing rule. This in turn allows us to analyse how the private information is disseminated to the market and gets incorporated into the prices over time.  In particular, {\em Kyle's lambda} yields an explicit measurement of market's liquidity and price impact of trades. 

While the role of the  market makers and the price setting mechanism of Kyle's batch trading model were in line with the practices of specialists and floor traders of the main exchanges in the the 80s, the role of the designated market makers has diminished in recent years. Nowadays most of the equity and derivative exchanges have moved to the electronic limit order book format.  What distinguishes limit order markets from markets with a uniform market-clearing price as in Kyle \cite{Kyle1985} is that each limit order is executed at its respective limit price leading to {\em discriminatory pricing}. 

The case of liquidity suppliers moving first and submitting limit orders to be later hit by potentially informed market order was first studied by Rock \cite{rock} and Glosten \cite{Glosten94}. In equilibrium even infinitesimal trades have significant impact, which results in a positive bid-ask spread unlike the uniform price auction models. Following the approach initiated by Rock and Glosten,  Seppi \cite{Seppi97} studies the liquidity provision by a specialist competing against a competitive limit order book. His model also allows a discussion on market design issues such as the effect of `tick size.'  Based on the model assumptions of \cite{Seppi97} Parlour and Seppi \cite{ParSep03} build a model of competition between exchanges. In another static model Foucault and Menkveld \cite{FM08} consider an equilibrium among limit order traders and a broker than can be one of two types and use this model to study the effects of market fragmentation in the context of rivalry between Euronext and the London Stock Exchange in the Dutch stock market.

While useful for policy purposes, the above static models have also some undesirable features.    Probably the most unrealistic feature of these models is that, quoting Parlour and Seppi \cite{ParSepSurvey}, ``investors either have an inelastic motive to trade, and are willing to pay for immediate execution via market orders, or they are entirely disinterested liquidity providers with no reason to trade other than to be compensated for supplying liquidity via limit orders." In particular, unlike the Kyle's model, their focus is more concentrated on the supply side of the liquidity and the investors' trading strategies are not completely endogenous. As a result, the shape of the limit order book, which in general can only be obtained numerically, that arises in these models does not truly capture the impact of trades by a rational investor with price elastic motives. 

In this paper we present a static microstructure model for the limit order book, where the adverse selection occurs due to the existence of  informed traders (henceforth called {\em insider}s).  Following Kyle \cite{Kyle1985}, the  insiders know the liquidation value $V$ of the asset in advance and place a market order to maximise their expected profit. They submit their order to a competitive dealer who already has a position that is independent from the liquidation value.  The dealer trades the aggregate amount against a competitive limit order book populated by liquidity suppliers that arrived to the market before the dealer and the insiders.

Our assumption that the limit order book is competitive requires a justification in view of the results of Biais et al. \cite{BMR00} and Back and Baruch \cite{BBLOB}. First of all, the competitive offer curve should be viewed as a limiting book when the number of limit order traders increases to infinity. However, as shown in an example in \cite{BBLOB}  the competitive book is not always obtained as a limit of Nash equilibiria among finitely many liquidity providers if the level of adverse selection faced by the limit order traders is not high enough. In our model there is always sufficient adverse selection since we assume that the dealer's own demand for the asset is normally distributed.

Even though we model the interactions in a single period model, our framework also has the flavour of a continuous-time Kyle model considered by Back \cite{Back}. Indeed, in case of a monopolistic insider, we show that conditional on $V=v$ the last infinitesimal slice of the aggregate order traded by the dealer against the order book is priced at $v$ on average, reminiscing the convergence of the equilibrium price to the liquidation value by the end of the trading horizon in the continuous-time Kyle model. Thus,  all private information gets incorporated into the order book once the last slice has traded. This is  in contrast with  the corresponding Kyle model in one period\footnote{In the corresponding one-period Kyle model, where $V$ is normally distributed with mean $0$, the average price for the aggregate order conditional on $V=v$ equals $\frac{v}{2}$.}, which is not surprising given the uniform pricing in the Kyle model.

The above `convergence' result  motivates the analysis of the case of multiple insiders.  In the same vein as Holden and Subrahmanyam \cite{HolSub92} we also model and solve the imperfect competition among $N \geq 2$ insiders who observe the liquidation value $V$ perfectly. As in \cite{HolSub92} this competition leads to more aggressive trading. We find that (conditional on $V=v$) the average price for the last slice traded by the dealer exceeds $v$ if the insiders are buying in equilibrium, which can be attributed to more aggressive buying due to the competition. An analogous observation holds when the optimal strategy in equilibrium is to sell. We obtain an explicit formula for the aggregate profit of the insiders that shows that the aggregate profit converges to $0$ as $N \rar \infty$.

 We characterise the equilibrium strategy of insiders and the corresponding equilibrium order book as a solution of an integral equation, which is equivalently given by the fixed point of an integral operator.  Although this equation admits an analytic solution only in very restrictive settings, its numerical implementation is straightforward. We establish the existence of equilibrium when the fundamental value is bounded and obtain numerical solutions for a number of unbounded distributions commonly found in the literature and used in practice.

Despite the fact that the exact form of the order book can only be solved numerically, we are nevertheless able to obtain the exact analytic asymptotics when the order size is large. Moreover, due to the scaling property of the normal distribution, these asymptotics still provide a good approximation for small orders if the variance of the dealer's demand is sufficiently small. The error in such an approximation appears to be still low  even in the case of higher variance as  confirmed visually by our numerical studies.

Due to the discriminatory pricing in the order book, our model produces a non-zero bid-ask spread in equilibrium. We find that the bid-ask spread imposed by the limit order traders is the same regardless of the variance of the initial inventory of the dealer.  The shape of the order book, however, is not invariant to the changes in this variance; on the contrary, the order book flattens as  it increases. That is, the limit order traders do not need to extract significantly higher rents for larger trades if the information content of the order is very small. In the limit  the equilibrium converges to one where the transaction cost is proportional to the trade size. Moreover,  similar to the model considered by Foucault \cite{Foucault99}, our model also predicts that a larger price volatility leads to a bigger bid-ask spread - a phenomenon that is empirically documented by Ranaldo \cite{Ranaldo04}. Indeed, we show that the spread gets wider as the variance of the private signal of the insiders increases, which corresponds to the case of a higher informational advantage.

That we can compute the price impact of large trades has profound implications for practitioners. Portfolio managers endeavour to find mispriced assets in order to beat their benchmarks. When they decide to change their holdings to take advantage of a mispricing, they create orders in an order management system; the trading desk routes these orders to broker-dealers for execution. When they do so, on average, the price to buy  is greater than at the time of the decision, and vice-versa, the selling price is lower, because the trade's effect on supply and demand causes market impact. The {\em implementation shortfall} is thereby the value lost by not being able to execute at the decision price \cite{peroldIS}. Understanding how this implementation shortfall scales with trade size is essential to optimize the investment management process. A market impact model is needed in order to maximize net trading profits. It is also important to optimize position sizes, limit liquidity risk and estimate a portfolio's capacity. 

Empirical data provides some clues as to the shape and scale of market impact. In particular, it has been known for some time that impact is a concave function of trade size (see Torre \cite{Torre97}). But biases in the data, a low signal-to-noise ratio and other issues prevent practitioners from determining the functional form of market impact precisely, particularly for large trades where data is sparse. Models used by practitioners include square root and logarithm (e.g. Torre \cite{Torre97}, Potters and Bouchaud \cite{PBimpact},  Almgren et al. \cite{Almgren2005}, Bershova and Rakhlin \cite{Bershova2013}, and Zarinelli et al. \cite{Zarinelli15}). When calibrated to data, these models yield similar results for small to moderate trade sizes, but they yield very different predictions for the impact of for very large trades. This is unfortunate given that the largest trades typically dominate the aggregate implementation shortfall for most portfolios. The absence of a consensus on the shape of market impact for very large trades motivates the search for a theoretical framework that would capture the essential features of trading and yet be sufficiently parsimonious to be testable. 

We show that the market impact follows a power law or a logarithmic law depending on the distribution of the liquidation value. The price impact, and equivalently the implementation shortfall, has a power law if the liquidation value has fat tails while lighter tails lead to a logarithmic behaviour. 

Our framework also allows us to compute the tails of the probability distribution associated with aggregate volume. For a large class of distributions that can be used to model the liquidation value we show that the tail probability distribution for the trade volume  obeys a power law. The power-law behavior of order sizes has been documented previously in various contexts: Gopikrishnan et al. \cite{powerlawVol} showed that the market volume in a time interval $\Delta t$ has a power-law tail with exponent 1.7;  Lillo et al. consider off-book trades \cite{Lillo2005} and find an exponent ranging from 1.59 to 1.74. Vaglica et al. reconstructed metaorders in Spanish stock exchange using data with brokerage codes and found that the metaorder transaction size is distributed has a power law tail with exponent 1.7 \cite{Vaglica2008}. And in a study performed directly on institutional trade data from Alliance Bernstein, Bershova and Rakhlin found a tail exponent 1.56 for metaorder sizes \cite{Bershova2013}.

Our proof of the existence of equilibrium assumes that the liquidation value of the asset is bounded.   However, our numerical experiments suggests that equilibrium exists for a large class of unbounded signals. Moreover, the asymptotics of these solutions agree with the analytical forms that our theory predicts.  However, unlike the bounded case, an equilibrium may not exist for all unbounded distributions. Given the premise of numerical results and our formal calculations we conjecture that existence of equilibrium requires a sufficient amount of competition among insiders if the signal distribution exhibit fat tails. For instance, when the private signal is given by a Student's t-distribution our numerical iterations diverge in case of a monopolistic insider.

We also consider an alternative model where instead of sending the order to a dealer with an existing inventory, the informed investors send the order to an institutional trading desk that also receives orders from noise traders. The aggregate is liquidated against a limit order book; however, in the case of the institutional trading desk the insiders and noise traders all receive allocations at the same average price. While we were not able to find an existence proof for this model, numerical solutions are similar to  those of the dealer inventory model.

Structure of the paper is as follows: We present the dealer inventory model in the next section. In Section 3, we show the existence of an equilibrium and characterize some of its properties. Section 4 considers the asymptotic behaviour of market impact. Numerical solutions are provided in Section 5 for the dealer inventory model. Section 6 explores the trading desk model numerically and compares results with the dealer inventory model.

\section{The market structure and equilibrium}

Our model is built upon Glosten \cite{Glosten94} and the trading takes place in one-period: There are three class of investors that are all risk-neutral: 1) competitive liquidity suppliers who post limit orders, 2) a competitive  dealer who clears the market, and 3) $N\geq 1$ insider(s), who know(s)  the liquidation value $V$ of the asset.  All random variables in this section are assumed to be define on a complete probability space $(\Om, \cF, P)$ and $E$ is the expectation operator associated to $P$.

Liquidity suppliers move first and place limit orders that give rise to an order book.   That is, if the {\em limit order book} is defined by some function $h:\bbR \to \bbR$, the market order moving up (or down) the book faces a cost of $h(y)dy$ as soon as hitting the limit order at level $y$.  The dealer already has $Z \sim N(0,\sigma^2)$ number of shares of the asset, which is assumed to be independent of $V$. The dealer's initial inventory in the asset is unknown by other market participants. The insider chooses a  trade size $X$ to maximise her expected profit conditional on her private information. 

Let us first consider the case $N=1$. We assume that the cost of a market order of $X$ units by the insider is
\be \label{e:insidercost}
\int_0^X h(Z+y)dy.
\ee

The above cost can be justified as a result of Bertrand competition among dealers whose initial mean-zero inventories are normally distributed with identical variance $\sigma^2$. Indeed, first observe that when the insider trades $X$ units via the dealer, the dealer's inventory changes from $Z$ to $Z+X$. If the cost to the insider is given by (\ref{e:insidercost}), the total  profit of the dealer after trading via the liquidity suppliers is given by
\[
\int_0^X h(Z+y)dy-\int_0^{X+Z} h(y)dy=\int_Z^{X+Z} h(y)dy-\int_0^{X+Z} h(y)dy=-\int_0^Z h(y)dy,
\]
which is the cost of liquidating his initial inventory directly via the limit order book. Therefore, no dealer will be willing to charge less than (\ref{e:insidercost}).  The insider knows the distribution of $Z$ but do not know the inventories of individual dealers. Dealers first announce that they will price according to \ref{e:insidercost} plus a premium (the dealer's profit). Insider then chooses a dealer based on this information, before knowing $Z$. Bertrand competition will then drive the dealer's premium to zero. That is, an individual dealer knows that if he charges higher than what is proposed by (\ref{e:insidercost}), he will  be undercut by another dealer. Thus, a Bertrand competition will lead to (\ref{e:insidercost}) for the cost of the trades of the insider. 

\begin{remark}
	Note that the quote given by the competitive dealer that leads to (\ref{e:insidercost})  does not violate the {\em limit order protection rule} that is mandated by the SEC in the US. To see this suppose the insider wants to buy $X>0$ many units. She is charged $p(X):=\frac{1}{X}\int_Z^{X+Z} h(y)dy $ on average for this transaction. If $p(X)$ is bigger than the best ask, i.e. $h(0+)$,  the price priority implies that the limit sell orders at price $p(X)$ and less must be filled first. Thus, the dealer will first buy $h^{-1}(p(X))$ many shares at a cost of $\int_0^{h^{-1}(p(X))}h(y)dy$ and the remaining $X+Z-h^{-1}(p(X))$ shares in his inventory at $\int_{h^{-1}(p(X))}^{X+Z} h(y)dy$.  This makes his cumulative cost $\int_0^{X+Z}h(y)dy$ and final profit $-\int_0^Z h(y)dy$ coinciding with above calculations.
\end{remark}
Thus, in view of (\ref{e:insidercost})  the expected profit of the insider from a market order of size $x$ is given by
\[
E^v\left[Vx-\int_0^x h(Z+y)dy\right],
\]
where $E^v$ is the expectation operator for the insider with the private information $V=v$. 
Since $h$ is assumed nondecreasing, the first order condition characterises the unique  $X^*$ achieving the maximum expected profit via $V=F(X^*)$, where
\be \label{e:foc1}
F(x):=\int_{-\infty}^{\infty}h(x+z)q(\sigma,z)dz
\ee
and $q(\sigma,\cdot)$ is the probability density function of a mean-zero Gaussian random variable with variance $\sigma^2$.
Note that since $h^*$ is non-decreasing and not constant, $F$ is strictly increasing and one-to-one. Thus, $X^*=F^{-1}(V)$.

We shall also consider the case of multiple insiders trading via the same competitive dealer and knowing the value of $V$. Assuming that the dealer charges each insider an amount proportional to their order size, the expected profit of an individual insider placing an order of size $x$ is given by
\[
E^v\left[Vx-\frac{x}{U+x}\int_0^{U+x} h(Z+y)dy\right],
\]
where $U$ denotes the aggregate demand of the other insiders. The number of insiders will be denoted by $N$. 

The first order condition associated with the above optimisation problem of an individual insider is again given by
\[
V=E^v\left [\frac{x}{U +x}h(Z+U+x) +\frac{U}{((U+x)^2}\int_0^{U +x}h(Z+u)du\right].
\]
As every insider has symmetric information and is risk-neutral, the equilibrium demand $x^*$ for each insider must be the same and satisfy
\[
V=E^v\left [\frac{h(Z+Nx^*)}{N} +\frac{N-1}{N^2x^*}\int_0^{Nx^*}h(Z+u)du\right].
\]
Denoting the total informed demand by $X^*$,  the above can be rewritten as $V=F(X^*)$, where
\be \label{e:foc}
F(x):=E^v\left [\frac{h(Z+x)}{N} +\frac{N-1}{Nx}\int_0^{x}h(Z+u)du\right],
\ee
and $F(0)$ is interpreted by continuity to be 
\[
E^v\left [\frac{h(Z)}{N} +\frac{(N-1)h(Z)}{N}\right]=E^v[h(Z)].
\]
 Moreover, it follows from the monotonicity of $h$ that $F$ defined via (\ref{e:foc}) is strictly increasing. Also note that (\ref{e:foc}) reduces to (\ref{e:foc1}) when $N=1$. Thus, we shall always refer to (\ref{e:foc}) when discussing the optimal strategies of the insiders regardless of the value of $N$.
 
In what follows  the total informed demand will be denoted by $X$ and we assume following Glosten \cite{Glosten94} that limit prices are given by `tail expectations.' That is, denoting the total demand $X+Z$ by $Y$, 
\be \label{e:h}
h(y)=\left\{\ba{ll}
E[V|Y\geq y], & \mbox{if } y>0;\\
E[V|Y\leq y], & \mbox{if } y<0.
\ea\right\}
\ee
The value $h(0)$ is not relevant for the subsequent computations and can be freely chosen to be any value between the best ask $h(0+):=\lim_{y \downarrow 0} h(y)$ and the best bid $h(0-):=\lim_{y \uparrow 0} h(y)$. 

The definition (\ref{e:h}) entails that liquidity suppliers earn zero aggregate profit on average. Indeed, the expected profit is given by
\bean
E\left[\int_0^Y(h(y)-V)dy\right]&=&E\left[\chf_{[Y>0]}\int_0^Y(h(y)-V)dy+\chf_{[Y<0]}\int_Y^0(V-h(y))dy\right]\\
&=&\int_0^{\infty}E[(h(y)-V)\chf_{[Y\geq y]}]dy+\int_{-\infty}^0E[(V-h(y))\chf_{[Y\leq y]}]dy=0,
\eean
where the last equality is due to the definition of the conditional expectation.

\begin{definition} \label{d:eq} The pair $(h^*,X^*)$ is said to be a Glosten equilibrium if $h^*$ is non-decreasing and non-constant, $X^*\in \bbR$ and
	\begin{itemize}
		\item[i)] $h^*$ satisfies (\ref{e:h}) with $Y=X^*+Z$;
		\item[ii)] $X^*$ is the profit maximising order size for the insider(s) given $h^*$. That is, $V=F(X^*)$, where $F$ is given by (\ref{e:foc}).
	\end{itemize}
\end{definition}
The strict monotonicity of $F$ leads to the following result that in particular yields an explicit formula for the aggregate profit of the insiders.
\begin{proposition}\label{p:wealth}
	Let $(X^*,h)$ be an equilibrium and $F$ defined by (\ref{e:foc}). Then, the following hold:
	\begin{enumerate}
		\item For any $x \in \bbR$ we have
		 \bea
		E^v[h(x+Z)]&=&F(x)+\frac{N(N-1)}{x^N}\int_0^{x}(F(x)-F(y))y^{N-1}dy \nn \\
		&=&F(x)+N(N-1)\int_0^{1}(F(x)-F(xy))y^{N-1}dy \label{e:exph}.
		\eea 
		\item $E^v[h(X^*+Z)]=v$ when $N=1$ and, for $N>1$,  $E^v[h(X^*+Z)]>v$ (resp. $E^v[h(X^*+Z)]<v$) if $X^*>0$ (resp. $X^*<0$). More precisely, 
			\be \label{e:convergence}
		E^v[h(X^*+Z)]=v+N(N-1)\int_0^{1}(v-F(yX^*))y^{N-1}dy.
		\ee
		\item The aggregate expected liquidation profit of the insiders conditional on $V=v$ is given by
		\bea 
		\pi^*(v)&:=&E^v\left[\int_0^{X^*}(v-h(y+Z))dy\right]=N\int_0^{F^{-1}(v)}(v-F(y))\left(\frac{y}{F^{-1}(v)}\right)^{N-1}dy \nn\\
		&=&F^{-1}(v)N\int_0^{1}(v-F(yF^{-1}(v)))y^{N-1}dy.\label{e:wealth}
		\eea
	\end{enumerate}
\end{proposition}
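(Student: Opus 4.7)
The plan is to reduce everything to one Volterra-type identity for the map $g(x):=E^v[h(x+Z)]$ that is hidden inside the first order condition (\ref{e:foc}).  Rewriting (\ref{e:foc}) as
\[
N\,x\,F(x) \;=\; x\,g(x) + (N-1)\int_0^x g(u)\,du,
\]
we see that $g$ solves a first-order linear Volterra equation whose solution yields part (1); parts (2) and (3) will then follow by specialising at $x=X^*$ and integrating.

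For part (1), I would multiply the identity by $x^{N-2}$ and observe the key algebraic trick
\[
\frac{d}{dx}\!\left[x^{N-1}\!\int_0^x g(u)\,du\right] \;=\; x^{N-1}g(x)+(N-1)x^{N-2}\!\int_0^x g(u)\,du \;=\; N\,F(x)\,x^{N-1}.
\]
Integrating from $0$ to $x$ (the boundary term vanishes for $N\ge 1$) gives
\[
\int_0^x g(u)\,du \;=\; \frac{N}{x^{N-1}}\int_0^x F(t)\,t^{N-1}\,dt,
\]
which, substituted back into the Volterra equation, yields $g(x)=NF(x)-\frac{N(N-1)}{x^N}\int_0^x F(t)t^{N-1}dt$.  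To match the stated form, I rewrite $NF(x)=F(x)+(N-1)F(x)=F(x)+\frac{N(N-1)}{x^N}F(x)\int_0^x t^{N-1}dt$ and combine, obtaining the first line of (\ref{e:exph}); the second line is just the change of variables $t=xy$.

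Part (2) is essentially a corollary.  At equilibrium, $V=F(X^*)$, so evaluating (\ref{e:exph}) at $x=X^*$ produces (\ref{e:convergence}).  When $N=1$ the integral term vanishes, giving $E^v[h(X^*+Z)]=v$.  For $N>1$, strict monotonicity of $F$ (noted just after (\ref{e:foc})) controls the sign: if $X^*>0$, then for $y\in(0,1)$ we have $yX^*<X^*$, so $F(yX^*)<F(X^*)=v$ and the integrand is strictly positive; the case $X^*<0$ is symmetric.

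Part (3) follows by Fubini plus the integral identity from step 2.  Exchanging $E^v$ with $\int_0^{X^*}dy$ gives
\[
\pi^*(v)=\int_0^{X^*}\bigl(v-g(y)\bigr)\,dy = vX^*-\int_0^{X^*}g(y)\,dy,
\]
and substituting $\int_0^{X^*}g(u)du=\tfrac{N}{(X^*)^{N-1}}\int_0^{X^*}F(t)t^{N-1}dt$ together with the identity $vX^*=\frac{N}{(X^*)^{N-1}}\int_0^{X^*} v\,t^{N-1}dt$ collapses the two terms into $\frac{N}{(X^*)^{N-1}}\int_0^{X^*}(v-F(t))t^{N-1}dt$, which is (\ref{e:wealth}) after recalling $X^*=F^{-1}(v)$.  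The only non-routine step in the whole argument is spotting the integrating-factor identity in part (1); once that is in hand, everything else is bookkeeping and the monotonicity of $F$.
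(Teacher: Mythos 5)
Your proposal is correct and follows essentially the same route as the paper: both reduce (\ref{e:foc}) to the Volterra/ODE identity $xF(x)=x\,g(x)/N+\frac{N-1}{N}\int_0^x g(u)\,du$ for $g(x)=E^v[h(x+Z)]$, solve it for $g$ (your integrating factor $x^{N-1}$ is just the explicit solution of the linear ODE the paper writes down), and then obtain parts (2) and (3) by evaluating at $x=X^*$ with $F(X^*)=V$ and using strict monotonicity of $F$. The only cosmetic difference is in part (3), where you substitute $\int_0^{X^*}g=\frac{N}{(X^*)^{N-1}}\int_0^{X^*}F(t)t^{N-1}dt$ directly rather than passing through the paper's $\frac{N}{N-1}$ manipulation, which in fact covers $N=1$ without a separate argument.
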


The expression (\ref{e:convergence}) reveals an interesting feature of our model akin to Kyle's model in continuous time. Observe that $h(X^*+Z)$ can be viewed as the last slice that is traded with the limit order traders. Thus, when there is a monopolistic insider, (\ref{e:convergence}) shows that the final slice is priced at the actual value of $V$ similar to the convergence of the equilibrium price to the liquidation value of the asset in the continuous time version of the Kyle model studied in Back \cite{Back} for general payoffs.  As expected, this `convergence' disappears when the liquidation value of the asses is observed by more than one insider. Such a competition leads to a more aggressive trading as in Holden and Subrahmanyam \cite{HolSub92} and results in higher (resp. lower) market valuation if the optimal strategy is to buy (resp. sell).

In practice an important trading benchmark for traders is {\em the implementation shortfall}. Perold \cite{peroldIS} defines it as the difference between a `paper trading' benchmark and the actual trading costs. Assuming that the benchmark is  given by the ex-ante valuation $E[V]$,  the associated shortfall in our context can be defined as follows:
\begin{definition} \label{d:IS}
	Let $h$ be a function that defines the limit order book in a Glosten equilibrium. Then, the implementation shortfall associated with trading $x$ units is given by 
	\[
	IS(x):=\frac{1}{x}\int_0^x E[h(Z+y)]dy.
	\]
\end{definition}
Observe that $IS(x)$ is simply the expected average cost of trading $x$ units. As the following result shows, it is smaller than the marginal cost $F(x)$, which is given by the first order condition (\ref{e:foc}). 
\begin{proposition}
	\label{p:IS} 	Let $h$ be a function that defines the limit order book in a Glosten equilibrium and $F(x)$ be given by the first order condition (\ref{e:foc}). Then,
	\[
	IS(x)=N \int_0^1 F(xy)y^{N-1}dy.
	\]
	In particular, $IS(x)< F(x)$ for $x>0$ and $IS(x)>F(x)$ for $x <0$.
\end{proposition}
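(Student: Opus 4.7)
The strategy is to exploit the structural relationship between $F$ and $IS$: because $Z$ is independent of $V$, the first-order condition (\ref{e:foc}) can be rewritten purely in terms of $E[h(Z+x)]$ and $IS(x)$, and combining this with the fact that $E[h(Z+x)]$ is the derivative of $x \cdot IS(x)$ yields a first-order linear ODE for $IS$ that is solved explicitly by an integrating factor.

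Concretely, first I would note that since $Z$ and $V$ are independent, the conditional expectation $E^v$ in (\ref{e:foc}) collapses to $E$, giving
\begin{equation*}
F(x) = \frac{1}{N}E[h(Z+x)] + \frac{N-1}{N}\cdot\frac{1}{x}\int_0^x E[h(Z+u)]\,du = \frac{1}{N}E[h(Z+x)] + \frac{N-1}{N}IS(x).
\end{equation*}
Differentiating the identity $x \cdot IS(x)=\int_0^x E[h(Z+y)]dy$ yields $E[h(Z+x)] = IS(x) + x\,IS'(x)$, which substituted into the preceding display produces the ODE
\begin{equation*}
x\,IS'(x) + N\,IS(x) = N F(x).
\end{equation*}
Multiplying through by $x^{N-1}$ turns the left-hand side into $\frac{d}{dx}(x^N IS(x))$, so integrating from $0$ to $x$ (the boundary contribution at $0$ vanishes since $h$ is monotone hence locally bounded, whence $x^N IS(x)=O(x^N)$ as $x\to 0$) and performing the change of variable $t = xy$ delivers
\begin{equation*}
IS(x) = \frac{N}{x^N}\int_0^x t^{N-1}F(t)\,dt = N\int_0^1 y^{N-1}F(xy)\,dy.
\end{equation*}

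Finally, the inequality follows at once once one observes that $Ny^{N-1}\mathbf{1}_{[0,1]}(y)$ is a probability density on $[0,1]$, so $IS(x)$ is a genuine weighted average of the values $F(xy)$, $y\in(0,1]$. The strict monotonicity of $F$ noted after (\ref{e:foc}) then gives $F(xy)<F(x)$ for $x>0$ and $0<y<1$, whence $IS(x)<F(x)$; for $x<0$ one has instead $xy>x$, so the comparison reverses. The only mild obstacle is the bookkeeping of signs when $x<0$ in the change of variable and verifying that the boundary term at the origin vanishes, both of which are routine given the monotonicity of $h$.
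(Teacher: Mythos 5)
Your proof is correct, but it takes a different route from the paper. The paper proves Proposition \ref{p:IS} by invoking the representation (\ref{e:exph}) of $E^v[h(x+Z)]$ already established in Proposition \ref{p:wealth}, integrating it over $[0,x]$, and performing a Fubini-type interchange of the two integrals to reach $N\int_0^1 F(xy)y^{N-1}dy$; you instead bypass (\ref{e:exph}) entirely, rewrite (\ref{e:foc}) as $F=\tfrac1N E[h(Z+\cdot)]+\tfrac{N-1}{N}IS$, use $E[h(Z+x)]=\frac{d}{dx}\big(x\,IS(x)\big)$ to obtain the linear ODE $x\,IS'+N\,IS=NF$, and solve it with the integrating factor $x^{N-1}$. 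This is essentially the same trick the paper uses in the proof of Proposition \ref{p:wealth} (there the ODE is written for $g(x)=E^v[h(x+Z)]$ with initial condition $g(0)=F(0)$), but applied directly to $IS$, which makes your argument self-contained, shorter, and uniform in $N\geq 1$, whereas the paper's proof is a two-step affair that leverages (\ref{e:exph}) because that formula is needed elsewhere (for (\ref{e:convergence}) and the profit formula (\ref{e:wealth})). Your treatment of the boundary term and of the inequality (viewing $Ny^{N-1}\chf_{[0,1]}(y)\,dy$ as a probability measure and using strict monotonicity of $F$) matches the paper's use of $F(xy)<F(x)$ for $y\in(0,1)$, $x>0$. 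One cosmetic remark: local boundedness of $h$ alone does not control $E[h(Z+y)]$ since $Z$ is unbounded; the boundary term at $0$ vanishes more directly because $IS(x)\to E[h(Z)]=F(0)$, which is finite and continuous by the definition of $F$ in (\ref{e:foc}) (and because in equilibrium $h$ is a conditional expectation of $V$).
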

 \section{Characterisation of Glosten equilibrium}
Suppose that $(X^*,h^*)$ is an equilibrium. Writing $h$ instead of $h^*$ to ease the exposition and using the definition of $h$ when $y>0$, we get
\[
h(y)=E[V|X^*+Z\geq y]=E[V|F^{-1}(V)\geq y-Z]=E[V|V\geq F(y-Z)].
\]
Similarly, $h(y)=E[V|V\leq F(y-Z)]$ for $y<0$.

Next introduce the {\em right-continuous} functions $\Phi^{\pm}$ and $\Pi^{\pm}$ via
\bean
\Phi^+(y)&:=&E[V\chf_{[V> y]}], \; \Pi^+(y):= P(V> y)\\
\Phi^-(y)&:=&E[V\chf_{[V\leq y]}],\; \Pi^-(y):= P(V\leq y)=1-\Pi^+(y).
\eean
Note that $\Phi^+(y)+\Phi^-(y)=E[V]$ for all $y\in \bbR$. Moreover, $\Phi^+(y-)=E[V\chf_{[V\geq y]}]$ and  $\Pi^+(y-)= P(V\geq  y)$.

Now let us compute $h(y)$ for $y>0$:
\bean
h(y)&=&E[V|V\geq F(y-Z)]= \frac{E[V\chf_{[V\geq F(y-Z)]}]}{P(V\geq F(y-Z))}\nn\\
&=& \frac{\int_{-\infty}^{\infty}\Phi^+(F(y-z)-)q(\sigma,z)dz}{\int_{-\infty}^{\infty}\Pi^+(F(y-z)-)q(\sigma,z)dz}. 
\eean
On the other hand, $\Pi^+(x)\neq \Pi^+(x-)$ at most for countably many $x$. Thus, $\Phi^+(x)= \Phi^+(x-)$ for almost all $x$ and we have for all $y>0$
\be
h(y)=\frac{\int_{-\infty}^{\infty}\Phi^+(F(y-z))q(\sigma,z)dz}{\int_{-\infty}^{\infty}\Pi^+(F(y-z))q(\sigma,z)dz}. \label{e:hplus}
\ee
Similarly, for $y<0$
\be \label{e:hminus}
h(y)=\frac{\int_{-\infty}^{\infty}\Phi^-(F(y-z))q(\sigma,z)dz}{\int_{-\infty}^{\infty}\Pi^-(F(y-z))q(\sigma,z)dz}. 
\ee
In order to obtain an equation for $F$ it will be convenient to define, for any continuous $g$, the mappings
\[
\phi_g^{+}(x):=\frac{\int_{-\infty}^{\infty}\Phi^+(g(z))q(\sigma,x-z)dz}{\int_{-\infty}^{\infty}\Pi^+(g(z))q(\sigma,x-z)dz} \; \mbox{ and }\; \phi_g^{-}(x):=\frac{\int_{-\infty}^{\infty}\Phi^-(g(z)q(\sigma,x-z)dz}{\int_{-\infty}^{\infty}\Pi^-(g(z))q(\sigma,x-z)dz}.
\]
Let us also set
\be \label{d:phig}
\phi_g(x):= \phi_g^{+}(x)\chf_{x\geq 0} + \phi_g^{-}(x)\chf_{x<0}.
\ee

Now, combining (\ref{e:foc}), (\ref{e:hplus}) and (\ref{e:hminus}) yields an equation for $F$:
\be \label{e:F}
F(x)=\frac{1}{N}\int_{-\infty}^{\infty}q(\sigma,x-z)\phi_F(z)dz +\frac{N-1}{N x}\int_0^x dy \int_{-\infty}^{\infty}q(\sigma,y-z)\phi_F(z)dz,
\ee
 If one can change the order of integration in above\footnote{We shall see later that this is justified when $V$ is bounded from below}, then (\ref{e:F}) can be rewritten as
\be \label{e:FN2}
F(x)=\int_{-\infty}^{\infty}\left\{\frac{1}{N}q(\sigma,x-z)+\frac{N-1}{N} \bar{q}(\sigma,x,z)\right\}\phi_F(z)dz,
\ee
with
\be \label{e:qbar}
\bar{q}(\sigma,x,z):=\chf_{x\neq 0}\,\frac{1}{x}\int_0^x \,q(\sigma,y-z)dy+ \chf_{x =0}q(\sigma,z).
\ee
The preceding calculations show that the existence of a solution for the above integral equation is a necessary condition for equilibrium. We in fact have the converse, too.
\begin{theorem} \label{t:eq}
	A Glosten equilibrium exists if and only if there exists a  function $F :\bbR\to \bbR$ that satisfies (\ref{e:F}). Given such a solution $F$, $(X^*, h^*)$ constitutes an equilibrium, where $X^*=F^{-1}(V)$ and $h^*$ is defined via (\ref{e:hplus}) and (\ref{e:hminus}).
\end{theorem}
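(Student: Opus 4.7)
The equivalence has two directions. The ``only if'' part is essentially already established by the calculation immediately preceding the statement: starting from a Glosten equilibrium $(X^*,h^*)$, the function $F$ defined by the first-order condition (\ref{e:foc}) is strictly increasing (because $h^*$ is non-decreasing and non-constant), hence invertible, so $X^*=F^{-1}(V)$. The pricing rule (\ref{e:h}) combined with the event identity $\{X^*+Z\geq y\}=\{V\geq F(y-Z)\}$ yields (\ref{e:hplus}) and (\ref{e:hminus}); substituting these back into (\ref{e:foc}) reproduces (\ref{e:F}). The main content of the theorem is therefore the converse.

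For the ``if'' direction, suppose $F$ solves (\ref{e:F}). My plan is to construct $(X^*,h^*)$ and verify Definition \ref{d:eq}(i)--(ii). The first step is to show that $F$ is strictly increasing, so that $X^*:=F^{-1}(V)$ is well-defined. From (\ref{e:F}), $F(x)$ is a weighted average of values of $\phi_F$ against the Gaussian kernel $q(\sigma,\cdot)$ and its antiderivative, so strict monotonicity of $F$ reduces to showing that $\phi_F^{\pm}$, which are precisely conditional expectations of the form $E[V\mid V\geq F(y-Z)]$ and $E[V\mid V\leq F(y-Z)]$ respectively, are monotone in $y$. This step relies on the independence of $V$ and $Z$ together with a monotone-likelihood-ratio argument: conditioning first on $Z=z$ reveals monotonicity of the truncated mean in $y$, and the induced conditional law of $Z$ given $\{V\geq F(y-Z)\}$ is stochastically monotone in $y$, so that the mixture remains monotone. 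Strict monotonicity of $F$ then follows from the fact that convolution against the strictly positive Gaussian density preserves strict monotonicity of a non-constant monotone integrand.

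Having $F$ strictly increasing, I set $X^*:=F^{-1}(V)$ and define $h^*$ through (\ref{e:hplus}) and (\ref{e:hminus}). For condition (i), the key identity $\{X^*+Z\geq y\}=\{V\geq F(y-Z)\}$, valid for $y>0$ by strict monotonicity of $F$, together with Bayes' rule and conditioning on $Z$ gives $h^*(y)=E[V\mid X^*+Z\geq y]$, so the pricing rule (\ref{e:h}) is satisfied; the case $y<0$ is symmetric. For condition (ii), substituting the constructed $h^*$ into the RHS of (\ref{e:foc}) reproduces, by definition of $\phi_F$, exactly the RHS of (\ref{e:F}); hence the first-order condition reads $V=F(X^*)$, which holds by construction. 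That this critical point is a true maximum follows from concavity of the insider's expected profit in the order size when $h^*$ is non-decreasing (immediate for $N=1$; for $N>1$ one checks it via a direct second-order computation at the symmetric equilibrium combined with the strict monotonicity of $F$).

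The principal technical obstacle is the monotonicity step. The event $\{V\geq F(y-Z)\}$ couples the two independent sources of randomness through the threshold $F(y-Z)$, so monotonicity of $\phi_F$ in $y$ does not follow by a trivial conditioning argument; the strict positivity of $q(\sigma,\cdot)$ and a careful MLR/coupling computation are needed. Once this is in hand, the remainder of the verification essentially reverses, step-by-step, the derivation of (\ref{e:F}) performed in the text, with the additional small observation that $h^*$ is non-constant, which follows because $F$ is strictly increasing and $V$ is non-degenerate.
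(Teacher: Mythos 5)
Your ``only if'' direction reproduces the paper's own derivation and is fine. The ``if'' direction, however, has a genuine gap at its central step: your proof that any solution $F$ of (\ref{e:F}) is strictly increasing is circular. The MLR/coupling argument for monotonicity of $y\mapsto\phi_F^{\pm}(y)$ --- ``conditioning on $Z=z$ reveals monotonicity of the truncated mean in $y$,'' plus stochastic monotonicity of the conditional law of $Z$ --- only works if the threshold $F(y-z)$ is itself non-decreasing in $y$, i.e.\ if $F$ is already known to be non-decreasing. This is exactly the content of Lemma \ref{l:key}(4), which \emph{assumes} $g$ non-decreasing; correspondingly, the paper derives strict monotonicity only for continuous non-decreasing solutions satisfying Assumption \ref{a:Finteg} (Lemma \ref{l:Flim}), never for an arbitrary solution of (\ref{e:F}). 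As written, your first step therefore proves nothing; the repair is to work within the class of non-decreasing continuous solutions and invoke the Lemma \ref{l:key}(4)/Lemma \ref{l:Flim} mechanism.

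Two further points are skipped. First, setting $X^*:=F^{-1}(V)$ needs more than injectivity: the range of $F$ must cover the support of $V$ almost surely, i.e.\ $F(\infty)=M$ and $F(-\infty)=m$, which is what Lemma \ref{l:Flim} establishes under Assumption \ref{a:Finteg}; without it the first-order condition $V=F(x)$ may have no finite solution and condition (ii) of Definition \ref{d:eq} cannot be verified. Second, for $N>1$ your optimality check is asserted rather than proved: a ``second-order computation at the symmetric equilibrium'' is only a local statement, whereas condition (ii) requires each insider's deviation problem, with the others' aggregate $U=\frac{N-1}{N}X^*$ fixed, to be \emph{globally} maximized at $x=X^*/N$, and the objective $x\mapsto vx-E^v\left[\frac{x}{U+x}\int_0^{U+x}h(Z+y)dy\right]$ is not obviously concave merely because $h$ is non-decreasing. (The paper is also silent on this second-order issue, so it is not a divergence from its argument, but your claim that concavity is immediate should be removed or justified.) The remainder of your verification --- reversing the derivation of (\ref{e:F}) to recover (\ref{e:h}) via (\ref{e:hplus})--(\ref{e:hminus}), reading the first-order condition (\ref{e:foc}) as $V=F(X^*)$, and noting $h^*$ is non-constant --- matches the paper's intended argument.
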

In view of the above theorem finding an equilibrium boils down to finding a solution of (\ref{e:F}). Moreover, equilibrium will be uniquely defined if there exists a unique solution of (\ref{e:F}). As usual, solutions of (\ref{e:F}) will be identified as fixed-points of a mapping. Before analysing in depth this fixed-point problem, we shal observe few properties of equilibrium. The following scaling property of $F$ is inherited from the analogous property of the Gaussian distribution. 
\begin{proposition} \label{p:scaling}
	Let $F(1;\cdot)$ be a solution of (\ref{e:F}) with $\sigma=1$. Then the function $x \mapsto F(1; \frac{x}{\sigma})$ solves (\ref{e:F}).
\end{proposition}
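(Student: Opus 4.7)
The plan is to verify directly that $G(x):=F(1;x/\sigma)$ satisfies (\ref{e:F}) for arbitrary $\sigma>0$, by exploiting the elementary scaling identity
\[
q(\sigma,u)=\frac{1}{\sigma}\,q\!\left(1,\tfrac{u}{\sigma}\right),
\]
which is the only analytic property of the Gaussian density we will need. The computation has two clean steps: first, relate $\phi_G$ (computed with variance $\sigma^2$) to $\phi_{F(1;\cdot)}$ (computed with variance $1$); second, push the same change of variables through the full right-hand side of (\ref{e:F}).

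For the first step, consider $\phi_G^+(x)$ for $x\geq 0$. Performing the substitution $z=\sigma z'$ in both the numerator and the denominator of
\[
\phi_G^+(x)=\frac{\int \Phi^+(G(z))\,q(\sigma,x-z)\,dz}{\int \Pi^+(G(z))\,q(\sigma,x-z)\,dz},
\]
the two Jacobian factors $\sigma$ cancel against the two prefactors $1/\sigma$ coming from the scaling identity, and $G(\sigma z')=F(1;z')$ by definition. One is left with exactly $\phi_{F(1;\cdot)}^+(x/\sigma)$ computed at variance $1$. The same cancellation works for $\phi_G^-$, and since the sign $\chf_{x\geq 0}$ vs $\chf_{x<0}$ in (\ref{d:phig}) is invariant under $x\mapsto x/\sigma$, we conclude $\phi_G(x)=\phi_{F(1;\cdot)}(x/\sigma)$, where the latter is with $\sigma=1$.

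For the second step I would plug this relation back into the right-hand side of (\ref{e:F}) evaluated at $G$ with parameter $\sigma$. The first summand $\frac{1}{N}\int q(\sigma,x-z)\phi_G(z)\,dz$ becomes, after $z=\sigma z'$, exactly $\frac{1}{N}\int q(1,x/\sigma-z')\phi_{F(1;\cdot)}(z')\,dz'$ by the same Jacobian-cancellation. For the second summand I would substitute $y=\sigma y'$ as well: the outer factor $\frac{1}{x}\int_0^x dy$ turns into $\frac{\sigma}{x}\int_0^{x/\sigma}dy'=\frac{1}{x/\sigma}\int_0^{x/\sigma}dy'$, while the inner double integral collapses to $\int_0^{x/\sigma}dy'\int q(1,y'-z')\phi_{F(1;\cdot)}(z')\,dz'$. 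Summing the two contributions gives precisely the right-hand side of (\ref{e:F}) with $\sigma=1$ evaluated at the point $x/\sigma$. By hypothesis $F(1;\cdot)$ solves that equation, so the total equals $F(1;x/\sigma)=G(x)$, completing the verification.

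There is no real obstacle here: the proof is essentially just careful bookkeeping of the two changes of variables and of the Jacobian factors, together with checking that the piecewise structure in the definition of $\phi$ survives the rescaling (which it does since $\sigma>0$ preserves the sign of $x$). No integrability issues arise beyond those already present in (\ref{e:F}).
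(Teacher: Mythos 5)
Your proof is correct and follows essentially the same route as the paper: both arguments rest on the Gaussian scaling identity $q(\sigma,u)=\sigma^{-1}q(1,u/\sigma)$ and a change of variables pushed through the kernel integrals, including the identification $\phi_G(x)=\phi_{F(1;\cdot)}(x/\sigma)$ relating the $\sigma$-kernel and unit-kernel versions of $\phi$. Your write-up is simply the paper's computation run in the opposite direction (verifying the $\sigma$-equation for $G$ rather than rescaling the $\sigma=1$ equation), and is stated a bit more carefully.
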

\begin{proof}
	\bean
	F\Big(1;\frac{x}{\sigma}\Big)&=&\frac{1}{N}\int_{-\infty}^{\infty}q\Big(1,\frac{x}{\sigma}-z\Big)\phi_F(z)dz +\sigma\frac{N-1}{N x}\int_0^{\frac{x}{\sigma}} dy \int_{-\infty}^{\infty}q(1,y-z)\phi_F(z)dz\\
	&=&\frac{1}{N}\int_{-\infty}^{\infty}q\Big(\sigma,x-z\Big)\phi_F\Big(\frac{z}{\sigma}\Big)dz +\frac{N-1}{N x}\int_0^{x} dy \int_{-\infty}^{\infty}q\Big(1,\frac{y}{\sigma}-z\Big)\phi_F(z)dz\\
	&=&\frac{1}{N}\int_{-\infty}^{\infty}q\Big(\sigma,x-z\Big)\phi_F\Big(\frac{z}{\sigma}\Big)dz +\frac{N-1}{N x}\int_0^{x} dy \int_{-\infty}^{\infty}q(\sigma,y-z)\phi_F\Big(\frac{z}{y}\Big)dz.
	\eean
	Similar manipulations yield
	\[
	\phi_F^{\pm}\Big(\frac{z}{y}\Big)=\frac{\int_{-\infty}^{\infty}\Phi^{\pm}\left(F\Big(1;\frac{u}{y}\Big)\right)q\Big(\sigma,u-z\Big)du}{\int_{-\infty}^{\infty}\Pi^{\pm}\left(F\Big(1;\frac{u}{y}\Big)\right)q(\sigma,u-z)du} ,\]
	which establishes the claim.
	\end{proof}
A straightforward corollary to the above is the following.
\begin{corollary} 
	Consider the solutions of (\ref{e:F}) for any $\sigma$.
	\begin{enumerate}
		\item If (\ref{e:F}) has a unique solution for some $\sigma$, it has a unique solution for all $\sigma$s.
		\item If (\ref{e:F}) has a unique solution for some $\sigma$, $F(0)$ does not depend on $\sigma$.
		\item Let $h(\sigma; \cdot)$ be the function defined via (\ref{e:hplus}) and (\ref{e:hminus}), where $F$ is the unique solution of (\ref{e:F}) for the given $\sigma$. Then, $h(\sigma; x)=h(1;\frac{x}{\sigma})$ for all $x \neq 0$. 
		\item Suppose (\ref{e:F}) has a unique solution for some $\sigma$ and let $X^*(\sigma)$ be the optimal order size for the given $\sigma$. Then, $X^*(\sigma)=\sigma X^*(1)$.
	\end{enumerate}
\end{corollary}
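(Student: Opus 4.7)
The overall strategy is to upgrade Proposition \ref{p:scaling} into a bijection between the solution sets of (\ref{e:F}) at different values of $\sigma$, from which all four assertions follow almost mechanically. Proposition \ref{p:scaling} produces the forward map $F(1;\cdot)\mapsto F(1;\cdot/\sigma)$ from solutions at parameter $1$ to solutions at parameter $\sigma$. Repeating the same change of variables in the opposite direction shows that if $G$ solves (\ref{e:F}) at parameter $\sigma$, then $x\mapsto G(\sigma x)$ solves it at parameter $1$. These two maps are mutually inverse, so the solution sets at distinct parameter values are in one-to-one correspondence.

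Part (1) is then immediate, since a bijection preserves cardinality: a singleton solution set at one value of $\sigma$ forces a singleton at every value. For part (2), uniqueness forces the solution at parameter $\sigma$ to coincide with the image of the solution at $1$ under the scaling of Proposition \ref{p:scaling}, so $F(\sigma;x)=F(1;x/\sigma)$; evaluating at $x=0$ yields $F(\sigma;0)=F(1;0)$, independent of $\sigma$.

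For part (3) I would substitute $F(\sigma;\cdot)=F(1;\cdot/\sigma)$ into (\ref{e:hplus}) and (\ref{e:hminus}) and change variables $u=z/\sigma$ inside both the numerator and denominator. Using the elementary identity $q(\sigma,\sigma u)=\sigma^{-1}q(1,u)$, the Jacobian $\sigma\,du$ cancels the prefactor $\sigma^{-1}$, and the argument of $\Phi^{\pm}$ (resp.\ $\Pi^{\pm}$) becomes $F(1;y/\sigma-u)$; the resulting ratio is exactly $h(1;y/\sigma)$. Part (4) follows by inverting the first-order condition $V=F(\sigma;X^{*}(\sigma))=F(1;X^{*}(\sigma)/\sigma)$, since by part (1) $F(1;\cdot)$ is the unique and strictly increasing solution at parameter $1$, hence $X^{*}(\sigma)/\sigma = F(1;\cdot)^{-1}(V)=X^{*}(1)$.

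No substantive obstacle is expected once the bijection has been established; the only mildly delicate point is verifying that the reverse scaling $G\mapsto G(\sigma\cdot)$ really does land in the solution set at parameter $1$, which amounts to rereading the computation in the proof of Proposition \ref{p:scaling} with the two parameters interchanged. The remainder is pure Gaussian-scaling bookkeeping.
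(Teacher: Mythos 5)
Your proof is correct and follows the route the paper intends: the corollary is presented as a direct consequence of Proposition \ref{p:scaling}, and your upgrade of that scaling map to a bijection between solution sets (together with the change of variables $z=\sigma u$, $q(\sigma,\sigma u)=\sigma^{-1}q(1,u)$ in (\ref{e:hplus})--(\ref{e:hminus}) and inversion of the first-order condition) supplies exactly the bookkeeping the paper leaves implicit.
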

In particular, $h$ inherits the same scaling property of $F$. A simple but striking consequence of this property concerns the equilibrium {\em bid-ask spread}.
\begin{definition}
	Let $h$ be a function with right and left limits defining an order book. The bid-ask spread of the associated order book is given by $h(0+)-h(0-)$.
\end{definition} 
\begin{corollary}
	Suppose that uniqueness holds for the solutions of (\ref{e:F}). Then, $h(0+)-h(0-)$ does not depend on $\sigma$. Moreover, for fixed $x$, $h(x)$ is decreasing in $\sigma$ for $x>0$ and increasing in $\sigma$ for $x<0$.
\end{corollary}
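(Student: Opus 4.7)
The plan is to derive both assertions as an immediate consequence of the scaling identity
\[
h(\sigma;x)=h\!\left(1;\tfrac{x}{\sigma}\right),\qquad x\neq 0,
\]
established in the preceding corollary under the standing uniqueness hypothesis, combined with the fact that any equilibrium profile is non-decreasing by Definition \ref{d:eq}. In other words, the whole $\sigma$-dependence of $h$ is captured by a rescaling of its argument, so both the $\sigma$-independence of the spread and the monotonicity in $\sigma$ at a fixed $x$ reduce to elementary properties of the single reference profile $h(1;\cdot)$ on the half-lines $(0,\infty)$ and $(-\infty,0)$.

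For the spread, I would let $x\downarrow 0$ and $x\uparrow 0$ inside the scaling identity. Since $\sigma>0$, $x/\sigma$ tends to $0$ with the same sign as $x$, and because $h(1;\cdot)$ is monotone its one-sided limits at $0$ exist. Hence $h(\sigma;0+)=h(1;0+)$ and $h(\sigma;0-)=h(1;0-)$, and subtracting yields $h(\sigma;0+)-h(\sigma;0-)=h(1;0+)-h(1;0-)$, which is manifestly independent of $\sigma$.

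For the monotonicity in $\sigma$ at fixed $x\neq 0$, I would simply compose two monotone maps. The map $\sigma\mapsto x/\sigma$ has derivative $-x/\sigma^{2}$, hence is strictly decreasing on $(0,\infty)$ when $x>0$ and strictly increasing when $x<0$. Since $h(1;\cdot)$ is non-decreasing, the composition $\sigma\mapsto h(\sigma;x)=h(1;x/\sigma)$ is non-increasing in $\sigma$ for $x>0$ and non-decreasing in $\sigma$ for $x<0$, which is the stated claim (understood in the non-strict sense inherited from $h(1;\cdot)$; strictness will automatically follow at any point where $h(1;\cdot)$ is strictly increasing at $x/\sigma$).

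I do not see a substantive obstacle: the whole argument is a two-line application of the scaling corollary, and the only technical item to record is that the one-sided limits $h(1;0\pm)$ exist, which is free from the monotonicity of $h(1;\cdot)$ together with the boundedness of $V$ used elsewhere to keep these limits finite.
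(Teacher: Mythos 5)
Your argument is correct and is exactly the route the paper intends: the corollary is stated as an immediate consequence of the scaling identity $h(\sigma;x)=h\left(1;\frac{x}{\sigma}\right)$ together with the monotonicity of $h(1;\cdot)$, which is precisely what you use for both the spread and the monotonicity in $\sigma$. Your remark that the monotonicity in $\sigma$ is only weak in general (e.g.\ the Bernoulli case, where $h$ is flat on each half-line) is a fair reading of the statement.
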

Thus, the liquidity suppliers charge a bid-ask spread that does not vanish even if the amount of `noise' trading is excessively large. Moreover, the dependency of the limit price on $\sigma$ is not monotone. In particular, $h(x)$ approaches to the supremum (resp. infimum) of the set of possible values of $V$ for $x>0$ (resp. $x<0$) when $\sigma \rar 0$. We shall see  these phenomena occurring explicitly in Examples \ref{ex:binomial} and \ref{ex:trinomial}. Furthermore, if we consider the limiting behaviour in the other direction as $\sigma \rar \infty$, we observe that the order book gets flatten and converges to the one that yields transaction costs that are proportional to the order size.

A similar scaling property will hold when the signal distribution possesses a form of {\em self-similarity,} which can be proved by similar methods.
\begin{corollary}
	Suppose that uniqueness holds for the solutions of (\ref{e:F}) and  that $V = V(t)$ for some $t>0$ such that the random variables $(V(t))_{t>0}$ are self-similar in the sense that $V(t)\eid t^{\alpha}V(1)$ for some $\alpha>0$ for all $t>0$. Let $F_t$ be the solution of (\ref{e:F}) for $V=V(t)$. Then, $F_t=t^{\alpha}F_1$.  Moreover, the bid-ask spread is increasing in $t$.
\end{corollary}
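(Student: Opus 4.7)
The plan is to mimic the proof of Proposition \ref{p:scaling}, but with the dilation acting on the distribution of $V$ rather than on the variance of $Z$. First I would translate the self-similarity assumption $V(t)\eid t^{\alpha}V(1)$ into scaling identities for the objects that enter the fixed-point equation. Writing $\Phi^{\pm}_t$, $\Pi^{\pm}_t$ for the functionals defined in terms of $V(t)$, a direct change of variables (using $t^{\alpha}>0$ so that the events $\{V(t)> y\}$ and $\{V(1)> y/t^{\alpha}\}$ coincide) yields
\[
\Phi^{\pm}_t(y)=t^{\alpha}\,\Phi^{\pm}_1\!\left(\frac{y}{t^{\alpha}}\right),\qquad \Pi^{\pm}_t(y)=\Pi^{\pm}_1\!\left(\frac{y}{t^{\alpha}}\right).
\]

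Next I would guess the ansatz $F_t=t^{\alpha}F_1$ and verify it. With this guess, $\Phi^{\pm}_t(F_t(z))=t^{\alpha}\Phi^{\pm}_1(F_1(z))$ and $\Pi^{\pm}_t(F_t(z))=\Pi^{\pm}_1(F_1(z))$, so
\[
\phi^{\pm}_{F_t}(x)=\frac{\int_{-\infty}^{\infty}\Phi^{\pm}_t(F_t(z))\,q(\sigma,x-z)\,dz}{\int_{-\infty}^{\infty}\Pi^{\pm}_t(F_t(z))\,q(\sigma,x-z)\,dz}=t^{\alpha}\,\phi^{\pm}_{F_1}(x),
\]
and therefore $\phi_{F_t}=t^{\alpha}\phi_{F_1}$. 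Substituting into (\ref{e:F}) (written for $V(t)$), every occurrence of $\phi_{F_t}$ pulls out a factor $t^{\alpha}$, giving exactly $t^{\alpha}$ times the equation satisfied by $F_1$. Hence $t^{\alpha}F_1$ solves (\ref{e:F}) for $V(t)$; by the assumed uniqueness, $F_t=t^{\alpha}F_1$.

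For the bid-ask claim I would apply the same scaling to the representations (\ref{e:hplus})--(\ref{e:hminus}): replacing $F$ and $\Phi^{\pm},\Pi^{\pm}$ by their $V(t)$-counterparts and using the identities above, the denominator is unchanged while the numerator picks up a factor $t^{\alpha}$, so that $h_t(x)=t^{\alpha}h_1(x)$ for all $x\neq 0$. Letting $x\downarrow 0$ and $x\uparrow 0$ gives $h_t(0\pm)=t^{\alpha}h_1(0\pm)$, and consequently
\[
h_t(0+)-h_t(0-)=t^{\alpha}\bigl(h_1(0+)-h_1(0-)\bigr).
\]
Since $\alpha>0$ and the base spread $h_1(0+)-h_1(0-)$ is nonnegative (as $h_1$ is non-decreasing), the right-hand side is non-decreasing in $t$, which is the asserted monotonicity.

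I do not expect any real obstacle: the only points requiring a little care are (i) ensuring that the scaling of $\Phi^{\pm}$ and $\Pi^{\pm}$ is valid at the level of the \emph{right-continuous} versions used in Section~3 (which follows because affine rescaling preserves right-continuity and the almost-everywhere identification $\Phi^{+}(x)=\Phi^{+}(x-)$), and (ii) justifying the interchange of integration implicit in handling the $\frac{N-1}{Nx}\int_0^x\!\!dy$ term in (\ref{e:F}) under the rescaling, which proceeds exactly as in the proof of Proposition \ref{p:scaling}.
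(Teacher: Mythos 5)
Your proof is correct and is essentially the argument the paper intends (the corollary is stated as provable "by similar methods" to Proposition \ref{p:scaling}): translate self-similarity into the scaling of $\Phi^{\pm},\Pi^{\pm}$, verify the ansatz $t^{\alpha}F_1$ in (\ref{e:F}), invoke uniqueness, and carry the same scaling through (\ref{e:hplus})--(\ref{e:hminus}) to get $h_t=t^{\alpha}h_1$. The only nuance is that your conclusion gives the spread as non-decreasing in $t$; to obtain the asserted (strict) increase you should add that $h_1(0+)-h_1(0-)>0$, which follows from Lemma \ref{l:key}(3) because $h_1(0\pm)=\phi^{\pm}_{F_1}(0)$.
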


A typical example of a self-similar random variable is mean-zero Normal random variable. In this case $V(t)\eid t V(1)$, where $t$ is the parameter corresponding to standard deviation. Similarly, $V(t)\eid t V(1)$, when $V(t)$ corresponds to an exponential random variable with mean (hence, standard deviation) $t$.  The above corollary therefore shows that the bid ask spread gets bigger as the variance of the information signal gets higher, indicating that the liquidity suppliers charge a bigger bid-ask spread when the informational asymmetry gets bigger.

Another consequence of the uniqueness of solutions of (3.11) is that the aggregate expected profit of the insiders vanishes as $N \rar \infty$.
\begin{proposition} \label{p:limitprofit}
Suppose that  $-\infty<m<M<\infty$, there exists a unique solution $F_N$ of (\ref{e:F}) for each $N\geq 1$, and uniqueness holds for the solutions of
\be \label{e:Flimit}
F(x)=\frac{1}{x}\int_0^x dy \int_{-\infty}^{\infty}q(\sigma,y-z)\phi_F(z)dz.
\ee
Assume further that $\Pi^+$ is continuous. Then $F_{\infty}=\lim_{N\rar \infty} F_N$ exists and solves (\ref{e:Flimit}). Moreover, $\lim_{N\rar \infty}\pi^*(v)=0$, where $\pi^*$ is the aggregate expected profit as defined in (\ref{e:wealth}).
\end{proposition}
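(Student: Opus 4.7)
The plan is to establish compactness of $\{F_N\}$, pass to the limit in (\ref{e:F}), invoke the uniqueness of (\ref{e:Flimit}) to promote subsequential convergence to full convergence, and then exploit the closed form (\ref{e:wealth}) to deduce the vanishing of the aggregate profit. The main obstacle is justifying that $\phi_{F_N}\to\phi_{F_\infty}$ pointwise; this is precisely where the continuity of $\Pi^+$ enters.

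Since any equilibrium limit price $h$ is a conditional expectation of $V\in[m,M]$, it takes values in $[m,M]$, and (\ref{e:foc}) together with the independence of $V$ and $Z$ gives
\[
F_N(x)=\frac{1}{N}\int h(z+x)q(\sigma,z)\,dz+\frac{N-1}{N}\int_0^1\!\!\int h(z+xt)q(\sigma,z)\,dz\,dt,
\]
so $F_N$ is uniformly bounded in $[m,M]$. Differentiating under the integral sign (using only the boundedness of $h$ and the integrability of $\partial_x q(\sigma,\cdot)$) yields a Lipschitz constant for $F_N$ on any compact interval that is independent of $N$. Arzel\`a--Ascoli then provides a subsequence $F_{N_k}$ converging locally uniformly to some continuous, nondecreasing $F_\infty:\bbR\to[m,M]$.

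To pass to the limit in (\ref{e:F}), I would observe that continuity of $\Pi^+$ ensures $V$ has no atoms, so $\Phi^{\pm}$ and $\Pi^{\pm}$ are continuous on $\bbR$. Combined with the pointwise convergence $F_{N_k}\to F_\infty$, this gives $\Phi^{\pm}(F_{N_k}(z))\to \Phi^{\pm}(F_\infty(z))$ and analogously for $\Pi^{\pm}$, each uniformly bounded. Dominated convergence (with $q(\sigma,\cdot)$ an $L^1$ kernel) implies $\phi_{F_{N_k}}\to\phi_{F_\infty}$ pointwise, and a further application passes the convergence through both integrals appearing in (\ref{e:F}); since $1/N_k\to 0$ and $(N_k-1)/N_k\to 1$, the first term drops out and $F_\infty$ solves (\ref{e:Flimit}). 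The uniqueness assumption then forces every subsequential limit to coincide with the same $F_\infty$, so the entire sequence $F_N$ converges locally uniformly to $F_\infty$.

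Finally, fix $v\in(m,M)$ and set $x_N:=F_N^{-1}(v)$. Local uniform convergence together with continuity of $F_\infty$ shows that $(x_N)$ is bounded and that every subsequential limit $x_\infty$ satisfies $F_\infty(x_\infty)=v$. Substituting $u=y^N$ in (\ref{e:wealth}) produces
\[
\pi^*_N(v)=x_N\int_0^1\bigl(v-F_N(u^{1/N}x_N)\bigr)\,du.
\]
Along any subsequence with $x_N\to x_\infty$ we have $u^{1/N}x_N\to x_\infty$ for each $u\in(0,1]$, and local uniform convergence of $F_N$ gives $F_N(u^{1/N}x_N)\to F_\infty(x_\infty)=v$, so the integrand tends to zero pointwise. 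Since it is bounded by $M-m$, dominated convergence yields $\pi^*_{N_k}(v)\to 0$ along every such subsequence, and the boundedness of $(x_N)$ then forces $\pi^*_N(v)\to 0$.
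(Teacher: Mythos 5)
Your overall strategy is the same as the paper's — pass to the limit in the fixed-point equation (\ref{e:F}), use the assumed uniqueness for (\ref{e:Flimit}) to upgrade subsequential to full convergence, and then kill the profit formula (\ref{e:wealth}) by concentrating the measure at $y=1$ — but your execution of the first step is actually more careful than the paper's: the paper simply asserts via dominated convergence and continuity of $\Pi^+$ that both $\liminf F_N$ and $\limsup F_N$ solve (\ref{e:Flimit}), whereas your uniform bound in $[m,M]$, the $N$-independent Lipschitz constant, and Arzel\`a--Ascoli give a clean compactness argument. Your substitution $u=y^N$ is an equivalent, slightly more hands-on version of the paper's appeal to the weak convergence of $Ny^{N-1}dy$ to the point mass at $1$.

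There is, however, one step that does not follow from what you state. You claim that boundedness of $x_N=F_N^{-1}(v)$, and the identity $F_\infty(x_\infty)=v$ for subsequential limits, follow from ``local uniform convergence together with continuity of $F_\infty$.'' Continuity alone cannot rule out, say, $F_\infty\leq v$ everywhere, in which case $x_N=F_N^{-1}(v)$ could drift to $+\infty$ and both the prefactor $x_N$ and your pointwise-limit argument for the integrand break down. What is needed is that $v$ lies in the interior of the range of $F_\infty$, i.e.\ $\lim_{x\to\infty}F_\infty(x)=M$ and $\lim_{x\to-\infty}F_\infty(x)=m$, together with strict monotonicity; this is precisely Lemma \ref{l:Flim} applied to $F_\infty$ as a solution of (\ref{e:Flimit}) (Assumption \ref{a:Finteg} is automatic here since $m>-\infty$), and it is the lemma the paper invokes at exactly this point to conclude $F_N^{-1}(v)\to F_\infty^{-1}(v)\in\bbR$. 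Once you cite it (or reproduce its argument $F_\infty(\pm\infty)=\Psi^{\pm}(F_\infty(\pm\infty)\mp)$, forcing the limits to be $M$ and $m$), your boundedness claim, the identification $F_\infty(x_\infty)=v$, and the final dominated-convergence step all go through as written.
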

\begin{definition} $V$ is said to be symmetric if $V$ and $-V$ have the same distribution. That is, $\Pi^+(y-)=\Pi^-(-y)$ for all $y$.
\end{definition}
\begin{proposition} \label{p:symmetric}
	Suppose that $V$ is symmetric and there exists a unique solution $F$ to (\ref{e:F}). Then $F$ is symmetric, i.e. $F(x)=-F(-x)$ for all $x$. Moreover, any symmetric solution of (\ref{e:F}) is  also a solution of 
\be	\label{e:Fsym}
F(x)=\frac{1}{N}\int_{0}^{\infty}q_0(\sigma,x-z)\phi^+_F(z)dz +\frac{N-1}{N x}\int_0^x dy \int_{0}^{\infty}q_0(\sigma,y-z)\phi^+_F(z)dz,
\ee
where $q_0(\sigma,x,z):=q(\sigma,x-z)-q(\sigma,x+z)$.
	
\end{proposition}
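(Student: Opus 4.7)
The strategy for the first assertion is to exploit the uniqueness hypothesis by constructing a second solution and identifying it with $F$. The starting observation is that $V \eid -V$ translates into the identities
\[
\Phi^+(y) = -\Phi^-(-y), \qquad \Pi^+(y) = \Pi^-(-y),
\]
valid for all $y$ outside the at most countable set of atoms of $V$, and hence for almost every $y$. Combined with the evenness $q(\sigma,t) = q(\sigma,-t)$ of the Gaussian density, a substitution $z \to -z$ inside the numerators and denominators defining $\phi_F^{\pm}$ yields, for $\tilde F(x) := -F(-x)$,
\[
\phi_{\tilde F}^+(x) = -\phi_F^-(-x), \qquad \phi_{\tilde F}^-(x) = -\phi_F^+(-x),
\]
so that $\phi_{\tilde F}(x) = -\phi_F(-x)$ for almost every $x$.

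Plugging $\tilde F$ into the right-hand side of (\ref{e:F}) and using this relation, followed by the substitution $z \to -z$ in the outer integral and $y \to -y$ in the iterated integral (together with the evenness of $q$), one verifies by direct computation that the right-hand side equals $-F(-x) = \tilde F(x)$. Thus $\tilde F$ is also a solution of (\ref{e:F}), and the uniqueness hypothesis forces $F = \tilde F$, which is precisely $F(x) = -F(-x)$.

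For the second assertion, specialise the identity derived above to the case where $F$ itself is symmetric: this gives $\phi_F^-(-z) = -\phi_F^+(z)$ a.e. Splitting
\[
\int_{-\infty}^{\infty} q(\sigma,x-z)\phi_F(z)\,dz = \int_0^{\infty} q(\sigma,x-z)\phi_F^+(z)\,dz + \int_{-\infty}^0 q(\sigma,x-z)\phi_F^-(z)\,dz
\]
and substituting $z \to -z$ in the second integral, using $q(\sigma,x+z) = q(\sigma,-x-z)$ together with $\phi_F^-(-z) = -\phi_F^+(z)$, converts it into $-\int_0^{\infty} q(\sigma,x+z)\phi_F^+(z)\,dz$, so the full integral equals
\[
\int_0^{\infty} \bigl[q(\sigma,x-z) - q(\sigma,x+z)\bigr]\phi_F^+(z)\,dz = \int_0^{\infty} q_0(\sigma,x,z)\phi_F^+(z)\,dz.
\]
Applying the same manipulation with $y$ in place of $x$ to the inner integral in the iterated integral of (\ref{e:F}) then delivers (\ref{e:Fsym}).

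The only delicate point in this plan is the justification of $\Phi^+(y) = -\Phi^-(-y)$ and $\Pi^+(y) = \Pi^-(-y)$: these can genuinely fail at atoms of $V$ by correction terms of the form $y P(V = -y)$ and $P(V = -y)$ respectively. However, the set of atoms is at most countable and therefore Lebesgue-null, so the identities hold almost everywhere and may be used freely inside the Lebesgue integrals appearing in (\ref{e:F}). Once this is settled, the remainder is routine symmetry bookkeeping and requires no new ideas.
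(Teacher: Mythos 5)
Your argument is correct and follows essentially the same route as the paper: use the symmetry relations $\Phi^+(y)=-\Phi^-(-y)$, $\Pi^+(y)=\Pi^-(-y)$ (off the countable atom set) together with the evenness of $q$ to show that $x\mapsto -F(-x)$ is again a solution of (\ref{e:F}), invoke uniqueness, and then fold the $z$-integral over $(-\infty,0)$ onto $(0,\infty)$ to obtain (\ref{e:Fsym}). The only point worth making explicit is that using the a.e.\ identities inside the integrals requires the set $\{z: F(z) \mbox{ is an atom of } V\}$ to be Lebesgue-null, which holds because solutions of (\ref{e:F}) are strictly increasing; this is the same implicit step taken in the paper's proof.
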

\begin{proof}
Observe that $-\Phi^+(y-)=E[-V\chf_{[V\geq  y]}]	=E[-V\chf_{[-V\leq  -y]}]=E[V\chf_{[V\leq  -y]}]=\Phi^-(-y)$. Thus, utilising the fact that $\Phi^+$ and $\Pi^+$ differ from their left limits at most at countably many points and $q$ is symmetric around zero, we obtain
\bean
\int_{-\infty}^{\infty}q(\sigma,-x-z)\phi_F(z)dz&=&\int_{-\infty}^0dz\,q(\sigma,x-z)\frac{\int_{-\infty}^{\infty}\Phi^+(F(-u))q(\sigma,z-u)du}{\int_{-\infty}^{\infty}\Pi^+(F(-u))q(\sigma,z-u)du}\\
&+&\int_0^{\infty}dz\,q(\sigma,x-z)\frac{\int_{-\infty}^{\infty}\Phi^-(F(-u))q(\sigma,z-u)du}{\int_{-\infty}^{\infty}\Pi^-(F(-u))q(\sigma,z-u)du}\\
&=&-\int_{-\infty}^0dz\,q(\sigma,x-z)\frac{\int_{-\infty}^{\infty}\Phi^-(-F(-u))q(\sigma,z-u)du}{\int_{-\infty}^{\infty}\Pi^-(-F(-u))q(\sigma,z-u)du}\nn \\
&&-\int_0^{\infty}dz\,q(\sigma,x-z)\frac{\int_{-\infty}^{\infty}\Phi^+(-F(-u))q(\sigma,z-u)du}{\int_{-\infty}^{\infty}\Pi^+(-F(-u))q(\sigma,z-u)du}\\
&=&-\int_{-\infty}^{\infty}q(\sigma,-x-z)\phi_G(z)d,
\eean
where  $G(x):=-F(-x)$. 
Moreover,
\bean
-\frac{1}{ x}\int_0^{-x} dy \int_{-\infty}^{\infty}q(\sigma,y-z)\phi_F(z)dz&=&\frac{1}{ x}\int_0^{x} dy \int_{-\infty}^{\infty}q(\sigma,-y-z)\phi_F(z)dz\\
&=&-\frac{1}{ x}\int_0^{x} dy \int_{-\infty}^{\infty}q(\sigma,y-z)\phi_{G}(z)dz.
\eean
 Thus, $-G$ is also a solution of (\ref{e:F}), which establishes the first assertion. The second assertion follows from a change of variable in (\ref{e:F}) as above and using the assumed symmetry of $F$.
\end{proof}
\begin{example} \label{ex:binomial}
	Suppose that $P(V=1)=P(V=-1)=\half$. Then, the unique symmetric solution of (\ref{e:F}) is defined by
	\[
	F(x)=\frac{1}{N}\int_0^{\infty} q_0(\sigma,x,z)dz+\frac{N-1}{Nx}\int_0^xdy \int_0^{\infty}dz q_0(\sigma,y,z), \quad x\geq 0.
	\]
	
	In this case it is easily seen that in equilibrium $X^*=\infty$ (resp. $x^*=-\infty$) if $V=1$ (resp. $V=-1$). Moreover, $h^*(y)=\chf_{[y>0]}-\chf_{[y<0]}$. 
	
	Although the insiders' optimal market order is to buy or sell an infinite amount, their profit remains finite. Indeed, when $V=1$, the aggregate expected profit is given by
	\bean
 \int_0^\infty E^1(1-h(Z+y))dy&=&2 E^1\left(\int_0^{\infty}\chf_{[Z<-y]}dy\right)=2\int_0^{\infty}P(Z>y)dy\\
	&=&2E[Z^+]=E[|Z|]=\sigma \sqrt{\frac{2}{\pi}},
	\eean
	where $Z^+=\max\{Z,0\}$ and the second line is due to the fact that $P(Z>y)=P(Z^+>y)$. Note that the total profit is independent of $N$.
\end{example}
\begin{example}
	\label{ex:trinomial} Consider the case $P(V=-1)=P(V=0)=P(V=1)=\frac{1}{3}$. Then, similar considerations as above should yield $F(\infty)=1$ and $F(-\infty)=-1$. Moreover, symmetry considerations must lead to $F(0)=0$, which suggests that the insider does not trade when $V=0$. Indeed, the unique solution of (\ref{e:F}) is given by
	\bean
	F(x)&=&\frac{1}{N}\int_0^{\infty}q_0(\sigma,x,z)\frac{1}{1+P(Z\geq z)}dz+ \frac{N-1}{Nx}\int_0^xdy\int_0^{\infty}dz q_0(\sigma,y,z)\frac{1}{1+P(Z\geq z)}.
	\eean
 Differently from Example \ref{ex:binomial}, the order book will not be flat since  the insider does not trade when $V=0$. One can obtain $h^*$ via (\ref{e:hminus}) and (\ref{e:hplus}). Alternatively, for $y>0$
	\bean
	h(y)&=&\frac{E[V\chf_{[X^*(V)+Z\geq y]}]}{P(X^*(V)+Z\geq y)}=\frac{P(V=1)}{P(V=1)+P(V=0,Z\geq y)}=\frac{P(V=1)}{P(V=1)+P(V=0)P(Z\geq y)}\\
	&=&\frac{1}{1+P(Z\geq y)},
\eean
	where the first equality follows from the fact that $X^*(V)$ is infinite when $V=-1$ or $1$, and the 	third follows from the independence of $V$ and $Z$.
	Similarly, for $y<0$, 
	\[
	h(y)=-\frac{1}{1+P(Z\leq y)}.
	\]
	In particular, the bid-ask spread is given by $h(0+)-h(0-)=\frac{4}{3}$, independent of the noise volume and of $N$.
	
\end{example}
\subsection{Existence of Glosten equilibrium}
We shall denote the interior of the support of the random variable $V$ by $(m,M)$, where $-\infty\leq m <M\leq \infty$, and define   on the support of $V$ 
\be \label{d:Psipm}
\Psi^{\pm}(y):=\frac{\Phi^{\pm}(y)}{\Pi^{\pm}(y)}
\ee
so that $\Psi^{+}(y)=E[V|V> y]$ and $\Psi^{-}(y)=E[V|V\leq y]$. 

 We impose the following condition on the function $F$ to ensure that the integral equation (\ref{e:F}) is well-defined and  changing the order of integration in (\ref{e:FN2}) is justified.
\begin{assumption}\label{a:Finteg} 
	\[
	\int_{-\infty}^0\phi^-_F(z)q(\sigma,z)dz>-\infty.
	\]
\end{assumption}
Observe that the above is automatically satisfied if $V$ is bounded from below in view of (\ref{e:phi-bd}). Moreover, if $F$ is a continuous function satisfying (\ref{e:F}) and $V$ is symmetric, it satisfies the above assumption in view of Proposition \ref{p:symmetric} and the finiteness of $F$.

One of the useful consequences of the above assumption is the strict monotonicity of solutions of (\ref{e:F}).

\begin{lemma}\label{l:Flim}
	Let $F$ be a continuous non-decreasing solution of (\ref{e:F}) or (\ref{e:Flimit}) satisfying Assumption \ref{a:Finteg}. Then, $\lim_{x\rar \infty}F(x)=M$ and $\lim_{x\rar -\infty}F(x)=m$. Consequently, $F$ is strictly increasing.
\end{lemma}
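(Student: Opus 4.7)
The plan is to pass to the limit $x \to \pm \infty$ in (\ref{e:F}) (or in (\ref{e:Flimit})) and use the strict dominance $\Psi^+(y) > y$ for $y < M$ to pin the limits down to $M$ and $m$; strict monotonicity of $F$ will then follow from the strict positivity of the Gaussian kernel combined with the non-constancy of $\phi_F$.

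First I would observe that $\phi_F^+(x)$ and $\phi_F^-(x)$ are conditional expectations of $V$, so $\phi_F(z) \in [m,M]$ pointwise; since $F(x)$ is a convex combination of averages of $\phi_F$, it too lies in $[m,M]$, and by monotonicity the one-sided limits $L := \lim_{x\to\infty} F(x)$ and $\ell := \lim_{x\to -\infty} F(x)$ exist in $[m,M]$. (When $m$ or $M$ is infinite, Assumption \ref{a:Finteg} and its symmetric counterpart at $+\infty$ ensure that the integrals in (\ref{e:F}) are well defined and the limits exist in the extended sense.) The next step is to interpret $\phi_F^+(x)$ as $E[V \mid V > F(x - Z)]$ where $Z$ has density $q(\sigma, \cdot)$ and is independent of $V$. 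Since $F(x-Z) \uparrow L$ pointwise as $x \to \infty$, bounded (resp.\ dominated) convergence applied separately to $E[\Phi^+(F(x-Z))]$ and $E[\Pi^+(F(x-Z))]$ yields $\phi_F^+(x) \to E[V \mid V \geq L]$. Plugging this into (\ref{e:F}) and handling the second term by a Cesaro-type averaging gives $L = E[V \mid V \geq L]$; but whenever $L < M$ one has $P(V > L) > 0$ and hence $E[V \mid V \geq L] > L$ strictly, a contradiction, so $L = M$. The symmetric argument with $\phi_F^-$ at $-\infty$ gives $\ell = m$.

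For strict monotonicity, set $G(x) := \int q(\sigma, x-z)\phi_F(z)\, dz$ and $\tilde G(x) := \frac{1}{x}\int_0^x G(y)\, dy$ (extended continuously at $0$). Since $\phi_F$ is non-decreasing (which follows from the non-decreasingness of $F$ together with the log-concavity of $q$, or equivalently from the equilibrium identification $\phi_F = h^*$) with distinct limits $m$ and $M$ at $\pm\infty$, it is non-constant, and the translation identity $G(x_2) - G(x_1) = \int q(\sigma, u)\bigl[\phi_F(x_2-u) - \phi_F(x_1-u)\bigr]\,du$ is strictly positive for $x_1 < x_2$ thanks to strict positivity of the Gaussian kernel on a set of positive Lebesgue measure where $\phi_F$ actually increases. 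A short computation $\tilde G'(x) = (G(x) - \tilde G(x))/x$ shows that $\tilde G$ also inherits strict monotonicity on $\mathbb{R}\setminus\{0\}$, whence $F = \frac{1}{N}G + \frac{N-1}{N}\tilde G$ (or $F = \tilde G$ in the case of (\ref{e:Flimit})) is strictly increasing.

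The main technical obstacle will be the rigorous passage to the limit defining $\phi_F^\pm$, which requires addressing possible atoms of $V$ at $L$ or $\ell$ -- handled via the absolute continuity of $Z$, giving $P(F(x-Z) = L) = 0$ unless $F$ is flat through $L$ -- and, in the unbounded case, uniform integrability of $\Phi^+(F(x-Z))$ against $q(\sigma,\cdot)$, which is exactly what Assumption \ref{a:Finteg} (and its symmetric counterpart at $+\infty$) delivers.
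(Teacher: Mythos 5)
Your argument is correct and follows the same overall strategy as the paper --- pass to the limit in (\ref{e:F}), identify the limit of $\phi_F^{\pm}$ with a tail conditional expectation, and use $E[V\mid V\geq L]>L$ for $L<M$ --- but the key technical step is executed differently. You write $\phi_F^+(x)$ as the ratio $E[\Phi^+(F(x-Z))]/E[\Pi^+(F(x-Z))]$ and apply dominated convergence to numerator and denominator separately; this is only legitimate when the limiting denominator $P(V\geq L)$ is strictly positive, i.e.\ exactly when $L<M$, so your argument must be organized as a contradiction (``suppose $L<M$; then $P(V\geq L)>0$, DCT applies, and $L=E[V\mid V\geq L]>L$'') rather than as an unconditional computation of the limit, which as literally stated fails in the $0/0$ regime $L=M$, $P(V\geq M)=0$. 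The paper's proof is built precisely to avoid this case split: it shows that the renormalized measure proportional to $\Pi^+(F(u))q(\sigma,x+z-u)\,du$ sends all its mass to $+\infty$, which yields $\phi_F(x+z)\to\Psi^+(F(\infty)-)$ even when $\Pi^+(F(\infty)-)=0$, and it interchanges the outer limit and integral by monotone convergence of $\phi_F(x+\cdot)$ combined with Assumption \ref{a:Finteg} (no separate integrability condition at $+\infty$ is needed, contrary to the ``symmetric counterpart'' you invoke). With the contradiction ordering your route is actually simpler, and the atom discussion becomes superfluous, since both $E[V\mid V\geq L]$ and $E[V\mid V>L]$ strictly exceed $L$ whenever $L<M$. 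Finally, your strict-monotonicity argument --- monotonicity of $\phi_F$ (part (4) of Lemma \ref{l:key}, equivalently the monotone likelihood ratio property of the Gaussian kernel), strict positivity of the kernel, and the elementary derivative identity for the Ces\`aro average --- fills in the detail behind the paper's one-line appeal to (\ref{e:FN2}) and also covers the (\ref{e:Flimit}) case explicitly.
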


\begin{theorem} \label{t:existence} Suppose $-\infty<m<M<\infty$. Then, there exists a Glosten equilibrium.
\end{theorem}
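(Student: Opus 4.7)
By Theorem~\ref{t:eq} it suffices to produce a function $F:\bbR\to\bbR$ solving (\ref{e:F}). The plan is to cast this as a fixed-point problem and apply the Schauder--Tychonoff theorem. Let $\mathcal{K}\subset C(\bbR)$ be the set of non-decreasing continuous functions $F:\bbR\to[m,M]$, viewed as a closed convex subset of the Fr\'echet space $C(\bbR)$ equipped with the topology of uniform convergence on compact sets, and let $T(F)(x)$ be the right-hand side of (\ref{e:F}), extended continuously to $x=0$.

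The first task is to show $T(\mathcal{K})\subset\mathcal{K}$. Since $V\in[m,M]$, the ratios $\Psi^{\pm}$ lie in $[m,M]$ on the support of $V$, so $\phi_F(z)\in[m,M]$ for every $z$; as the kernel in (\ref{e:FN2}) is a probability weighting in $z$ for each $x$, this yields $T(F)\in[m,M]$. For monotonicity, I would first observe that $\phi_F$ is itself non-decreasing on $\bbR$: on each of $(-\infty,0)$ and $(0,\infty)$ this is the classical fact that $E[V\mid V\geq F(y-Z)]$ and $E[V\mid V\leq F(y-Z)]$ are non-decreasing in $y$, via the $\mathrm{MTP}_2$ property of the pair $(F^{-1}(V),\,F^{-1}(V)+Z)$, while at $0$ the bid-ask jump is non-negative because $\phi_F^-(0)\le E[V]\le \phi_F^+(0)$. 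Consequently $G(x):=\int q(\sigma,x-z)\phi_F(z)dz$ is non-decreasing, and a short calculus argument (using $\bar G'(x)=(G(x)-\bar G(x))/x$ on each half-line, with $\bar G(0+)=\bar G(0-)=G(0)$) shows that the running average $\bar G(x)=x^{-1}\int_0^x G(y)dy$ is non-decreasing as well, whence $T(F)=\frac{1}{N}G+\frac{N-1}{N}\bar G$ is non-decreasing. Continuity of $T(F)$ is inherited from the Gaussian smoothing of the single jump of $\phi_F$ at the origin.

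I would then secure compactness and continuity. Differentiating $G$ and $\bar G$ and using $|\phi_F|\leq\max(|m|,|M|)$ produces a uniform Lipschitz bound $\|T(F)'\|_\infty\leq L(m,M,\sigma,N)$ for every $F\in\mathcal{K}$; Arzel\`a--Ascoli on compact intervals then gives relative compactness of $T(\mathcal{K})$ in $C(\bbR)$ for uniform convergence on compacta. For continuity of $T$, if $F_n\to F$ in $\mathcal{K}$ then $F_n(z)\to F(z)$ for every $z$. Since $\Phi^{\pm}$ and $\Pi^{\pm}$ are right-continuous with at most countably many discontinuities (the atoms of $V$), and since for a strictly increasing limit $F$ the set $\{z:F(z)\in\mathrm{atoms}(V)\}$ is Lebesgue-null, dominated convergence gives $\phi_{F_n}(z)\to\phi_F(z)$ almost everywhere; a second application of dominated convergence yields pointwise convergence $T(F_n)(x)\to T(F)(x)$, promoted to uniform convergence on compacta by the Lipschitz bound. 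Schauder--Tychonoff applied to $T$ on the compact convex set $\overline{\mathrm{co}}\,T(\mathcal{K})\subset \mathcal{K}$ then furnishes a fixed point $F^*\in\mathcal{K}$, and $(F^{*-1}(V),\phi_{F^*})$ is the desired Glosten equilibrium by Theorem~\ref{t:eq}.

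The main obstacle I foresee is precisely the continuity step when $V$ carries atoms: the limit $F$ need not be strictly increasing, and on flat pieces of $F$ landing on atoms of $V$ the convergence $\Phi^{\pm}(F_n)\to\Phi^{\pm}(F)$ may genuinely fail. A clean workaround is to first prove the theorem for continuous distributions of $V$, where $\Phi^{\pm}$ and $\Pi^{\pm}$ are continuous and the continuity of $T$ is direct, and then to recover the general bounded case by approximating $V$ by $V+\varepsilon W$ for a small-variance atomless $W$, producing equilibria $F_\varepsilon$ that are uniformly Lipschitz, and extracting a limit as $\varepsilon\downarrow 0$ via Arzel\`a--Ascoli and a further application of dominated convergence to identify the limit as a solution of (\ref{e:F}) for $V$.
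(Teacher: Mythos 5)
Your proposal is correct in outline and follows essentially the same strategy as the paper: reduce to solving (\ref{e:F}) via Theorem \ref{t:eq}, set up the right-hand side as an operator $T$, show it maps a convex set of $[m,M]$-valued functions into itself with a uniform Lipschitz bound, get compactness from Arzel\`a--Ascoli, and invoke a Schauder-type fixed point theorem. The differences are in implementation. You work in the Fr\'echet space $C(\bbR)$ with the compact-open topology and apply Schauder--Tychonoff on $\overline{\mathrm{co}}\,T(\mathcal{K})$, and you restrict to the cone of non-decreasing functions (which forces you to verify that $T$ preserves monotonicity, i.e.\ that $\phi_F$ is non-decreasing -- this is exactly part (4) of the paper's Lemma \ref{l:key}, proved there by an SDE comparison rather than your MTP$_2$ argument); the paper instead works in the weighted space $L^2(\bbR,\mu_0)$ with a Gaussian weight, takes $D$ to be merely the $[m,M]$-valued $K_0$-Lipschitz functions without imposing monotonicity, and applies the classical Schauder theorem there. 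Both routes are viable and buy roughly the same thing. The one substantive divergence is the continuity step: the paper handles $\phi_{g_n}\to\phi_g$ through the Girsanov/heat-equation representation of Lemma \ref{l:key} together with dominated convergence, using that $\Psi^{\pm}$ and $\Pi^{\pm}$ are continuous off a countable set, whereas you flag the atom problem and propose proving the atomless case first and then mollifying $V$ by $V+\varepsilon W$. Be aware that this workaround does not fully escape the difficulty you identified: when you pass $\varepsilon\downarrow 0$ you must show $\Phi^{\pm}_\varepsilon(F_\varepsilon(\cdot))\to\Phi^{\pm}(F_0(\cdot))$ a.e., and if the limit $F_0$ had a flat piece sitting at an atom of $V$ the smoothed tail functionals converge to the half-sum of left and right limits there, so the same flat-piece-at-atom obstruction reappears in the limit identification; you would need an extra argument (e.g.\ that any limit of the $F_\varepsilon$ is strictly increasing, or that the exceptional set is Lebesgue-null) to close this, so the mollification route is a repairable technicality rather than a finished step.
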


Under the hypotheses of Theorem \ref{t:existence} it is clear that $m<F(x)<M$ for $x \in \bbR$. However, in view of  the bounds given by (\ref{e:phi+bd}) and (\ref{e:phi-bd}) it is possible to obtain sharper bounds on any solution of (\ref{e:F}). 
\begin{theorem}\label{t:comparison} Suppose $-\infty<m<M<\infty$ and let $\Psi^{\pm}$ be as in (\ref{d:Psipm}). Then the following statements are valid.
	\begin{enumerate}
		\item There exists a maximal nondecreasing solution\footnote{$R$ is a maximal nondecreasing solution if $R(x)\geq r(x)$ for all $x \in \bbR$, where $r$ is any other nondecreasing solution.} to
		\bea \label{e:compareup}
		R(x)&=&\int_{0}^{\infty}dz\left\{\frac{1}{N}q(\sigma,x-z)+\frac{N-1}{N} \bar{q}(\sigma,x,z)\right\}\int_{-\infty}^{\infty}\Psi^+(R(y))q(\sigma,z-y)dy\nn \\
		&&+ E[V]\int^{0}_{-\infty}dz\left\{\frac{1}{N}q(\sigma,x-z)+\frac{N-1}{N} \bar{q}(\sigma,x,z)\right\}.
		\eea
		Moreover, this maximal solution is not constant and any solution of (\ref{e:F}) is bounded from above by this maximal solution. 
		\item There exists a minimal nondecreasing solution\footnote{$l$ is a minimal nondecreasing solution if $l(x)\leq L(x)$ for all $x \in \bbR$, where $L$ is any other nondecreasing solution.} to 
		\bea \label{e:comparedown}
		l(x)&=& E[V]\int_{0}^{\infty}dz\left\{\frac{1}{N}q(\sigma,x-z)+\frac{N-1}{N} \bar{q}(\sigma,x,z)\right\} \\
		&&+\int^{0}_{-\infty}dz\left\{\frac{1}{N}q(\sigma,x-z)+\frac{N-1}{N} \bar{q}(\sigma,x,z)\int_{-\infty}^{\infty}\Psi^-(l(y))q(\sigma,z-y)dy\right\}\nn.
		\eea
			Moreover, this minimal solution is not constant and any solution of (\ref{e:F}) is bounded from below  by this minimal solution. 
	\end{enumerate}
\end{theorem}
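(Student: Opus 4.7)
The strategy is a monotone fixed-point iteration inside the lattice $\cL := \{R : \bbR \to [m,M] \mid R \text{ nondecreasing}\}$. The key analytic input is the set of pointwise bounds, valid for every $R \in \cL$:
\[
\int_{-\infty}^{\infty} \Psi^-(R(z-y))q(\sigma,y)\,dy \;\leq\; \phi_R^-(z) \;\leq\; E[V] \;\leq\; \phi_R^+(z) \;\leq\; \int_{-\infty}^{\infty} \Psi^+(R(z-y))q(\sigma,y)\,dy.
\]
To derive the rightmost inequality I would rewrite $\phi_R^+(z) = E[V \mid V \geq R(z-Z)]$ with $Z \sim N(0,\sigma^2)$ independent of $V$, and apply the tower property to get $\phi_R^+(z) = E[\Psi^+(R(z-Z)) \mid V \geq R(z-Z)]$. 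Conditionally on $V=v$, the event $\{v \geq R(z-Z)\}$ is a right half-line in $Z$, so its indicator is nondecreasing in $Z$, whereas $\Psi^+(R(z-Z))$ is nonincreasing in $Z$; hence the conditional covariance is nonpositive, and since $\Psi^+(R(z-Z))$ does not depend on $V$, averaging over $V$ yields the claim. The bounds involving $E[V]$ are the elementary observation that conditioning on an upper- (resp.\ lower-) tail event in $V$ raises (resp.\ lowers) its mean, applied pointwise in $Z$ and integrated.

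I then define the operator $T : \cL \to \cL$ by
\[
(TR)(x) := \int_0^{\infty} k(x,z) \left\{\int_{-\infty}^{\infty} \Psi^+(R(y)) q(\sigma,z-y)\,dy\right\} dz + E[V]\int_{-\infty}^{0} k(x,z)\,dz,
\]
with $k(x,z) := N^{-1} q(\sigma,x-z) + (N-1)N^{-1} \bar q(\sigma, x, z)$. Monotonicity of $T$ in the functional argument is immediate from $\Psi^+$ being nondecreasing; that $TR$ is nondecreasing in $x$ follows from the fact that $\int k(x,z) g(z)\,dz$ is nondecreasing in $x$ for any nondecreasing bounded $g$ (the first piece is a Gaussian convolution, the second an average of such convolutions over $[0,x]$). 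Starting from $R_0 \equiv M$, the identity $\int k(x,z)\,dz = 1$ gives $TR_0 \leq M$, whence $R_n := T^n R_0$ is a nonincreasing sequence in $\cL$, and its pointwise limit $R \in \cL$ satisfies $R = TR$ by bounded convergence.

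Maximality is routine: any nondecreasing fixed point $\tilde R$ satisfies $\tilde R \leq R_0$, so $\tilde R = T^n \tilde R \leq R_n \to R$. For a solution $F$ of (\ref{e:F}), the pointwise bounds give $F \leq TF$; combined with $F \leq M$ and monotonicity of $T$, induction yields $F \leq R_n$ for all $n$, hence $F \leq R$. Non-constancy of $R$ follows from the observation that no constant can be a fixed point of $T$: if $R \equiv c$, then $c = \Psi^+(c)\alpha(x) + E[V](1-\alpha(x))$ for all $x$, where $\alpha(x) := \int_0^\infty k(x,z)\,dz$ smoothly moves from $0$ at $-\infty$ to $1$ at $+\infty$; this forces $\Psi^+(c) = E[V] = c$, which is impossible since $E[V \mid V > c] > E[V]$ for any $c < M$ with $P(V>c)<1$ (and the boundary case $c = M$ fails since $E[V] < M$). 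The minimal solution $l$ is obtained by the symmetric scheme, iterating the analogous operator from $R_0 \equiv m$ and using the lower-tail bounds on $\phi_R^-$ and on $\phi_R^+$. The main obstacle is the covariance identity underpinning the pointwise bounds; once in place the remainder is a standard Tarski-style monotone iteration, and Assumption \ref{a:Finteg} together with boundedness of $V$ keeps every quantity integrable throughout.
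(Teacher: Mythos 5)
Your argument is correct, and while it shares the paper's overall skeleton (a monotone operator $T$ on the lattice of nondecreasing $[m,M]$-valued functions, iterated to a fixed point), it differs from the paper's proof in two substantive ways. First, the key pointwise bound $\phi_F^+(z)\leq \int_{-\infty}^{\infty}\Psi^+(F(y))q(\sigma,z-y)dy$ (and its mirror image), which the paper obtains as (\ref{e:phi+bd})--(\ref{e:phi-bd}) in Lemma \ref{l:key} via an $h$-transform, Girsanov's theorem and an SDE comparison argument, you derive by a one-line Chebyshev/FKG correlation inequality: conditionally on $V=v$ the indicator of $\{v\geq R(z-Z)\}$ is nondecreasing in $Z$ while $\Psi^+(R(z-Z))$ is nonincreasing, so the covariance is nonpositive, and averaging over $V$ gives the claim. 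This is genuinely more elementary; the only thing it does not reproduce is the rest of Lemma \ref{l:key} (the monotonicity of $\phi^\pm_g$ and the probabilistic representation), which the paper reuses elsewhere, e.g.\ in Theorem \ref{t:existence}. Second, you build the maximal solution by iterating \emph{downward} from $R_0\equiv M$, getting maximality for free ($\tilde R\leq R_0$ implies $\tilde R=T^n\tilde R\leq R_n\to R$), whereas the paper iterates upward from $m$ to get existence and then shows separately that the pointwise supremum over all solutions is itself attained; your route is shorter, and it also meshes well with the right-continuity of $\Psi^+$: since $R_n(y)\downarrow R(y)$ and $\Psi^+=\Phi^+/\Pi^+$ is nondecreasing and right-continuous, $\Psi^+(R_n(y))\to\Psi^+(R(y))$, which is the (unstated) ingredient your ``bounded convergence'' step needs — you should say this explicitly, and also fix the convention $\Psi^+(M):=M$ so that $T$ is defined at $R_0\equiv M$. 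Your non-constancy argument (no constant can be a fixed point because $\alpha(x)=\int_0^\infty k(x,z)dz$ is nonconstant, forcing $\Psi^+(c)=E[V]=c$, impossible) is also cleaner than the paper's, which asserts a constant solution would have to equal $M$. Two harmless caveats: as in the paper, the comparison $F\leq TF\leq R$ applies to nondecreasing solutions $F$ of (\ref{e:F}) (monotonicity of $F$ is what your covariance argument, and the paper's (\ref{e:phi+bd}), require), and for bounded $V$ the appeal to Assumption \ref{a:Finteg} is not needed since all quantities are bounded.
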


We shall see in the next section that $F$ behaves like $R$ (resp. $l$) as $x \rar \infty$ (resp. $x\rar -\infty$).
\section{Market impact asymptotics}
In this section we are interested in the {\em market impact} associated with large orders. More precisely, we will be computing the asymptotics of the marginal cost of trades, which is given by the function $F$.  As we shall see later, the asymptotic form of $F$ will coincide (up to a scaling factor) with that of implementation shortfall $IS$. We will also be able to compute the tail asymptotics of the distribution of the total demand in equilibrium. 
\begin{definition}
	A function $g:(0,\infty)\to (0,\infty)$ is said to be {\em regularly varying of index $\rho$} at $\infty$ if 
	\[
	\lim_{\lambda \rar \infty}\frac{g(\lambda x)}{g(\lambda)}=x^{\rho},\quad \forall x>0.
	\]
	Analogously, a function $g:(-\infty,0)\to (0,\infty)$ is said to be  regularly varying of index $\rho$ at $-\infty$ if  $g(-x)$ is regularly varying of index $\rho$ at $\infty$.
\end{definition}
We shall first start with the asymptotics of solutions of (\ref{e:compareup}) and (\ref{e:comparedown}) which will later allow us to compute the asymptotics of interest.
\begin{theorem} \label{t:RV}
	Suppose that  $N>1$, $-\infty<m<M<\infty$, and $\Pi^+$ has a continuous derivative. Then, the following statements are valid:
	\begin{enumerate}
		\item Define $\Psi^+_x(M):=\lim_{x\rar M}\frac{d}{dx}\Psi^+(x)$ and set $G:=M-R$, where $R$ is any nondecreasing solution of (\ref{e:compareup}). Then,  $G$ is regularly varying of index $\rho^+$ at $\infty$, where
		\be \label{e:RVindex+}
		\rho^+=\frac{\Psi^+_x(M)-1}{1-\frac{\Psi^+_x(M)}{N}}.
		\ee
		\item Define  $\Psi^-_x(m):=\lim_{x\rar m}\frac{d}{dx}\Psi^-(x)$ and set $G:=l-m$, where $l$ is any nondecreasing solution of (\ref{e:comparedown}). Then,  $G$ is regularly varying of index $\rho^-$ at $-\infty$, where
		\be \label{e:RVindex-}
		\rho^-=\frac{\Psi^-_x(m)-1}{1-\frac{\Psi^-_x(m)}{N}}.
		\ee
	\end{enumerate}
\end{theorem}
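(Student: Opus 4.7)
The plan is to establish part (1) in full; part (2) follows by the symmetric argument applied to $G := l - m$ at $-\infty$, replacing $\Psi^+_x(M)$ with $\Psi^-_x(m)$ throughout.

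Set $G(x) := M - R(x)$. Since $\Psi^{+}\leq M$ on the support of $V$, equation \ref{e:compareup} forces $R\leq M$; combined with the monotonicity of $R$, this gives $G\geq 0$ with $G(x)\to 0$ as $x\to\infty$. The continuous derivative of $\Pi^+$, together with the existence of $\Psi^+_x(M)$ as the limit in the theorem statement, delivers the Taylor-type expansion $\Psi^+(M-u) = M - \Psi^+_x(M)\,u + o(u)$ as $u\downarrow 0$. Inserting $R=M-G$ into \ref{e:compareup}, using the normalisation $\int\bigl[\tfrac{1}{N}q(\sigma,x-z)+\tfrac{N-1}{N}\bar q(\sigma,x,z)\bigr]\,dz=1$, and discarding the $z<0$ portion of the kernel by Gaussian tail bounds (exponentially small for $q$ and $O(1/x)$ for $\bar q$), I obtain the reduced equation
\[
G(x) = \Psi^+_x(M)\int_0^{\infty}\bigl[\tfrac{1}{N}q(\sigma,x-z)+\tfrac{N-1}{N}\bar q(\sigma,x,z)\bigr]\widetilde G(z)\,dz + R_1(x),
\]
where $\widetilde G(z):=\int G(y)q(\sigma,z-y)\,dy$ and $R_1(x)=O(1/x)+o(\widetilde G)$ absorbs the boundary contribution and the higher-order Taylor term.

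Next, under the provisional ansatz that $G$ is regularly varying of index $\rho\in(-1,0)$ at $\infty$, two kernel asymptotics pin down $\rho$. Gaussian convolution preserves regular variation of negative index, so $\widetilde G(z)\sim G(z)$ and $\int q(\sigma,x-z)\widetilde G(z)\,dz\sim G(x)$ as $x\to\infty$. For the averaging kernel, $\bar q(\sigma,x,z)$ is asymptotically $x^{-1}\mathbf 1_{[0,x]}(z)$ up to Gaussian smoothing at the endpoints, and Karamata's theorem (applicable precisely because $\rho>-1$) yields $\int_0^{\infty}\bar q(\sigma,x,z)\widetilde G(z)\,dz \sim \tfrac{1}{x}\int_0^x G(y)\,dy \sim G(x)/(\rho+1)$. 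Dividing the reduced equation by $G(x)$ and letting $x\to\infty$ collapses it to the scalar identity
\[
1 = \Psi^+_x(M)\left[\frac{1}{N}+\frac{N-1}{N(\rho+1)}\right],
\]
which is linear in $1/(\rho+1)$ and solved uniquely by $\rho = (\Psi^+_x(M)-1)/(1-\Psi^+_x(M)/N) = \rho^+$, as in \ref{e:RVindex+}. Since $\Psi^+$ is nondecreasing with $\Psi^+(y)\geq y$, one has $\Psi^+_x(M)\in[0,1]$; in the interior of this range, $\rho^+\in(-1,0)$, so the ansatz is internally consistent and $R_1$ is genuinely subdominant.

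The delicate point, and the main obstacle, is upgrading this ansatz into an actual conclusion: the balance equation only identifies what the index \emph{must} be if $G$ is regularly varying. I would do this in two stages. First, construct sub- and super-solutions to \ref{e:compareup} giving two-sided polynomial bounds on $G$ (comparison is available because the right-hand side is monotone in $R$, a feature already exploited in the proof of Theorem \ref{t:comparison}). Second, exploit the scaling invariance of the reduced equation to show that every subsequential limit of $G(\lambda\,\cdot)/G(\lambda)$ as $\lambda\to\infty$ satisfies the same multiplicative integral identity, whose only positive nondecreasing solution is $x\mapsto x^{\rho^+}$ by the uniqueness of $\rho^+$ in the balance equation. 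An equivalent and arguably cleaner route is to take Mellin transforms of $G$: the reduced equation becomes a meromorphic identity whose leading singularity at $-\rho^+$ yields the regular variation via a Karamata--Tauberian theorem.
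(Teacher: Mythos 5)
Your derivation of the candidate index is correct and agrees with the paper: dividing the equation by $G(x)=M-R(x)$, localising the Gaussian kernels and balancing the averaging term against $\tfrac1x\int_0^x G$ does produce $1=\Psi^+_x(M)\bigl[\tfrac1N+\tfrac{N-1}{N(\rho+1)}\bigr]$, whose solution is $\rho^+$. But this is only the easy half. As you yourself admit, the computation shows what the index must be \emph{if} $G$ is regularly varying, while the theorem's content is precisely that $G$ \emph{is} regularly varying; your two-stage plan for closing this (sub/super-solutions giving polynomial bounds, then identification of subsequential limits of $G(\lambda\cdot)/G(\lambda)$) is stated but never executed, and executing it is essentially the entire proof in the paper. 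Concretely, the paper (i) rescales to kernels $q(\sigma/\alpha,\cdot)$ and uses the Mean Value Theorem to replace $M-\Psi^+(R(\alpha y))$ by $\Psi^+_x$ times $G(\alpha y)$, (ii) shows via Fatou's lemma that $\gamma_*(x)=\liminf_\alpha G(\alpha x)/G(\alpha)$ is strictly positive, (iii) upgrades this to uniform boundedness of the ratios on compacts via Corollary 2.0.5 of Bingham--Goldie--Teugels, (iv) derives the a priori bound $1/(M-R(\alpha))\leq K\alpha$ directly from (\ref{e:compareup}) to control the $z<0$ boundary term and the off-diagonal part of the kernels, and (v) passes to the limit to obtain the functional equation $\gamma(x)=\tfrac{\Psi^+_x(M)}{N}\gamma(x)+\tfrac{N-1}{Nx}\Psi^+_x(M)\int_0^x\gamma(y)\,dy$ with $\gamma(1)=1$, whose unique solution is $x^{\rho^+}$. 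Your sketch supplies none of the corresponding precompactness, positivity, or domination arguments, and your comparison-based bounds are asserted rather than constructed.

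Two further points would need repair even within your outline. First, your working restriction $\rho\in(-1,0)$ silently excludes $\Psi^+_x(M)=1$, i.e.\ $\rho^+=0$, which is squarely inside the theorem (it is the light-tail case feeding Theorem \ref{t:SVlog}); there the Karamata division step as you use it degenerates, whereas the paper's route through the limit equation for $\gamma$ covers all $\Psi^+_x(M)\leq 1$ uniformly. Second, discarding the $z<0$ portion of the kernel as ``$O(1/x)$, hence subdominant'' presupposes a lower bound on $G$ of order strictly larger than $1/x$; without first proving something like $1/(M-R(\alpha))\leq K\alpha$ (and in fact $o(\alpha)$, which follows from the divergence of the relevant integral) the remainder $R_1$ is of the same order as the quantity you are tracking, so it cannot simply be absorbed. (A minor slip: the subsequential limits of $G(\lambda\cdot)/G(\lambda)$ are non\emph{increasing}, not nondecreasing, since $R$ is nondecreasing.) In short, the index computation is right, but the regular variation itself --- the assertion of the theorem --- is left unproved.
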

\begin{remark}
	Observe that when $M$ is finite,
	\[
	\Psi^+_x(M)=\lim_{x\rar M}\frac{M-\Psi^+(x)}{M-x}\leq \lim_{x\rar M}\frac{M-x}{M-x}=1.
	\]
	Similarly, $\Psi^-_x(m)\leq 1$ if $m$ is finite.
\end{remark}
The above shows that $R$ (resp. $l$) is slowly varying at $\infty$ (resp. $-\infty$) if  $\Psi^+_x(M)=1$ (resp. $\Psi^-_x(M)=1$). We can obtain a better estimate of how slow its variation is under a further assumption on $\Psi^{\pm}$. 

\begin{theorem} \label{t:SVlog}
	Assume that  $N>1$, $-\infty<m<M<\infty$, $\Pi^+$ has a continuous derivative and $\Psi^+_x(M)$ and $\Psi^-_x(m)$ are as in Theorem \ref{t:RV}. Then, the following statements are valid:
\begin{enumerate}
		\item Suppose that $\Psi^+_x(M)=1$ and there exist an integer $n\geq 1$ and a real constant $k\in (0,\infty)$ such that
		\[	\lim_{x\rar M} \frac{\Psi^+(x)-x}{(M-x)^{n+1}}=\frac{1}{k}.\]
Then, the following asymptotics\footnote{We write $f(x)\sim g(x)$, $x\rar\pm \infty$ if $\lim_{x \rar \pm \infty}\frac{f(x)}{g(x)}=1$.} hold:
	\be \label{e:SVlog+}
M-R(x) \sim \left(\frac{N}{N-1}\frac{n}{k}\right)^{-\frac{1}{n}}(\log x)^{-\frac{1}{n}}, \quad x \rar \infty,
	\ee
	where $R$ is any solution of (\ref{e:compareup}).
	\item Suppose that   $\Psi^-_x(m)=1$ and there exist an integer $n\geq 1$ and a real constant $k\in (0,\infty)$ such that
	\[
	\lim_{x\rar m} \frac{x-\Psi^-(x)}{(x-m)^{n+1}}=\frac{1}{k}.
	\]
	Then
	\be \label{e:SVlog-}
	l(x)-m \sim \left(\frac{N}{N-1}\frac{n}{k}\right)^{-\frac{1}{n}}(\log |x|)^{-\frac{1}{n}}, \quad x \rar -\infty,
	\ee
		where $l$ is any solution of (\ref{e:comparedown}).
		\end{enumerate}
\end{theorem}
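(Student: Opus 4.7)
The plan is to analyze (\ref{e:compareup}) near $+\infty$ for $G := M - R$ and extract, beyond the slow-variation statement of Theorem \ref{t:RV}, a sharp Karamata-type next-order identity. Throughout I set $K(x,z) := \frac{1}{N}q(\sigma,x-z) + \frac{N-1}{N}\bar q(\sigma,x,z)$ (so that $\int_{-\infty}^{\infty} K(x,z)\,dz = 1$) and $\Psi^+_R(z) := \int \Psi^+(R(y))q(\sigma,z-y)\,dy$.

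First I would subtract $M\cdot\int K(x,z)\,dz=M$ from both sides of (\ref{e:compareup}) to rewrite it as
\[
G(x) \;=\; \int_0^{\infty} K(x,z)\bigl[M-\Psi^+_R(z)\bigr]dz + (M-E[V])\int_{-\infty}^0 K(x,z)\,dz.
\]
A direct computation using $\int_{-\infty}^0 \bar q(\sigma,x,z)\,dz = \frac{1}{x}\int_0^x P(Z>y)\,dy$ together with the Gaussian tail bound shows the boundary integral is $O(1/x)$, hence $o((\log x)^{-1/n})$. Since $R(y)\to M$, the hypothesis on $\Psi^+$ yields $M-\Psi^+(R(y)) = G(y) - G(y)^{n+1}/k + o(G(y)^{n+1})$ as $y\to\infty$; moreover Gaussian convolution preserves the leading asymptotics of a slowly varying function with error $O(G''(x)) = O(G(x)^{n+1}/x^2)$ under the candidate ansatz $G(x)\sim c(\log x)^{-1/n}$. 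Combining these observations with $\int K(x,z)\,dz=1$ produces the asymptotic balance
\[
\int_{-\infty}^{\infty} K(x,z)\bigl[G(x)-G(z)\bigr]dz \;=\; \frac{1}{k}\int_{-\infty}^{\infty} K(x,z)\,G(z)^{n+1}dz + o(G(x)^{n+1}).
\]

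The crux is now a Karamata-type next-order identity: assuming $G(x)\sim c(\log x)^{-1/n}$, a single integration by parts gives
\[
\frac{1}{x}\int_0^x G(u)\,du - G(x) \;\sim\; \frac{c}{n}(\log x)^{-1/n-1} \;=\; \frac{G(x)^{n+1}}{n\,c^n},
\]
so the dominant contribution to the left-hand side of the balance comes from the averaging component of $K$ and equals $\frac{N-1}{N}\cdot\frac{G(x)^{n+1}}{n c^n}$, the Gaussian part being of lower order. The right-hand side is asymptotic to $G(x)^{n+1}/k$ by Karamata's theorem (since $G^{n+1}$ is slowly varying, $\frac{1}{x}\int_0^x G^{n+1}(u)\,du \sim G(x)^{n+1}$, and the Gaussian shift is again negligible). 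Equating the two asymptotics forces
\[
\frac{N-1}{Nn\,c^n}=\frac{1}{k}, \qquad \text{i.e.,}\qquad c=\Bigl(\tfrac{N}{N-1}\cdot\tfrac{n}{k}\Bigr)^{-1/n},
\]
which matches (\ref{e:SVlog+}). Part (2) then follows by the mirror argument applied to (\ref{e:comparedown}) with $\tilde G := l - m$ and $x\to -\infty$: all computations transpose verbatim under the substitution $(\Psi^+,M)\leftrightarrow(\Psi^-,m)$ and reflection of the averaging operator.

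To upgrade this consistency check to a genuine asymptotic equivalence, I would sandwich $G$ between $G_\pm(x) := (c\pm\varepsilon)(\log x)^{-1/n}$ for $x$ sufficiently large: the functions $M - G_\pm$ provide, respectively, a sub- and super-solution of (\ref{e:compareup}), and the maximality of $R$ together with the monotonicity of the integrand in $R$ yields $G_-(x)\le G(x)\le G_+(x)$ for all sufficiently large $x$; sending $\varepsilon\downarrow 0$ concludes. The main obstacle is precisely this sandwich step: one must make the $o(G^{n+1})$ error uniform in $x$ and verify that the candidates $G_\pm$ really do produce, upon insertion into the nonlinear integral operator of (\ref{e:compareup}), genuine strict super- and sub-solutions for all sufficiently large $x$. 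Every other ingredient---the Karamata integration by parts, the commutation of Gaussian smoothing with slow variation to leading order, and the $O(1/x)$ boundary estimate---is a routine manipulation.
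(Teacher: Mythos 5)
Your heuristic computation of the constant is correct and matches the answer: substituting the ansatz $G(x)\sim c(\log x)^{-1/n}$ into the integral equation, the Karamata-type identity $\frac{1}{x}\int_0^x G - G(x)\sim\frac{c}{n}(\log x)^{-1-1/n} = G(x)^{n+1}/(nc^n)$ is right, and balancing against $G^{n+1}/k$ does pin down $c=\bigl(\frac{N}{N-1}\frac{n}{k}\bigr)^{-1/n}$. However, as a proof this has a genuine gap, and it is exactly the step you flag as the ``main obstacle'': the super-/sub-solution sandwich. The operator in (\ref{e:compareup}) is nonlocal (both in the outer $K(x,z)$ weight and the inner Gaussian convolution defining $\Psi^+_R$), so knowing that $M-G_\pm$ satisfy a super-/sub-solution inequality only for \emph{large} $x$ does not propagate via a comparison principle the way it would for a local ODE: the value of the operator at $x$ depends on $R$ on all of $\mathbb{R}$, including the region where $G_\pm$ are not even defined (for $x\le 1$ the formula $(c\pm\varepsilon)(\log x)^{-1/n}$ breaks down) or where the $o(G^{n+1})$ error is not small. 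You would need to patch $G_\pm$ on a bounded set so as to obtain a \emph{global} super-/sub-solution, and it is not clear this can be done without altering the asymptotic constant. Moreover, you invoke the maximality of $R$, but the statement concerns \emph{any} nondecreasing solution of (\ref{e:compareup}), so the argument would not cover the general case even if the sandwich worked for the maximal one.

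The paper takes a different and more robust route: it never posits the logarithmic form a priori. It studies the normalized increment $r(\alpha,x):=\bigl(R(\alpha x)-R(\alpha)\bigr)/(M-R(\alpha))^{n+1}$, shows via an explicit comparison with an affine super-solution and Tarski's fixed-point theorem (in a domain $\mathcal{D}(\alpha)$ that absorbs the $(M-R(\alpha))^{-n}$ normalization) that $r(\alpha,\cdot)$ is uniformly bounded on compacts, upgrades this via Theorem 3.1.5 of \cite{BGT}, passes to the limit to obtain the integro-differential equation $\gamma(x)=\frac{1}{k}+\frac{\gamma(x)}{N}+\frac{N-1}{Nx}\int_0^x\gamma$ with $\gamma(1)=0$, whose unique solution is $\gamma(x)=\frac{N}{k(N-1)}\log x$, and finally converts this into the desired asymptotics by showing $g(x):=\exp\bigl((M-R(x))^{-n}\bigr)$ is regularly varying of index $\frac{N}{N-1}\frac{n}{k}$. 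The uniform-boundedness-then-limit mechanism is precisely what your sandwich would need to replace, and is considerably harder than ``routine'': you should either supply the missing comparison argument for the nonlocal operator or adopt the tightness-and-limit route.
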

\begin{corollary}  \label{c:SVlog}
	Assume that  $N>1$, $-\infty<m<M<\infty$, $\Pi^+$ has a continuous derivative and $\Psi^+_x(M)$ and $\Psi^-_x(m)$ are as in Theorem \ref{t:RV}. 
	 Then, the following statements are valid:
	\begin{enumerate}
		\item Suppose that there exist an integer $n\geq 0$ and a real constant $k\in (0,\infty)$ such that
		\[	\lim_{x\rar M} \frac{\Psi^+(x)-x}{(M-x)^{n+1}}=\frac{1}{k}.\]
		Then, $\Pi^+(R)$ is regularly varying at $\infty$ of index 
		\[
		-\frac{\Psi^+_x(M)}{1-\frac{\Psi^+_x(M)}{N}},
		\]
		where $R$ is any solution of (\ref{e:compareup}).
		\item Suppose that there exist an integer $n\geq 0$ and a real constant $k\in (0,\infty)$ such that
		\[	\lim_{x\rar M} \frac{x-\Psi^-(x)}{(x-m)^{n+1}}=\frac{1}{k}.
		\]
		Then, $\Pi^-(l)$ is regularly varying at $-\infty$ of index 
		\[
		-\frac{\Psi^-_x(m)}{1-\frac{\Psi^-_x(m)}{N}},
		\]
		where $l$ is any solution of (\ref{e:comparedown}).
	\end{enumerate}
\end{corollary}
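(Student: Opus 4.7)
The plan is to derive the asymptotic behaviour of $\Pi^+$ near $M$ directly from the assumed asymptotic of $\Psi^+$, and then compose with the asymptotics of $R$ supplied by Theorems \ref{t:RV} and \ref{t:SVlog}. Part (2) is symmetric, so I focus on (1).

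The identity I will exploit is
\[
\Psi^+(x)-x \;=\; \frac{I(x)}{\Pi^+(x)}, \qquad I(x):=\int_x^M \Pi^+(v)\,dv,
\]
which follows by integration by parts in $\Phi^+(x)=\int_x^M v\,(-\Pi^{+\prime}(v))\,dv$, the boundary term vanishing because $M<\infty$. Since $I'(x)=-\Pi^+(x)$ this rearranges to $(\log I(x))' = -1/(\Psi^+(x)-x)$, which under the hypothesis is asymptotic to $-k(M-x)^{-(n+1)}$ as $x\rar M$. Integrating this monotone relation and using $\Pi^+(x) = I(x)/(\Psi^+(x)-x)$ yields
\[
\Pi^+(x) = (M-x)^{k-1}L_0(M-x) \quad (n=0), \qquad \log \Pi^+(x)\sim -\tfrac{k}{n}(M-x)^{-n} \quad (n\geq 1),
\]
where $L_0$ is slowly varying at $0$.

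For $n=0$ one reads off $\Psi^+_x(M)=1-1/k$ from $\Psi^{+\prime}(x)=(x-\Psi^+(x))\Pi^{+\prime}(x)/\Pi^+(x)$. Substituting $M-R(x)=x^{\rho^+}\ell(x)$ from Theorem \ref{t:RV}, with $\ell$ slowly varying, gives
\[
\Pi^+(R(x)) = x^{(k-1)\rho^+}\,\ell(x)^{k-1}\,L_0\!\bigl(x^{\rho^+}\ell(x)\bigr),
\]
which is a product of a power $x^{(k-1)\rho^+}$ with functions that are slowly varying in $x$ (using that a function slowly varying at $0$ composed with a regularly varying function of positive index tending to $0$ remains slowly varying at $\infty$). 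An algebraic check confirms $(k-1)\rho^+=-\Psi^+_x(M)/(1-\Psi^+_x(M)/N)$, completing this case. For $n\geq 1$ one has $\Psi^+_x(M)=1$, and Theorem \ref{t:SVlog} gives $M-R(x)\sim C(\log x)^{-1/n}$ with $C^{-n}=Nn/(k(N-1))$; hence $\log \Pi^+(R(x))\sim -(k/n)C^{-n}\log x = -(N/(N-1))\log x$, matching the claimed index $-1/(1-1/N)$.

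The delicate step is closing the $n\geq 1$ case rigorously: the asymptotic $\log \Pi^+(R(x)) \sim -(N/(N-1))\log x$ only yields $\Pi^+(R(x))=x^{-N/(N-1)+o(1)}$, whereas regular variation asks for $\Pi^+(R(\lambda x))/\Pi^+(R(\lambda))\rar x^{-N/(N-1)}$ at each fixed $x>0$. To close this gap I will work with the integral representation
\[
\log\Pi^+(R(\lambda x))-\log\Pi^+(R(\lambda))=-\int_{R(\lambda)}^{R(\lambda x)}\frac{\Psi^{+\prime}(y)}{\Psi^+(y)-y}\,dy,
\]
change variables $y=R(t)$, and use the equation (\ref{e:compareup}) satisfied by $R$ together with a second-order sharpening of the $\log x$ asymptotic in Theorem \ref{t:SVlog}. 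This refinement, traceable to the fixed-point construction of $R$, is what I expect to be the main obstacle.
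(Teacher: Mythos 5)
Your $n=0$ branch is fine and is a genuinely different (and correct) route: in that case $\Pi^+$ really is regularly varying at $M$ in the variable $M-x$ (index $k-1$, by integrating $(\log I)'=-1/(\Psi^+-\mathrm{id})$), $M-R$ is regularly varying at $\infty$ with index $\rho^+\neq 0$ by Theorem \ref{t:RV}, and the composition argument via the uniform convergence theorem for slowly varying functions gives the index $(k-1)\rho^+$, which matches the claim. (Minor slip: $\rho^+<0$ here, so the inner function is regularly varying of \emph{negative} index tending to $0$; the composition lemma still applies since the index is nonzero.) The paper instead treats both cases at once through the logarithmic derivative, but for $n=0$ your argument stands on its own.

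The genuine gap is the case $n\geq 1$ (equivalently $\Psi^+_x(M)=1$), which is the substantive case, and you have only a plan there, not a proof. As you note, $\log\Pi^+(x)\sim-\frac{k}{n}(M-x)^{-n}$ combined with $M-R(x)\sim C(\log x)^{-1/n}$ gives only $\Pi^+(R(x))=x^{-\frac{N}{N-1}+o(1)}$; writing $M-R(x)=C(\log x)^{-1/n}(1+\eps(x))$, regular variation would require $\log\lambda\,\big(\eps(\lambda x)-\eps(\lambda)\big)\rar 0$, a second-order refinement of Theorem \ref{t:SVlog} that neither the theorem nor your proposal provides, and which is not easily extracted from the fixed-point construction. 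The paper closes exactly this gap without ever composing asymptotics: differentiating your $I$-identity gives $-\Pi^+_x/\Pi^+=\Psi^+_x/(\Psi^+-\mathrm{id})$, so the whole problem reduces to computing $\lim_{x\rar\infty}x\,\frac{d}{dx}\log\Pi^+(R(x))$, hence to $\lim_{x\rar\infty}xR'(x)/(M-R(x))^{n+1}$. That limit is read off from the auxiliary function $g(x)=\exp\big((M-R(x))^{-n}\big)$, already shown in Step 6 of the proof of Theorem \ref{t:SVlog} to be regularly varying of index $\frac{Nn}{(N-1)k}$, via Theorem 1.5.11 of \cite{BGT}, since $xg'(x)/g(x)=nxR'(x)/(M-R(x))^{n+1}$; then $x\,\frac{d}{dx}\log\Pi^+(R(x))\rar-\frac{N}{N-1}$ and the criterion that convergence of the normalized log-derivative implies regular variation (Exercise 1.11.13 in \cite{BGT}) concludes. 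Note that your own rescue plan, after the change of variables $y=R(t)$, needs precisely this limit of $t\,\Psi^+_x(R(t))R'(t)/(\Psi^+(R(t))-R(t))$ — so the missing ingredient is this derivative-level estimate obtained from $g$, not a sharpening of the $\log x$ asymptotics of $R$ itself.
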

\begin{remark} \label{r:remarkn}Note that if $\Psi^+_x(M)\neq 1$, we have
	\[
	\lim_{x \rar M}\frac{\Psi^+(x)-x}{M-x}=1-\Psi^+_x(M),
	\] 
	implying $n=0$ and $k^{-1}=1-\Psi^+(M)$ in Part 1) of Corollary \ref{c:SVlog}. An analogous consideration applies to the second part.
\end{remark}
The results above now allow us to compute the asymptotics of solutions of (\ref{e:F}).
\begin{theorem} \label{t:asympF}
	Assume that  $N>1$, $-\infty<m<M<\infty$, $\Pi^+$ has a continuous derivative and $\Psi^+_x(M)$ and $\Psi^-_x(m)$ are as in Theorem \ref{t:RV}. Let $F$ be any solution of (\ref{e:F}). Then, the following statements are valid:
\begin{enumerate}
	\item $M-F$ is regularly varying of index $\rho^+$ at $\infty$, where $\rho^+$ is given by (\ref{e:RVindex+}). 

Moreover,  if $\Psi^+_x(M)=1$ and there exist an integer $n\geq 1$ and a real constant $k\in (0,\infty)$ such that
	\[	\lim_{x\rar M} \frac{\Psi^+(x)-x}{(M-x)^{n+1}}=\frac{1}{k},
	\]
	 the following asymptotics hold:
	\be \label{e:SVlog+F}
	M-F(x) \sim \left(\frac{N}{N-1}\frac{n}{k}\right)^{-\frac{1}{n}}(\log x)^{-\frac{1}{n}}, \quad x \rar \infty.
	\ee
	\item  $F-m$ is regularly varying of index $\rho^-$ at $-\infty$, where $\rho^-$ is given by (\ref{e:RVindex-}).

Moreover, if   $\Psi^-_x(m)=1$, and there exist an integer $n\geq 1$ and a real constant $k\in (0,\infty)$ such that
	\[
	\lim_{x\rar m} \frac{x-\Psi^-(x)}{(x-m)^{n+1}}=\frac{1}{k},
	\]
	then
	\be \label{e:SVlog-F}
	F(x)-m \sim \left(\frac{N}{N-1}\frac{n}{k}\right)^{-\frac{1}{n}}(\log |x|)^{-\frac{1}{n}}, \quad x \rar -\infty.
	\ee
\end{enumerate}
\end{theorem}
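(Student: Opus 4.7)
The plan is to transfer the asymptotic results of Theorems \ref{t:RV} and \ref{t:SVlog}, which are stated for nondecreasing solutions of the comparison equations (\ref{e:compareup}) and (\ref{e:comparedown}), to any solution $F$ of (\ref{e:F}) by showing that $F$ itself satisfies the same asymptotic integral relations up to negligible corrections. The sandwich $l \le F \le R$ supplied by Theorem \ref{t:comparison} provides the a priori control needed to justify those corrections.

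First I would rewrite (\ref{e:F}) as a decay equation for $M - F$. Since $k(x,z):=\frac{1}{N}q(\sigma,x-z)+\frac{N-1}{N}\bar q(\sigma,x,z)$ integrates to $1$ in $z$, this reads
\[
M-F(x)=\int_0^\infty k(x,z)\bigl(M-\phi_F^+(z)\bigr)\,dz+\int_{-\infty}^0 k(x,z)\bigl(M-\phi_F^-(z)\bigr)\,dz.
\]
The integrand on $(-\infty,0)$ is bounded by $M-m$ and $\int_{-\infty}^0 k(x,z)\,dz=O(1/x)$ as $x\to\infty$ (by Gaussian tail decay together with $\tfrac{1}{x}\int_0^x P(\sigma\mathcal{N}>y)\,dy=O(1/x)$), so the second integral is $O(1/x)$ and negligible compared with the rate of $M-R(x)$ given by Theorem \ref{t:SVlog}.

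Next I would show that on the positive axis $M-\phi_F^+(z)$ has the same asymptotic order as the expression $M-\int\Psi^+(R(y))q(\sigma,z-y)\,dy$ that enters (\ref{e:compareup}). This uses Lemma \ref{l:Flim} giving $F(z)\to M$, with a rate pinned down on both sides by the sandwich $l\le F\le R$, together with a Potter-type uniform bound for regularly varying functions combined with Gaussian integrability, which pushes the regular-variation index through the ratio of convolutions defining $\phi_F^+$. Once this is in place, the Karamata/fixed-point argument behind Theorems \ref{t:RV} and \ref{t:SVlog} applies to $F$ verbatim, yielding regular variation of index $\rho^+$ for $M-F$ and, under the extra hypothesis on $\Psi^+$ near $M$, the sharp logarithmic asymptotics (\ref{e:SVlog+F}) with identical constants.

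Part (2) follows by the same argument with the roles of $\Phi^\pm,\Pi^\pm,\Psi^\pm$ exchanged, or equivalently by applying Part (1) to the random variable $-V$ (using Proposition \ref{p:symmetric} style manipulations). The main obstacle is the convolutional step: $\phi_F^+$ is a ratio of Gaussian convolutions of $\Phi^+\circ F$ and $\Pi^+\circ F$, not the simpler expression $\Psi^+(F(z))$, and some care is needed to show that this smoothing does not alter the leading-order behavior of $M-\phi_F^+$. The a priori sandwich bounds from Theorem \ref{t:comparison} are essential here, as they supply the regularity on $F$ needed to bootstrap the Potter-type argument and match leading constants with those of $R$ and $l$.
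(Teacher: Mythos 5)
Your overall strategy---re-running the Karamata-type machinery of Theorems \ref{t:RV} and \ref{t:SVlog} on the equation satisfied by $F$ itself, with Theorem \ref{t:comparison} supplying a priori control---is the right one and is in the spirit of the paper's proof, and your treatment of the $z<0$ contribution (it is $O(1/x)$, hence negligible since $\rho^+>-1$) is fine. However, the step you yourself identify as the crux is not justified by the tools you invoke, and as stated it is circular. The sandwich $l\le F\le R$ does \emph{not} pin down the rate of $M-F$ at $+\infty$ on both sides: $l(x)$ tends to $E[V]$ as $x\rar\infty$, so $M-l$ does not decay at all, and the sandwich only yields the one-sided bound $M-F\ge M-R$ together with the crude estimate $1/(M-F(\alpha))\le K\alpha$ inherited from (\ref{e:R*bd}). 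Moreover, the monotonicity inequalities available run the wrong way for your claim that $M-\phi^+_F(z)$ has the same asymptotic order as $M-\int\Psi^+(R(y))q(\sigma,z-y)dy$: from $F\le R$ and (\ref{e:phi+bd}) one only gets $M-\phi^+_F(z)\ \ge\ M-\int\Psi^+(R(y))q(\sigma,z-y)dy$, and an upper bound of the same order is essentially the assertion $M-F\asymp M-R$ that the theorem is meant to establish. So ``a Potter-type bound pushes the index through the ratio of convolutions'' is precisely the missing content, not a routine verification.

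What closes this gap in the paper is a different device: write $\phi^+_F$ exactly as an average of $\Psi^+(F(\alpha\,\cdot))$ against the probability measure $\nu(\alpha,z,dy)$ of (\ref{e:measnu}), whose density is proportional to $\Pi^+(F(\alpha y))\,q(\sigma/\alpha,y-z)$, and prove that $\nu(\alpha,z,\cdot)$ converges to the point mass at $z$ for $z>0$. That localization uses only the one-sided information you do have: $\Pi^+(F)\ge\Pi^+(R^*)$ for the maximal solution $R^*$, the regular variation of $\Pi^+(R^*)$ from Corollary \ref{c:SVlog}, and a Potter bound to control the denominator from below, which gives (\ref{e:nubound}). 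With this concentration in hand, Steps 2--5 of the proof of Theorem \ref{t:RV} are repeated for $G=M-F$ and yield the limit equation (\ref{e:ODEgamma}), hence $\gamma(x)=x^{\rho^+}$. Finally, ``applies verbatim'' in the logarithmic case also hides a needed ingredient: one must first show that $f(\alpha,x)=(F(\alpha x)-F(\alpha))/(M-F(\alpha))^{n+1}$ is bounded on $[1,2]$ uniformly in large $\alpha$, which the paper obtains from part (4) of Lemma \ref{l:key} (inequality (\ref{e:phi+bd})) together with $\phi^-_F\le E[V]$, before the argument of Theorem \ref{t:SVlog} can be rerun. Without these two ingredients your outline does not go through.
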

\begin{remark}
Upon integrating by parts we arrive at
\[
\frac{\Psi^+(x)-x}{(M-x)^{n+1}}=\frac{\int_x^M \Pi^+(y)dy}{\Pi^+(x) (M-x)^{n+1}}.
\]
Now suppose that $\Pi^+(x)=\int_x^M p(y)dy$ for some differentiable $p$ such that 
\[
-\frac{p(x)}{p'(x)}\sim \frac{1}{k} (M-x)^{n+1} \mbox{ as } x \rar M,
\]
for some $k>0$ and $n\geq 1$. A straightforward application of L'Hospital rule shows that
\[
\lim_{x \rar M}\frac{\Pi^+(x)}{p(x)(M-x)^{n+1}}=\frac{1}{k},
	\]
	which in turn implies
	\[
	\lim_{x\rar \infty} \frac{\int_x^M \Pi^+(y)dy}{\Pi^+(x) (M-x)^{n+1}}=\frac{1}{k}.
	\]
For instance, if
\[
p(x)\propto (M-x)^{\beta}\exp\left(-\frac{\Sigma}{(M-x)^n}\right),
\]
for some $\beta \in \bbR$, $\Sigma>0$,  and $n\geq 1 $, we deduce that
\[
\lim_{x \rar M}\frac{\Psi^+(x)-x}{(M-x)^{n+1}}=\frac{1}{n\Sigma }.
\]
\end{remark}
We also have the exact analogue of Corollary \ref{c:SVlog} that can be proven by exactly the same arguments.
\begin{corollary}  \label{c:Voldist}
	Assume that  $N>1$, $-\infty<m<M<\infty$, $\Pi^+$ has a continuous derivative and $\Psi^+_x(M)$ and $\Psi^-_x(m)$ are as in Theorem \ref{t:RV}. Let $F$ be any solution of (\ref{e:F}).
	Then, the following statements are valid:
	\begin{enumerate}
		\item Suppose that there exist an integer $n\geq 0$ and a real constant $k\in (0,\infty)$ such that
		\[	\lim_{x\rar M} \frac{\Psi^+(x)-x}{(M-x)^{n+1}}=\frac{1}{k}.\]
		Then, $\Pi^+(F)$ is regularly varying at $\infty$ of index 
		\[
		-\frac{\Psi^+_x(M)}{1-\frac{\Psi^+_x(M)}{N}}.
		\]
		\item Suppose that there exist an integer $n\geq 0$ and a real constant $k\in (0,\infty)$ such that
		\[	\lim_{x\rar M} \frac{x-\Psi^-(x)}{(x-m)^{n+1}}=\frac{1}{k}.
		\]
		Then, $\Pi^-(F)$ is regularly varying at $-\infty$ of index 
		\[
		-\frac{\Psi^-_x(m)}{1-\frac{\Psi^-_x(m)}{N}}.
		\]
	\end{enumerate}
\end{corollary}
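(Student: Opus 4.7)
The plan is to translate the hypothesis on $\Psi^{\pm}-\mathrm{id}$ into an asymptotic form for the tail $\Pi^{\pm}$ near the endpoint of the support, and then compose with the asymptotics of $M-F$ (resp.\ $F-m$) provided by Theorem \ref{t:asympF}. Because Theorem \ref{t:asympF} endows $F$ with exactly the same regularly varying (and logarithmic) asymptotics that Theorem \ref{t:SVlog} endowed $R$ and $l$, the argument is a verbatim adaptation of the proof of Corollary \ref{c:SVlog} with $R$, $l$ replaced throughout by $F$.

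First I would exploit the identity
\[
\Psi^+(x)-x \;=\; \frac{\int_x^M \Pi^+(v)\,dv}{\Pi^+(x)},
\]
which follows by integration by parts in $\Phi^+$ together with $M<\infty$. In the case $\Psi^+_x(M)<1$, Remark \ref{r:remarkn} forces $n=0$ and $k=\frac{1}{1-\Psi^+_x(M)}$; the identity then reads $-\Pi^+(x)(M-x)/\int_x^M \Pi^+(v)\,dv \to k$, and a Karamata-type inversion (legitimate because $\Pi^+$ has a continuous derivative) yields $\Pi^+(x) \sim c(M-x)^{k-1}$ as $x\to M$. Hence $\Pi^+$ is regularly varying at $M^-$ of index $k-1=\frac{\Psi^+_x(M)}{1-\Psi^+_x(M)}$. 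Combined with the conclusion of Theorem \ref{t:asympF} that $M-F(\cdot)$ is regularly varying at $\infty$ of index $\rho^+=\frac{\Psi^+_x(M)-1}{1-\Psi^+_x(M)/N}$, the composition rule for regularly varying functions yields that $\Pi^+(F(x))$ is regularly varying of index $(k-1)\rho^+ = -\frac{\Psi^+_x(M)}{1-\Psi^+_x(M)/N}$, which is the stated index.

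In the critical case $\Psi^+_x(M)=1$ with $n\geq 1$, the identity forces $\Pi^+$ to decay faster than any power of $M-x$ near $M$; the remark following Theorem \ref{t:asympF} exhibits the prototypical tail $\Pi^+(x)\propto (M-x)^\beta \exp(-\Sigma/(M-x)^n)$ with $n\Sigma=k$. Substituting the logarithmic asymptotic (\ref{e:SVlog+F}), namely $M-F(x) \sim \bigl(\frac{N}{N-1}\frac{n}{k}\bigr)^{-1/n}(\log x)^{-1/n}$, a direct computation gives $\Pi^+(F(x)) \propto (\log x)^{-\beta/n}\, x^{-N/(N-1)}$, which is regularly varying at $\infty$ of index $-\frac{N}{N-1}=-\frac{\Psi^+_x(M)}{1-\Psi^+_x(M)/N}$. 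The statement for $\Pi^-(F)$ at $-\infty$ follows by the analogous argument, or by applying the just-proved statement to the reflected signal $-V$.

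The principal obstacle is the passage from the assumed expansion of $\Psi^+-x$ to a usable tail form for $\Pi^+$ when $n\geq 1$: there $\Pi^+$ is no longer regularly varying at $M^-$, so one cannot simply invoke the composition theorem for regularly varying functions. It is the logarithmic form of $M-F(x)$ that conspires with the stretched-exponential decay of $\Pi^+$ to restore a clean power law, and this matched calibration (rather than a black-box composition) is what delivers the precise regular variation index asserted in the corollary.
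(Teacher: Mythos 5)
Your treatment of part (1) in the non-critical case $\Psi^+_x(M)<1$ (equivalently $n=0$) is correct and genuinely different from the paper's argument: from $\Psi^+(x)-x=\int_x^M\Pi^+(v)\,dv/\Pi^+(x)$ you get that $\Pi^+$ is regularly varying at $M^-$ of index $k-1$ (note this is all the hypothesis gives — the stronger claim $\Pi^+(x)\sim c(M-x)^{k-1}$ does not follow, but you only use the regular variation), and you then compose with the regular variation of $M-F$ supplied by Theorem \ref{t:asympF}. The paper instead works with the logarithmic derivative along the solution, using $-\Pi^+_x/\Pi^+=\Psi^+_x/(\Psi^+-\mathrm{id})$ together with Theorem 1.5.11 and Exercise 1.11.13 of \cite{BGT}; for $n=0$ both routes are fine, and yours is arguably more transparent.

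The genuine gap is the critical case $\Psi^+_x(M)=1$, $n\geq 1$ (and its mirror image in part (2)). First, the hypothesis $\lim_{x\rar M}(\Psi^+(x)-x)/(M-x)^{n+1}=1/k$ does \emph{not} imply the prototypical tail $\Pi^+(x)\propto(M-x)^{\beta}\exp\left(-\Sigma/(M-x)^{n}\right)$: the Remark following Theorem \ref{t:asympF} only says that such densities satisfy the hypothesis, not conversely, so your computation treats an example rather than the general case. Second, and more seriously, even for that example the substitution of the asymptotic equivalence (\ref{e:SVlog+F}) into the stretched exponential is invalid: writing $M-F(x)=C(\log x)^{-1/n}(1+\eps(x))$ with $\eps(x)\rar 0$ only yields $\Sigma(M-F(x))^{-n}=\frac{N}{N-1}\log x+o(\log x)$, hence $\Pi^+(F(x))=x^{-N/(N-1)}e^{o(\log x)}$, and a factor $e^{o(\log x)}$ need not be slowly varying, so you cannot conclude $\Pi^+(F(\lambda x))/\Pi^+(F(x))\rar\lambda^{-N/(N-1)}$. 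Your closing paragraph names exactly this obstacle ("matched calibration") but does not actually overcome it. What is needed — and what the paper uses, by running the proof of Corollary \ref{c:SVlog} verbatim with $R$ replaced by $F$ — is derivative-level control along the solution: the regular variation of $\exp\left((M-F(x))^{-n}\right)$ (Step 6 of the proof of Theorem \ref{t:SVlog}, transferred to $F$ via Theorem \ref{t:asympF}) plus Theorem 1.5.11 of \cite{BGT} give $xF'(x)/(M-F(x))^{n+1}\rar N/(k(N-1))$, which combined with $-\Pi^+_x/\Pi^+=\Psi^+_x/(\Psi^+-\mathrm{id})$ yields $x\frac{d}{dx}\log\Pi^+(F(x))\rar-\frac{N}{N-1}$, and Exercise 1.11.13 of \cite{BGT} then delivers the stated regular variation. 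The leading-order asymptotics (\ref{e:SVlog+F}) alone is strictly weaker than this derivative information and cannot close the argument.
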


\begin{remark}
	\label{r:Voldist} Although Corollary \ref{c:Voldist} appears rather technical, it uncovers the distribution of the total volume traded in equilibrium. Indeed, for $x>0$
	\[
	P(X^*>x)=P(F^{-1}(V)>x)=P(V>F(x))=\Pi^+(F(x)).
	\]  
	Thus, under the hypothesis of Corollary \ref{c:Voldist}, the tail distribution of equilibrium $X^*$ is regularly varying at infinity. That is, 
	\[
	P(X^*>x)=x^{-\zeta^+}s(x),
	\]
	where $s$ is a slowly varying function and
	\[
	\zeta^+:=\frac{\Psi^+_x(M)}{1-\frac{\Psi^+_x(M)}{N}}
	\]
	
	Moreover, since the aggregate order is given by $Y^*=X^*+Z$ and $Z$ and $V$ are independent, we have for $y>0$
	\[
	P(Y^*>y)=\int_{-\infty}^{\infty}dzP(X^*>y-z)q(\sigma, z)=\int_{-\infty}^{\infty}dzP(X^*>z)q(\sigma, y-z),
	\]
	which can easily be shown to be regularly varying at infinity with the same index. Thus,
\be \label{e:Volasympdist+}
P(Y^*>y)=y^{-\zeta^+}s(y), \quad y >0,
\ee
for some regularly varying $s$. In particular, if $V$ has light tails, i.e. $\Psi^+_x(M)=1$, $P(Y^*>y)$ is regularly varying of index $-\frac{N}{N-1}$.

 Analogous computations yield for the sell side
\be \label{e:Volasympdist-}
P(Y^*<-y)=y^{-\zeta^-}s(y), \quad y >0,
\ee
where
\[
\zeta^-:=\frac{\Psi^-_x(m)}{1-\frac{\Psi^-_x(m)}{N}}.
\]

\end{remark}
We have seen in Proposition \ref{p:IS} that the implementation shortfall is smaller than $F$. In view of Theorem \ref{t:asympF} we have a more precise relationship for large $x$.
\begin{corollary} \label{c:ISasymptotics}
Assume that  $N>1$, $-\infty<m<M<\infty$, $\Pi^+$ has a continuous derivative, and let $(h^*,X^*)$ be an equilibrium. Suppose that  $F^*$ is given by (\ref{e:foc}), with $h$ being replaced by $h^*$. Then,
\bea
M-IS^*(x)&\sim& \frac{N}{N+\rho^+}(M-F^*(x)), \qquad x \rar \infty,\\
IS^*(x)-m&\sim& \frac{N}{N+\rho^-}(F^*(x)-m), \qquad x \rar -\infty,
\eea
where $\rho^+$ and $\rho^-$ are as in Theorem \ref{t:RV}.
\end{corollary}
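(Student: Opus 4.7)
The plan is to start from the integral representation $IS^*(x) = N\int_0^1 F^*(xy)y^{N-1}\,dy$ given by Proposition \ref{p:IS}, rewrite the quantities $M - IS^*(x)$ and $IS^*(x) - m$ as integrals of the tails of $F^*$, and then apply Karamata's theorem, using the regular variation already established in Theorem \ref{t:asympF}.

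Concretely, since $N\int_0^1 y^{N-1}\,dy = 1$, I would first write
\[
M - IS^*(x) = N\int_0^1 (M - F^*(xy))y^{N-1}\,dy,\qquad IS^*(x) - m = N\int_0^1 (F^*(xy) - m)y^{N-1}\,dy.
\]
For the $x\rar\infty$ case, the substitution $u = xy$ yields
\[
M - IS^*(x) = \frac{N}{x^N}\int_0^x (M - F^*(u))u^{N-1}\,du.
\]
By Theorem \ref{t:asympF}, $u \mapsto M - F^*(u)$ is regularly varying of index $\rho^+$ at $\infty$, so $(M - F^*(u))u^{N-1}$ is regularly varying of index $\rho^+ + N - 1$. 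A direct algebraic manipulation gives
\[
N + \rho^+ \;=\; \frac{N - 1}{1 - \Psi^+_x(M)/N} \;>\; 0,
\]
since $N > 1$ and $\Psi^+_x(M) \leq 1 < N$; in particular $\rho^+ + N - 1 > -1$, so Karamata's theorem applies and delivers
\[
\int_0^x (M - F^*(u))u^{N-1}\,du \;\sim\; \frac{x^N (M - F^*(x))}{N + \rho^+},\qquad x\rar\infty.
\]
Combining the two displays yields the first asymptotic claim.

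The $x\rar -\infty$ case is entirely analogous. Setting $\tilde g(u) := F^*(-u) - m$ for $u > 0$, which by the definition of regular variation at $-\infty$ and Theorem \ref{t:asympF} is regularly varying of index $\rho^-$ at $+\infty$, the substitution $u = -xy$ in the representation of $IS^*(x) - m$ produces
\[
IS^*(x) - m \;=\; \frac{N}{(-x)^N}\int_0^{-x}\tilde g(u)u^{N-1}\,du,
\]
and the same Karamata argument, using $N + \rho^- = (N-1)/(1 - \Psi^-_x(m)/N) > 0$, gives the stated equivalence.

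The only technical point to address is that $u \mapsto (M - F^*(u))u^{N-1}$ is regularly varying only at infinity and merely bounded by $(M - m)u^{N-1}$ on any fixed bounded interval; but since $\rho^+ + N > 0$ (and likewise $\rho^- + N > 0$), the contribution to the integral from any bounded interval is of order at most $O(1)$, which is negligible compared with $x^N(M - F^*(x))$ as $x\rar\infty$. Thus there is no substantive obstacle: once Theorem \ref{t:asympF} is invoked, the result reduces to Karamata's theorem and the positivity of $N + \rho^{\pm}$.
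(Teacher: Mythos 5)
Your proof is correct, and it starts from the same place as the paper's: Proposition \ref{p:IS} plus the substitution $u=xy$, which reduces everything to the ratio $N\int_0^x (M-F^*(u))u^{N-1}du \big/ \left(x^N(M-F^*(x))\right)$. Where you diverge is in how this ratio is evaluated. The paper first notes that $\int_0^\infty (M-F^*(u))u^{N-1}du=\infty$, applies L'Hospital's rule, and then needs a statement about the derivative of a regularly varying function (Exercise 1.11.13 in Bingham--Goldie--Teugels applied to $1/(M-F^*)$) to identify $\lim_{x\rar\infty} xF^*_x(x)/(M-F^*(x))$. You instead invoke the direct half of Karamata's theorem for the integrand $(M-F^*(u))u^{N-1}$, which is regularly varying of index $N-1+\rho^+>-1$ because $N+\rho^+=(N-1)/(1-\Psi^+_x(M)/N)>0$; this gives the asymptotics of the integral in one step, needs no differentiation of $F^*$, and your handling of the bounded-interval contribution (negligible since $x^N(M-F^*(x))$ is regularly varying of positive index, hence tends to infinity) is exactly the point that also justifies the paper's use of L'Hospital. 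So your route is, if anything, slightly more economical and avoids the derivative-ratio lemma. One small bookkeeping step you skipped, which the paper states explicitly: Theorem \ref{t:asympF} is phrased for solutions of (\ref{e:F}), so before citing it you should note that, since $(h^*,X^*)$ is an equilibrium, $F^*$ does solve (\ref{e:F}) by Theorem \ref{t:eq} and the derivation in Section 3; with that sentence added, the argument is complete, and the $x\rar-\infty$ case is handled correctly by the same scheme.
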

Corllary \ref{c:ISasymptotics} shows that for large $N$ implementation shortfall and the marginal cost of trading are almost indistinguishable. This is even more pronounced when $M-F$ (resp. $F-m$) is slowly varying, i.e. $\rho^+=0$ (resp. $\rho^-=0$). Observe that $IS$ can be estimated from market data using the open limit order book governed by $h$, while the computation of $F$ depends also on the estimate of $N$.
\begin{remark}
	We have focused in this section the asymptotics of $F$ and $IS$. However, one also naturally wonders the asymptotic shape of the limit order book. Using our framework it is easy to show that $F$ and $h$ behave similarly for large values.  Indeed, recalling the measure $\nu$ defined in (\ref{e:measnu}), we   formally  obtain
	\bean
	\lim_{x \rar \infty}\frac{M-h( x)}{M-F(x)}&=&\lim_{x \rar \infty}\int_{-\infty}^{\infty}\nu(x,1,dy)\frac{M-\Psi^+(F(xy))}{M-F(x)}\\
	&=&\Psi^+_x(M) 1^{\rho^+}=\Psi^+_x(M),
	\eean
	using the mean value theorem and the continuity of the derivative of $\Psi^+$ together with the fact that the $F$ is regularly varying with index $\rho^+$ since the measure $\nu(x,1,dy)$ converges to the point mass at $1$ as $x \rar \infty$. In particular, the order book $h$ is also regularly varying with the same index $\rho^+$. 
\end{remark}
\section{Numerical studies}
This section is devoted to description of results obtained in previous version via a naive numerical search for a fixed point: Starting with an $F_0$, we compute $F_{n+1}=TF_n$ until the distance between successive iterations become negligibly small, where  $TF$ corresponds to the right side of (\ref{e:F}). Although the proofs of the statements concerning the existence of equilibrium and its asymptotics relied on a boundedness assumption, we shall also presents the solutions of the fixed point problems associated with equilibria with unbounded signals. 

Since $\sigma$ can be absorbed into the units for measuring equity in view of (\ref{p:scaling}),  we will set $\sigma=1$ in all numerical tests with no loss of generality in this section. 
	
The iteration converges exponentially for all cases considered below. We illustrate the convergence in Figure ~\ref{fig:Convergence} by showing the uniform distance $d(g_n, g_{n+1})$ between successive iterations for log-normal signals with $N=25$. 
\begin{figure}[h]
\includegraphics[scale=0.7, angle=0]{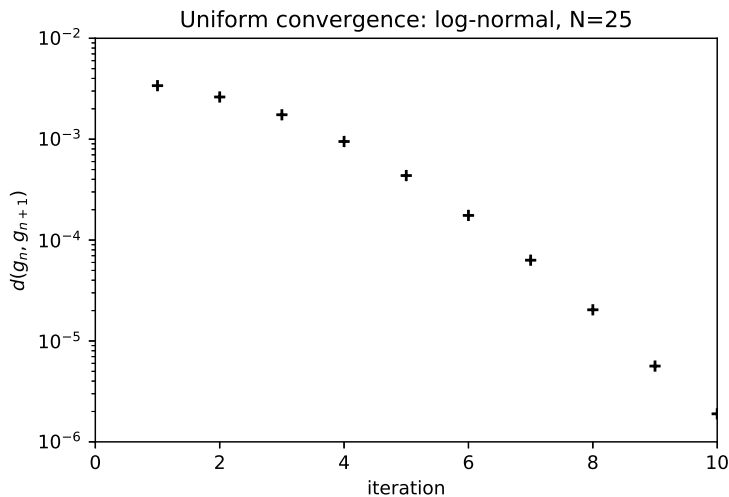}
\caption{The uniform distance between successive iterations falls off exponentially with the iteration number}
\centering
\label{fig:Convergence}
\end{figure}

In practice, the signals distribution is at least partly known to practitioners since it can be inferred from the options montage and the event calendar. For example, an insider may have advance knowledge of the result of clinical trials of a new treatment from a biopharmaceutical company, or the insider may have been tipped about a number to be released in a scheduled earnings call. The signals distributions in these two examples are quite different. In the case of clinical trial results, the signal is drawn from a bounded distribution since the outcome is bounded by total success and total failure. On the other hand, earnings surprises can be arbitrarily large. Earnings calls can be priced in a naive jump-diffusion model by assuming the surprise is drawn from a log-normal distribution; the jump variance is estimated by comparing the at-the-money implied volatility(ATMIV) for the two nearest options expirations, as proposed by Dubinsky and Johannes \cite{Dubinsky2006}.

The shape of market impact as a function of trade size is important to practitioners to address capacity and position sizing. Various functional forms have been explored in the literature (see, e.g.,  \cite{Bershova2013},  \cite{Gabaix2006}, \cite{PBimpact} and \cite{Torre97}), including $\sqrt{X}$ and $\log(1+a X)$. Rejecting either of these forms empirically is challenging due to three practical difficulties: (1) the spread and market impact terms are collinear for small orders, (2) signal-to-noise ratios are weak for executions that take a small percentage of market volume, and (3) for very large trades, order sizes are often increased if liquidity is available, or reduced if liquidity is hard to find, leading to bias in the data. In absence of clear empirical evidence, theoretical predictions for the shape of market impact provide valuable insight into how trading costs scale with trade size. One testable conclusion from this paper is that market impact should be different ahead of an event with an unusual distribution of signals, if market makers believe that an informed trader may have advance knowledge of the signal. 

We consider first the case of bounded signals that was the focus of the previous section.

\subsection{Bounded signals}

\subsubsection{Truncated Gaussian distribution}
\

If signals are drawn from the truncated Gaussian distribution with density $p(v) = {1 \over{erf({{M}\over{\sqrt{2\Sigma}}}) \sqrt{2\pi\Sigma}}} e^{-{{v^2}\over{2\Sigma}}}$ for $v\in[-M,M]$, the  numerical solution for equilibrium $F$ converges to the upper bound as $M-F(x) \sim 1/x^{{{N-1}\over{2N}}}$ in accordance with the predictions of  Theorem \ref{t:asympF}. We show $F$ and the theoretical prediction for its asymptotic behavior in Figure ~\ref{fig:TruncatedGaussianConvergenceToMax}.

\begin{figure}[h]
\includegraphics[scale=0.7, angle=0]{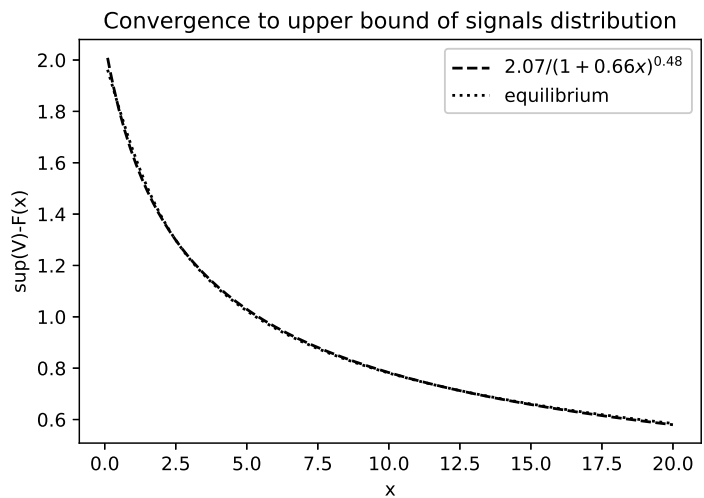}
\caption{The asymptotic behavior of $F$ is shown for the case where signals are drawn from a truncated Gaussian distribution}
\centering
\label{fig:TruncatedGaussianConvergenceToMax}
\end{figure}

\subsubsection{Logit-normal distribution}
\

If price is a probablility-weighted average over two possible outcomes $v_{\pm}=p_0\pm 1$, where the probability is a sigmoidal function of a Gaussian generator potential $g$ as $p={1\over{1+e^{-g}}}$, the signals distribution is the logit-normal distribution with density $p(v)=\sqrt{{2}\over{\pi \Sigma}} {{e^{-{{ln^2({{1-v} \over {1+v}})} \over {2\Sigma}}}} \over {1-v^2}} $. This distribution has support in $[-1,1]$. 

We show the equlibrium solution for $F$, the  order book $h$ and the implementation shortfall for logit-normal signals, in Figure ~\ref{fig:BGaussian}.

\begin{figure}[h]
\includegraphics[scale=0.7, angle=0]{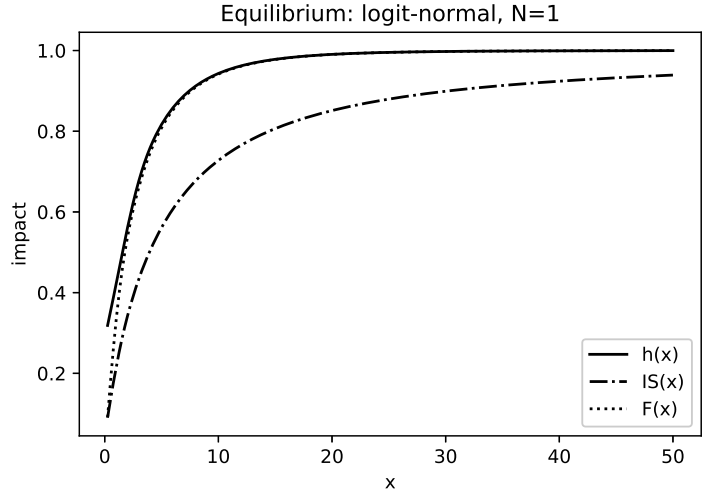}
\caption{Equilibrium impact and shortfall for logit-normal signals for the case of an insider ($N=1$)}
\centering
\label{fig:BGaussian}
\end{figure} 

Note that the logit normal distribution does not satisfy the hypothesis of Theorem \ref{t:asympF}. Thus, our theory cannot predict the asymptotics of $F$ for this distribution. However, the following formal arguments yield the asymptotics that seem to be verified by the numerical experiments.  Observe that for $x>>\sigma$, \ref{e:foc} becomes 
\bea \label{e:equilibriumGaussNoise}
	F(x) \approx {1\over N} h(x) + {{N-1}\over {N x}} \int_0^x h(u) du
\eea

It follows that $F(x)+x F'(x) \approx h(x)+{1\over N} x h'(x)$. Moreover, for large values of $x$, we roughly have $h(x)\approx \Psi^+(F(x))$. Let us consider the large $N$ limit and drop the $1/N$ term. Using the approximation $erfc(x) \approx {{e^{-x^2}} \over {x \sqrt{\pi}}} (1 - {1 \over {2 x^2}})$ and expanding to first order in $1/\log(x)$ we find that asymptotically
\[
xF'=\frac{\Sigma (1-F)}{log(1-F)},
\]
whihc yields $F \to 1 - e^{-k \sqrt{\log(x)}}$

The numerical solution shown in Figure ~\ref{fig:BoundedGaussianConvergenceToMax} for $N=25$ is consistent with this asymptotic form. 

\begin{figure}[h]
\includegraphics[scale=0.7, angle=0]{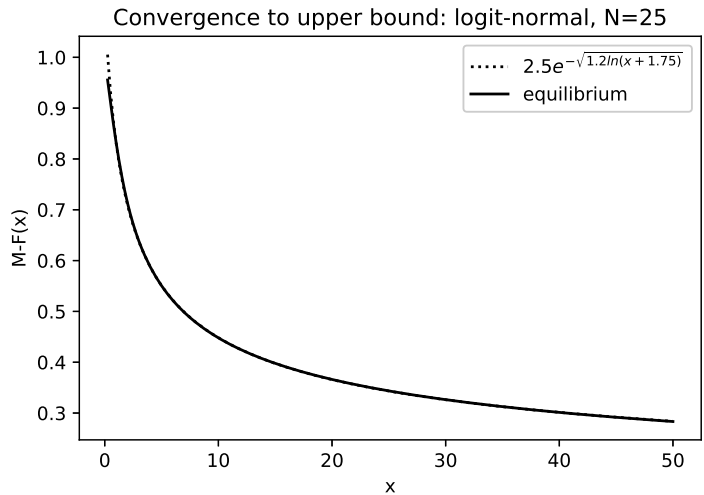}
\caption{Convergence to the upper bound for $M=1$ when signals are drawn from a logit-normal distribution and shared with $N=25$ traders.}
\centering
\label{fig:BoundedGaussianConvergenceToMax}
\end{figure} 

\subsection{Unbounded signals}

In what follows, we place emphasis on unbounded signals; that is, when the support of $V$ is unbounded. Although we do not have a theoretical justification for the existence of a solution for (\ref{e:F}) in the case of unbounded signals, we were able to arrive at numerical solutions via the above numerical search. 

The asymptotic behavior of $F(x)$ for large signals will depend on the tail behaviour of the distribution of $V$. Assuming the interchange of limits and integrals in the proof of Theorem \ref{t:asympF}, we can show that
\[
\gamma(x)= \frac{\Psi^+_x(\infty)}{N}\gamma(x) + \frac{N-1}{Nx}\Psi^+_x(\infty)\int_0^x \gamma(y)dy, \quad x>0,
\]
where $\gamma(x):=\lim_{\alpha \rar \infty}\frac{F(\alpha x)}{F(\alpha)}$ in case of $M=\infty$. Solution of the above equation immediately yields that, for $N>1$, $F$ is regularly varying at $\infty$ of order $\rho^+$, where
\be \label{e:rho+inf}
\rho^+=\frac{\Psi^+_x(\infty)-1}{1-\frac{\Psi^+_x(\infty)}{N}}.
\ee
Observe that when $M=\infty$, $\Psi^+_x(\infty)\geq 1$ in contrast to the bounded case, where $\Psi^+_x(M)\leq 1$. Since $\rho^+$ must be non-negative, this places the restriction on $N$:
\be \label{e:Nconstraint}
N\geq \Psi^+_x(\infty)
\ee
Thus, we conjecture that (\ref{e:Nconstraint}) is a necessary condition for the existence of equilibrium when $M=\infty$.  Observe that for a fat tailed unbounded distribution,  $\Psi^+_x(\infty)>1$. Thus, a sufficient competition among insiders is necessary for the equilibrium to exist. Such a condition is always satisfied in the bounded case since $\Psi^+_x(M)\leq 1$ for $M<\infty$.

As in the bounded case $F$ will be slowly varying at infinity when $\Psi^+_x(\infty)=1$. In this case, if we assume
\be \label{e:Psi+unbd}
\lim_{x\rar \infty} (\Psi^+(x)-x)x^{n-1}=\frac{1}{k}
\ee
for some $k>0$ and $n\geq 1$, formal calculations yield
\be \label{e:Flogunbd}
F(x)\sim \left(\frac{N}{N-1}\frac{n}{k}\right)^{\frac{1}{n}}(\log x)^{\frac{1}{n}}, \quad x\rar \infty.
		\ee

Tables \ref{T:powerlaw} and \ref{T:loglaw} summarise the predicted asymptotics for a class of distributions commonly used in the literature and practice. 
		
\begin{table}[tb]
	\caption{Distributions with power-law impact}
	\label{T:powerlaw}\vspace{1mm}
	\par
	\begin{center}
		
		\bgroup
		\def\arraystretch{1.5}
		
		\begin{tabular}{cccc}
			 Distribution & Density & $\rho^+$ \\
			\hline
			Beta prime & $x^{\lambda-1}(1+x)^{-(\lambda +\alpha)}$	 &$\left(\frac{N-1}{N}\alpha -1\right)^{-1}$\\
			\hline
			Fr\'echet & $(x-\beta)^{-(1+\alpha)}\exp\left\{-\left(\frac{x-\beta}{s}\right)^{-\alpha}\right\}$ &$\left(\frac{N-1}{N}\alpha -1\right)^{-1}$\\
			\hline
			Lomax &$\left(1+\frac{x}{\lambda}\right)^{-(\alpha+1)}$	 &$\left(\frac{N-1}{N}\alpha -1\right)^{-1}$\\
			\hline
			Pareto &$x^{-(\alpha+1)}$ &$\left(\frac{N-1}{N}\alpha -1\right)^{-1}$\\
			\hline
		Student &$\left(1+\frac{x^2}{\alpha}\right)^{-(\alpha+1)/2}$ &$\left(\frac{N-1}{N}\alpha -1\right)^{-1}$\\
		\hline
		\end{tabular}
		
		\egroup
		
	\end{center}
	\par
	\begin{spacing}{1.0}
		\footnotesize In above probability densities are given up to a scaling factor and implicit constraints are enforced to ensure they are well defined with finite mean.  Moreover, $N>\frac{\alpha}{\alpha-1}$ in all of the above. 
	\end{spacing}\vspace{2mm}
\end{table}
\begin{table}[tb]
	\caption{Distributions with logarithmic impact}
	\label{T:loglaw}\vspace{1mm}
	\par
	\begin{center}
		
		\bgroup
		\def\arraystretch{1.5}
		
		\begin{tabular}{ccc}
			Distribution & Density  & Asymptotics \\
			\hline
			Exponential& $\exp(-\lambda x)$	& $\frac{N}{\lambda(N-1)}\log x$\\
			\hline
			Gaussian& $\exp(-(x-\mu)^2/\Sigma)$	& $\sqrt{\frac{2\Sigma N}{N-1}}\sqrt{\log x}$\\
			\hline
		Inverse Gaussian & $x^{-3/2}\exp\left(-\frac{\lambda (x-\mu)^2}{2\mu^2 x}\right)$	& $\frac{2N\mu^2}{\lambda(N-1)}\log x$\\
			\hline
			Normal Inverse Gaussian &$\frac{K_1(\lambda \zeta(x))}{\pi \zeta(x)}\exp(\delta \gamma+\beta(x-\mu)$ &$\frac{N}{(N-1)(\lambda+\beta-1)}\log x$\\
			\hline
			Weibull& $x^{d-1}\exp(-\lambda^p x^p)$	& $\left(\frac{N}{\lambda^p(N-1)}\right)^{1/p}(\log x)^{1/p}$\\
			\hline
		\end{tabular}
		
		\egroup
		
	\end{center}
	\par
	\begin{spacing}{1.0}
		\footnotesize In above probability densities are given up to a scaling factor and implicit constraints are enforced to ensure they are well defined with finite mean.  Moreover, $\zeta(x):=\delta^2 +(x-\mu)^2$ for the Normal Inverse Gaussian distributiuon.
	\end{spacing}\vspace{2mm}
\end{table}

\subsubsection{Gaussian signals}

We assume that the mean of $V$ equals $0$. The numerical solution is shown in Figure ~\ref{fig:GaussianSolution25} together with the $\sqrt{\log{x}}$ asymptotic behavior.

\begin{figure}[h]
	\includegraphics[scale=0.7, angle=0]{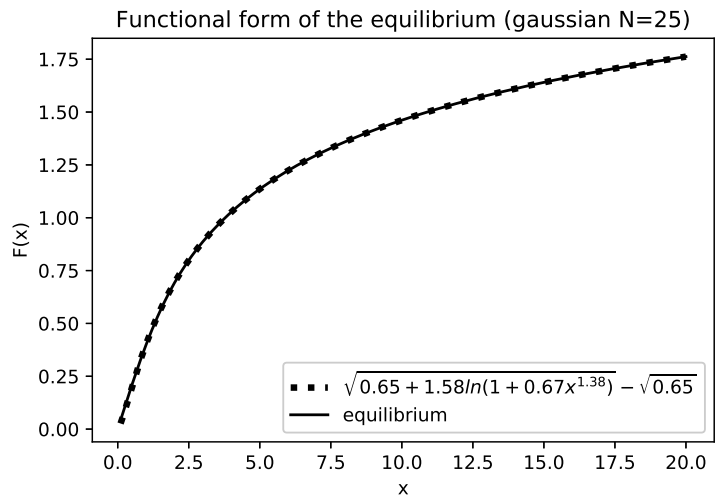}
	\caption{Functional form of the equilibrium for Gaussian signals, for $N=25$}
	\centering
	\label{fig:GaussianSolution25}
\end{figure}

\subsubsection{Log-normal signals}

For a log-normal distribution, the mean is an arbitrary scale factor which we set to $1$ and, thus, the density is $p(v)={{1}\over{\sqrt{2\pi\Sigma}v}} e^{-{{(ln(v)+\Sigma/2)^2} \over {2\Sigma}}}$. We choose a large signal variance $\sqrt{\Sigma}=10\%$ in our numerical experiments below, illustrative of an earnings announcement for a high-volatility name. Moreover, we translate the distribution by $1$ so that the mean is $0$. The equilibrium solutions for $h,F$ and $IS$  are shown for various values of $N$ in Figure ~\ref{fig:LogNormalSolutions}. 

\begin{figure}[h]
\includegraphics[scale=0.7, angle=0]{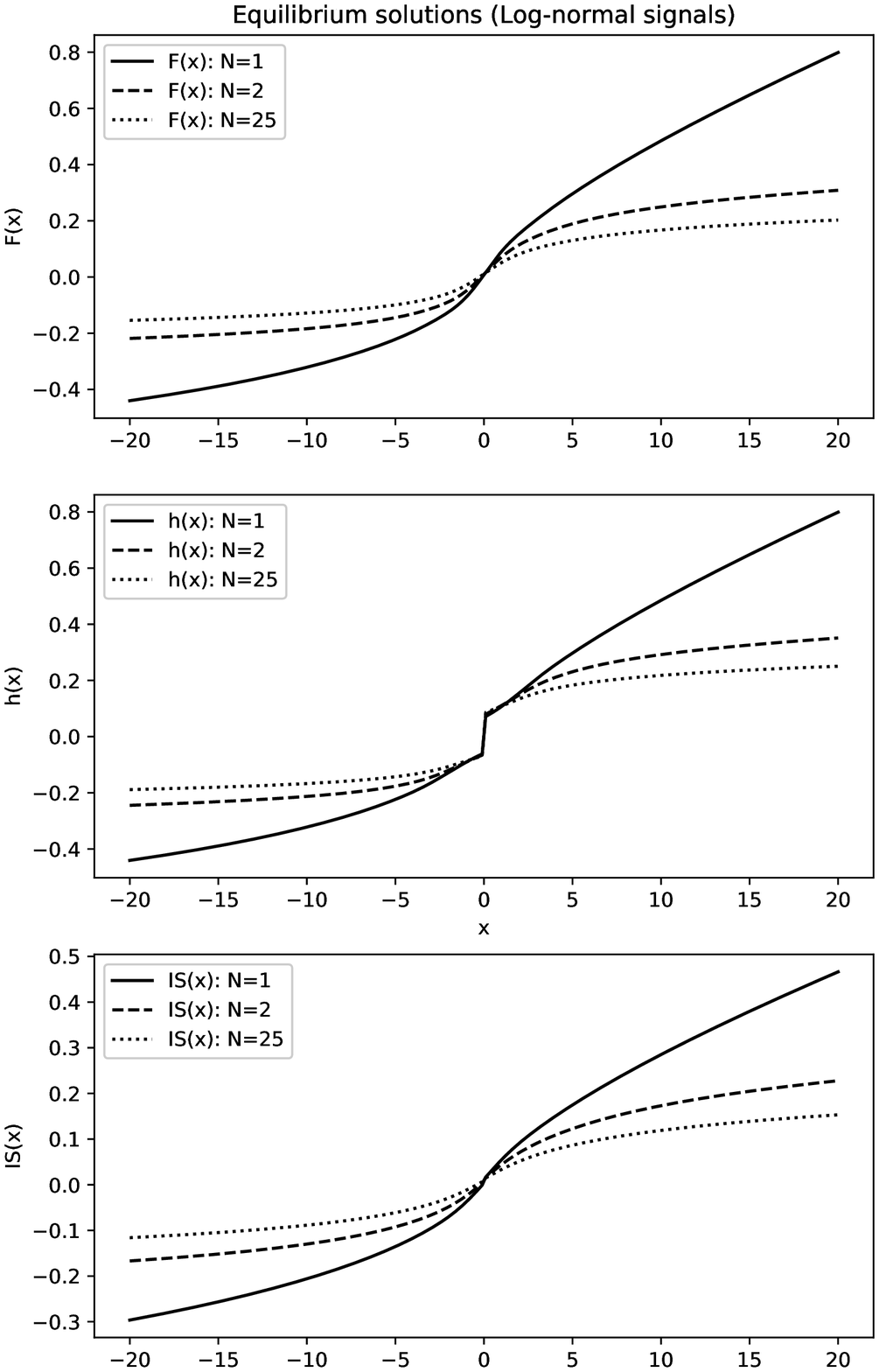}
\caption{Equilibrium solutions for log-normal signals for the cases of an insider ($N=1$), and shared signals with $N=2$, $N=25$}
\centering
\label{fig:LogNormalSolutions}
\end{figure}

Observe that the equilibrium $F$ is not symmetric around its mean: a market maker who is short the stock runs the risk of unbounded losses, whereas for a long position the maximum loss is always bounded since price cannot fall below zero. 

The log-normal distribution does not satisfy the conditions of Theorem \ref{t:asympF}. Thus, we do not have a theoretical prediction for the asymptotic market impact. However,  we find numerically that the implementation shortfall fits $\sqrt{\log(1+a x^{\delta})}$. Asymptotically, both the shortfall and $F$ are  consistent with $\sqrt{\log(x)}$. We note that the large-size behavior $\sqrt{log(x)}$ is more concave than both the square root commonly used by practitioners and also the $log(1+x)$ model that has been suggested in some empirical studies, e.g. \cite{Bershova2013}.
 
\begin{figure}[h]
\includegraphics[scale=0.7, angle=0]{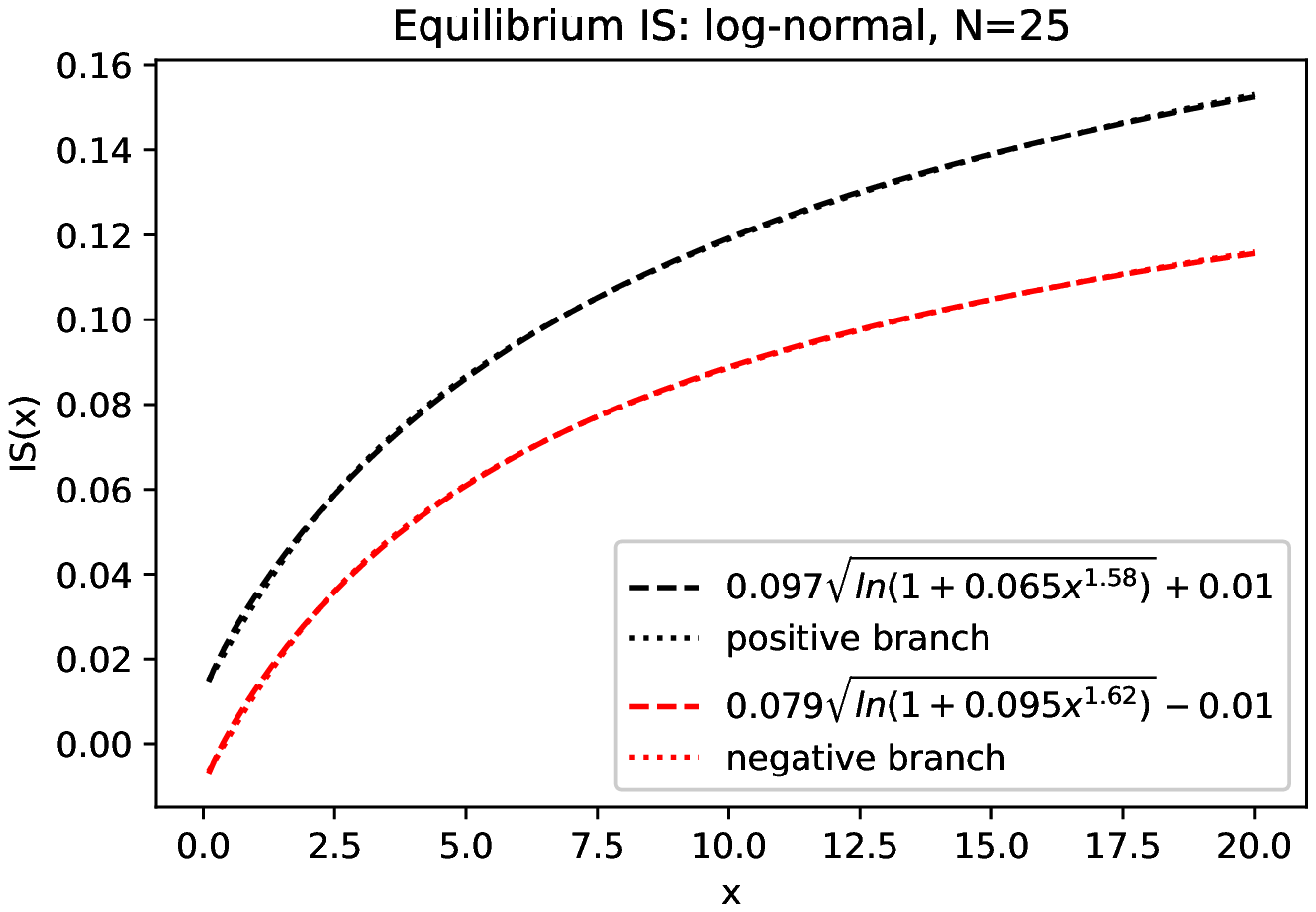}
\caption{Functional form of the equilibrium for Log-normal signals, for $N=25$. The log-normal distribution is not symmetric and this results in a notable difference between the positive and negative branches for cost as a function of trade size.}
\centering
\end{figure}

The asymmetry between positive and negative signals is clearly visible in the figure as we chose a rather large signal with a standard deviation of $10\%$. 

The aggregate profit is a decreasing function of the number of informed investors and we show the profit as a function of trade size below for various values of $N$ in Figure ~\ref{fig:LognormalSolutionProfit}. Moreover, Figure ~\ref{fig:LognormalSpreadVsN} shows how the spread depends on the number of informed investors. 

\begin{figure}[h]
\includegraphics[scale=0.7, angle=0]{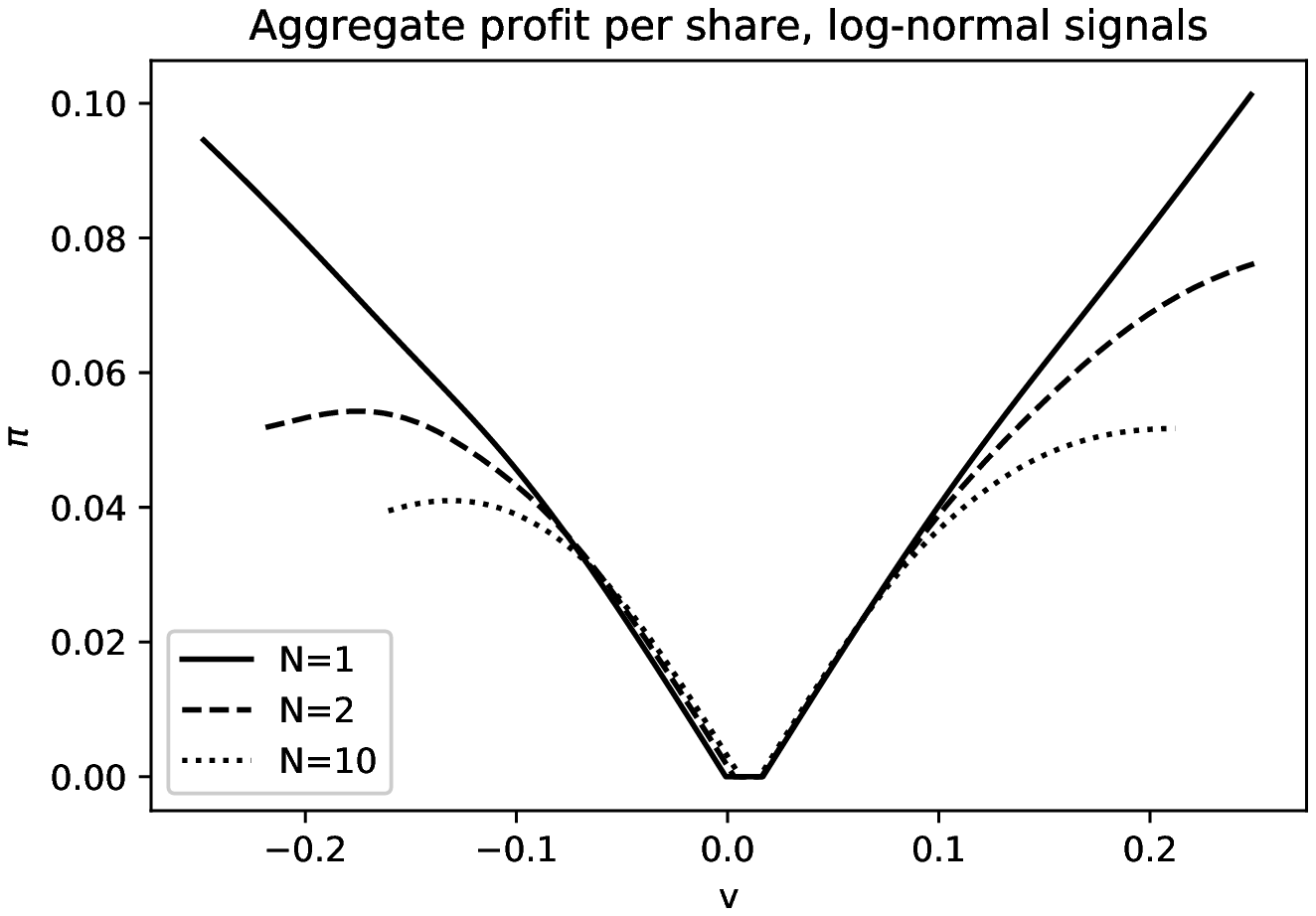}
\caption{Aggregate investor profit per share for Log-normal signals}
\centering
\label{fig:LognormalSolutionProfit}
\end{figure}

\begin{figure}[h]
\includegraphics[scale=0.7, angle=0]{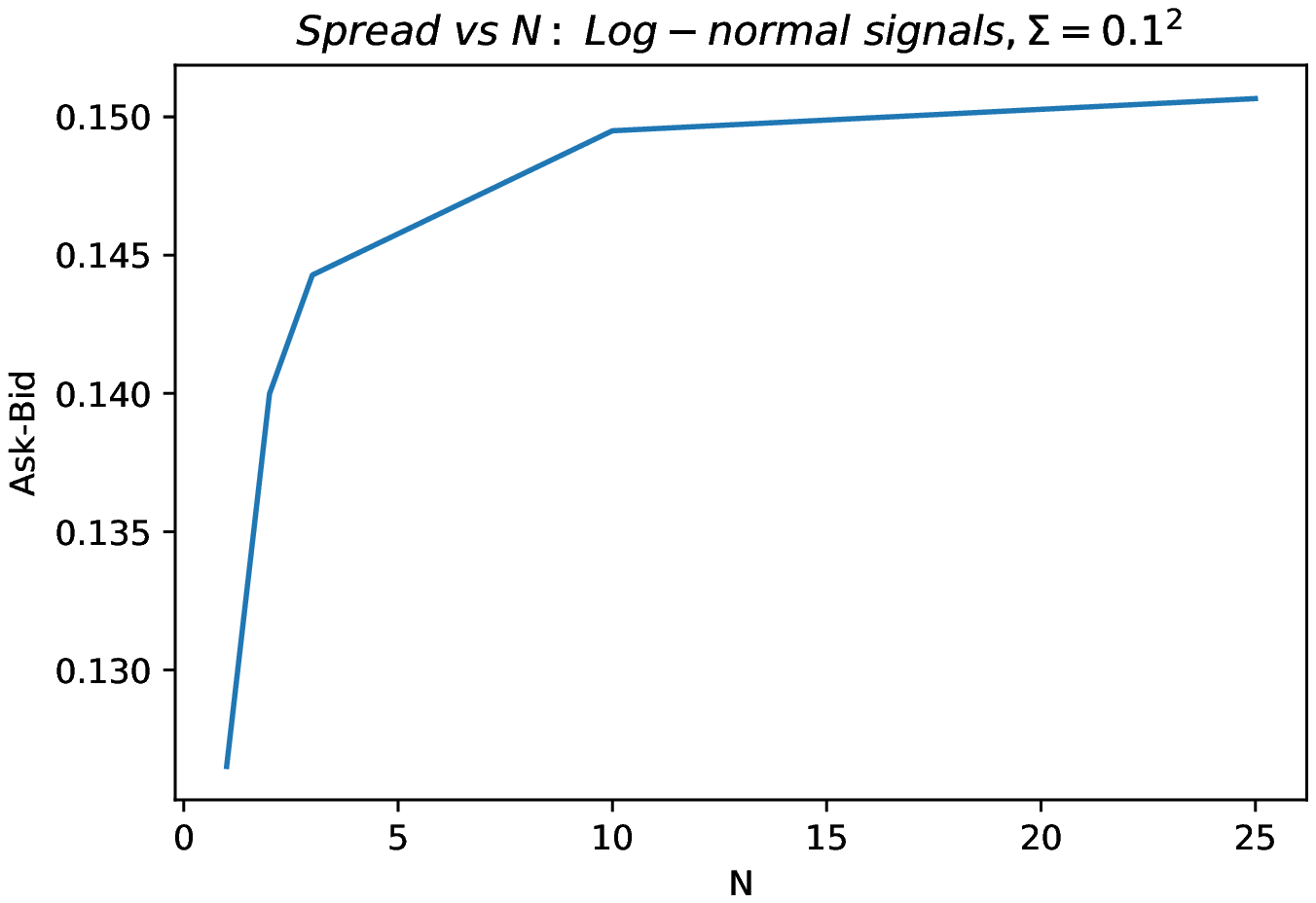}
\caption{The spread is shown as a function of the number of informed investors, for log-normal signals with $\sqrt{\Sigma}=10\%$.}
\centering
\label{fig:LognormalSpreadVsN}
\end{figure}

The greater positive tail mass for the log-normal signals implies that large positive signals yield a greater profit than for Gaussian signals with the same $\Sigma$, and vice-versa, negative signals yield a smaller profit in the log-normal case (Figure ~\ref{fig:LognormalVsGaussianProfit}).

\begin{figure}[h]
	\includegraphics[scale=0.7, angle=0]{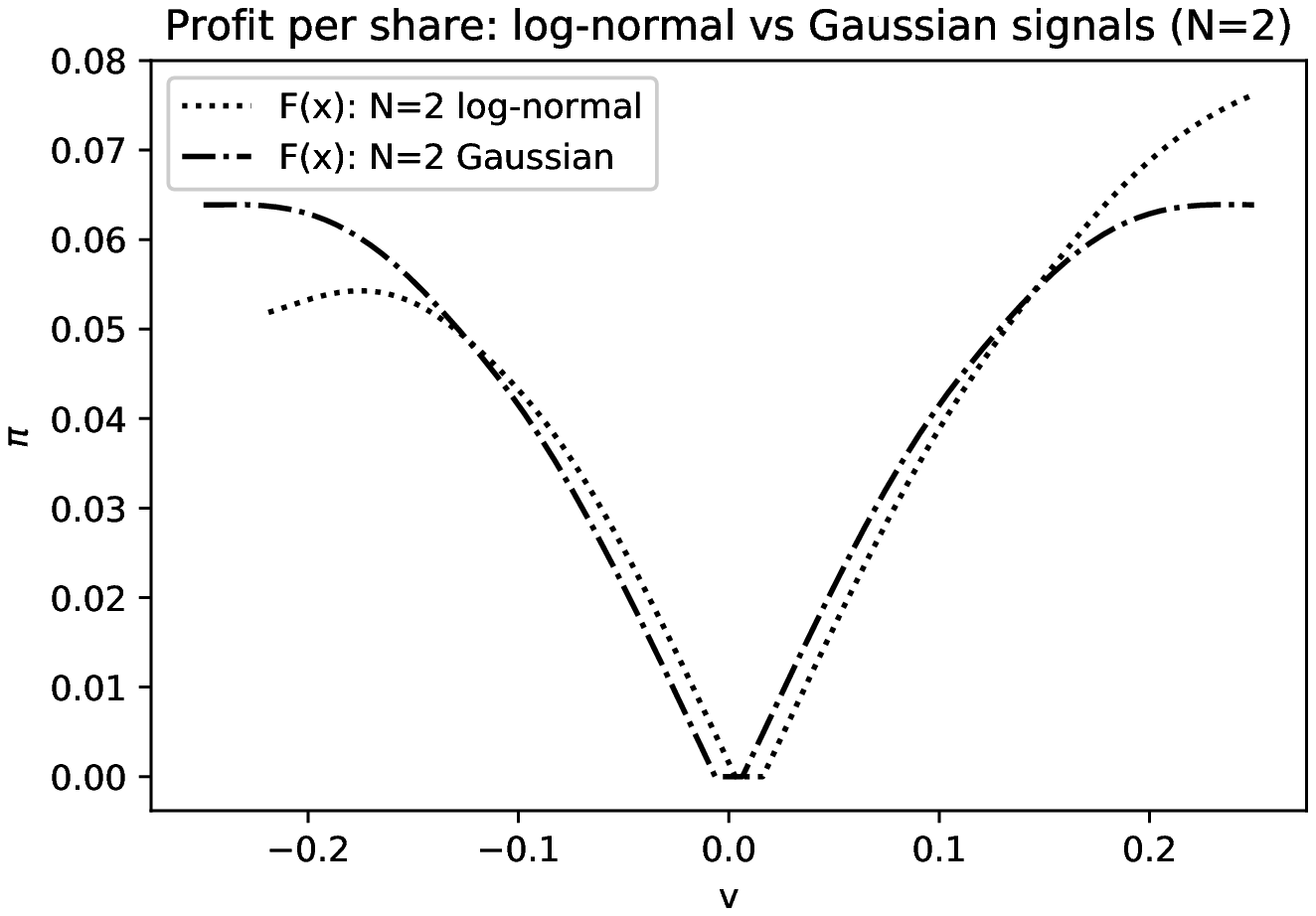}
	\caption{Aggregate profit per share for Gaussian signals vs. Log-normal, for $N=2$}
	\centering
	\label{fig:LognormalVsGaussianProfit}
\end{figure}

\subsubsection{Student signals}

We explore the effect of fat tails in the signal distribution next. We consider the case where signals drawn from a Student t-distribution with $\alpha=3$, $p(v) \sim {{1} \over {(1+{{v^2} \over {\Sigma}})^2}}$. This is reminiscent of some empirical studies such as Plerou \cite{plerou1999}. However, we note that we are assuming a Student distribution of {\it arithmetic} returns. The power-law tail of geometric returns in Plerou's study implies an infinite expected price. 

In view of Table \ref{T:powerlaw}  the expected asymptotics is $F(x)\sim X^{1/(\alpha-1-{\alpha\over N})}$. Moreover, our conjecture predicts no equilibrium when $N\leq \frac{\alpha}{\alpha -1}=\frac{3}{2}$. Indeed, no numerical solution for $F$ can be found  when $N=1$. The equilibrium asymptotically parabolic for $N=2$ (theoretical $\rho^+=2$), linear for $N=3$ (theoretical $\rho^+=1$) and concave for $N \geq 4$. The numerical solutions are shown in Figure ~\ref{fig:StudentSolutions}. For $N=25$, the asymptotic exponent is $F(x)\sim x^{25/47}$ according to our theory. The numerical solution is compared to this prediction in Figure ~\ref{fig:Student25_47}.

\begin{figure}[h]
\includegraphics[scale=0.7, angle=0]{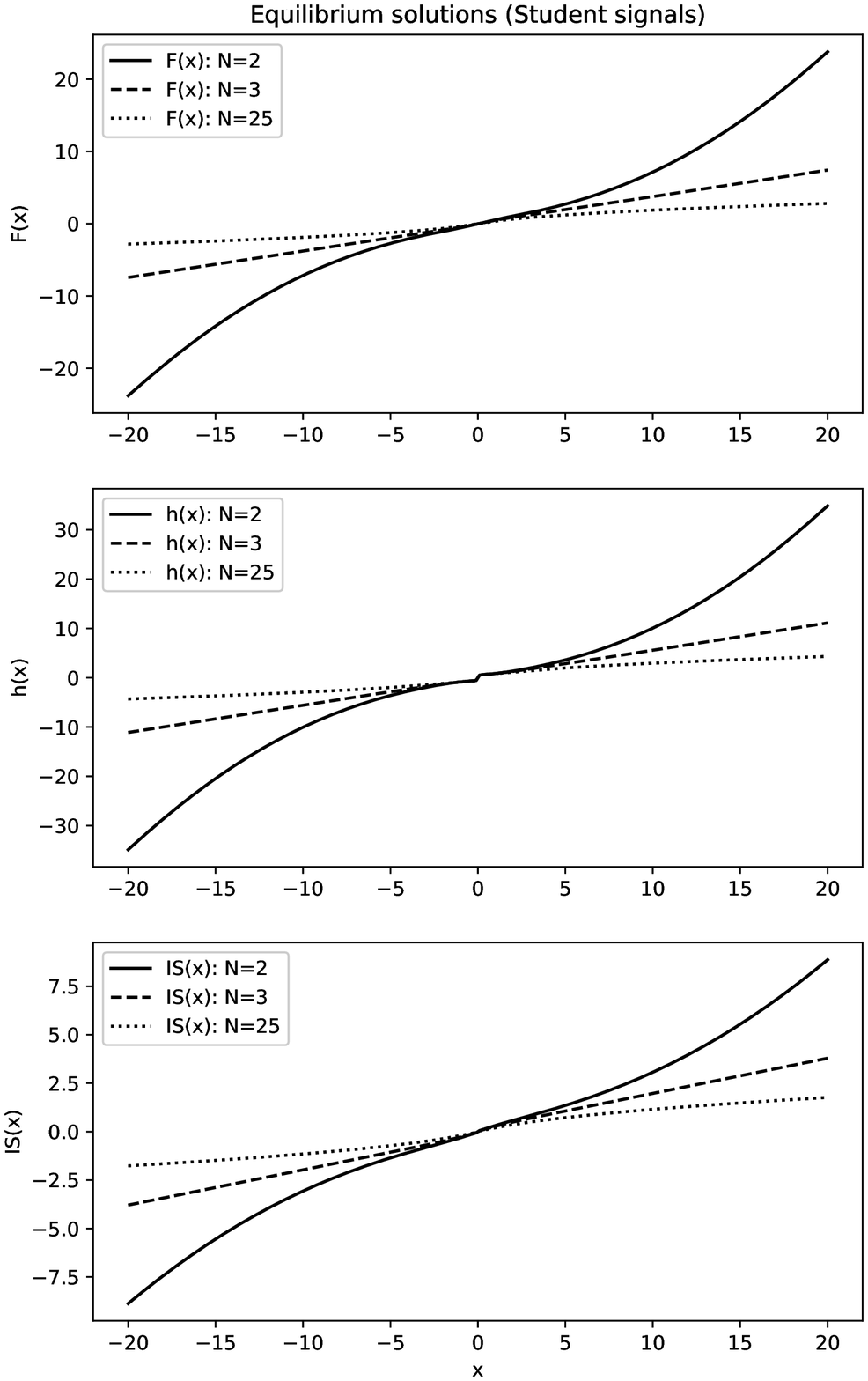}
\caption{Equilibrium solutions for Student signals for the cases $N=2$, $N=3$ and $N=25$}
\centering
\label{fig:StudentSolutions}
\end{figure}

\begin{figure}[h]
\includegraphics[scale=0.7, angle=0]{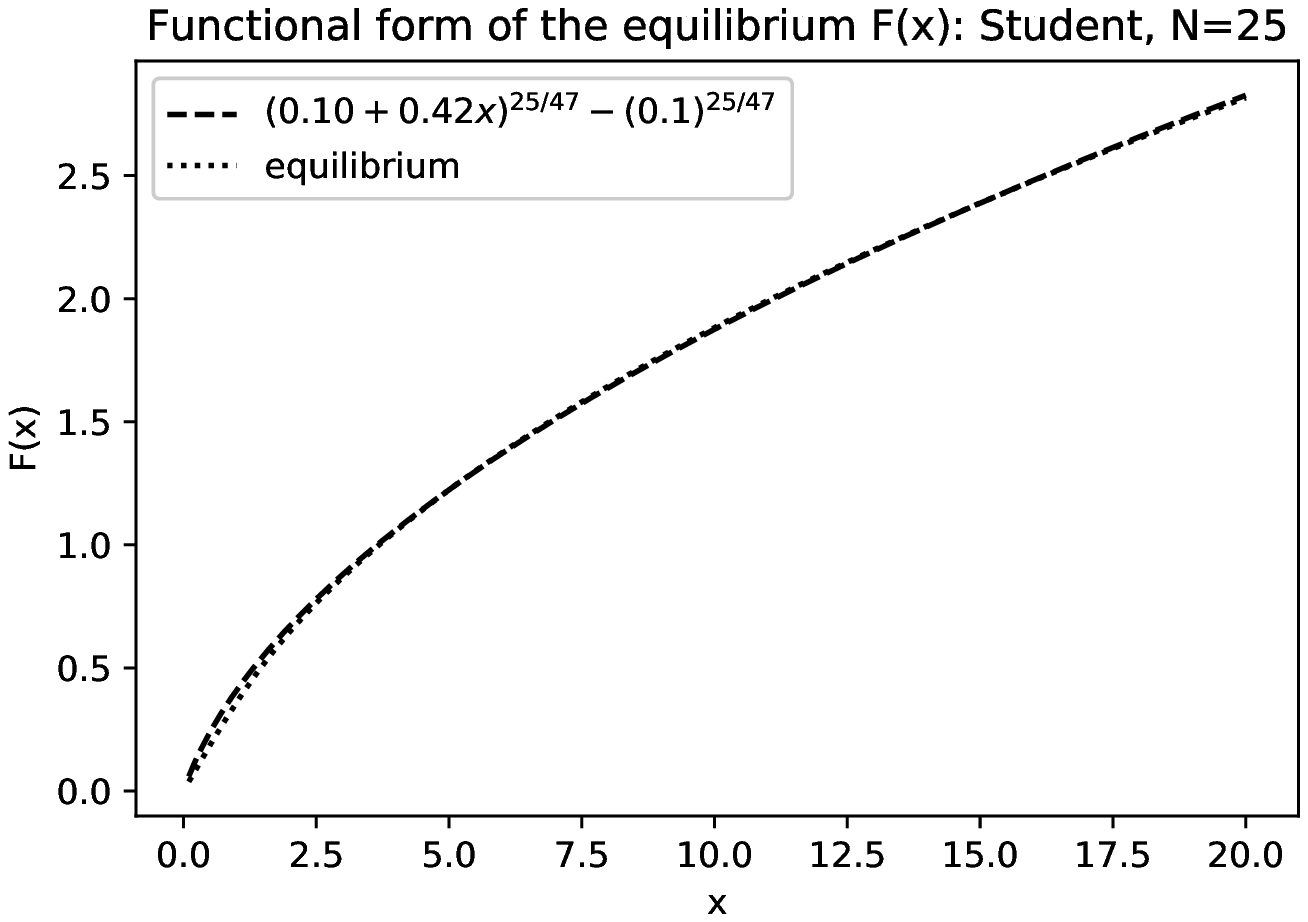}
\caption{Functional form of the equilibrium for Student signals for $\alpha=3$, $N=25$}
\centering
\label{fig:Student25_47}
\end{figure}

The case of power-law tailed signal distributions was considered previously by Farmer et al. in the case of perfect competition between insiders \cite{farmer2013}. One can view their model as the limiting case of the one considered herein as $N\rar \infty$ in case of a Pareto-tailed distribution with exponent $3$. 

\subsubsection{Exponential signals}

We are not aware of any situation that gives rise to an exponential distribution for an asset's price\footnote{A possible exception is Kou and Wang \cite{Kou} who consider an asymmetric double exponential distribution for jumps in the asset price}. However, the case is of interest to illustrate the effect of extreme asymmetry. For exponential signals, the probability density function is given by $p(v)={1\over{\lambda}} e^{-\lambda v}$ for $v>0$. We take $\lambda=1$ and translate the distribution by $1$  so that the mean equals $0$ for the numerical solutions. The numerics indicate that the equilibrium is asymptotically linear for the case of monopolistic insider and concave in case of competitive investors, as shown in Figure ~\ref{fig:ExponentialSolutions}.

\begin{figure}[h]
\includegraphics[scale=0.7, angle=0]{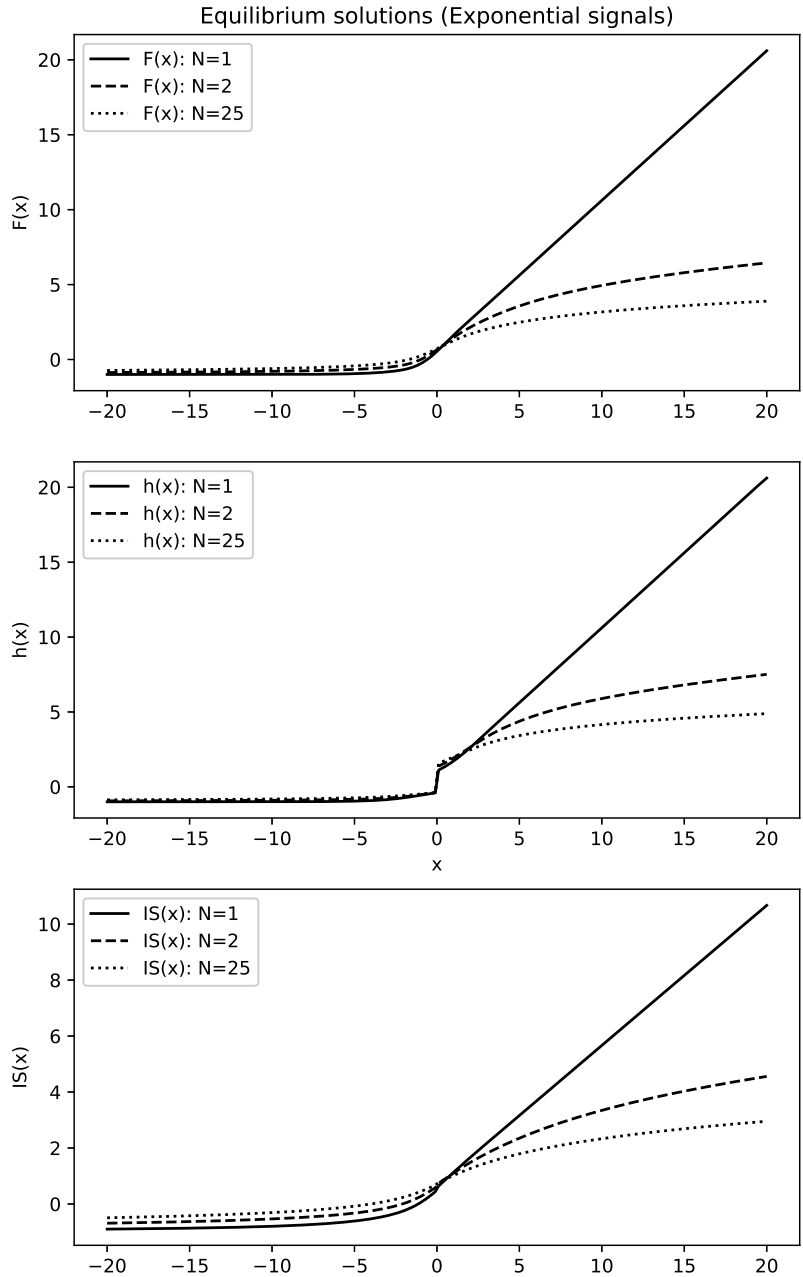}
\caption{Equilibrium solutions for Exponential signals for the cases $N=1$, $N=2$ and $N=25$}
\centering
\label{fig:ExponentialSolutions}
\end{figure}

\section{Same-price liquidation}

Up to now, we assumed that the dealer had an initial position $Z$ and liquidated his total position $X+Z$ after trading with the insider via the limit order book and argued that a Bertrand competition leads to a price to the insider given by (\ref{e:insidercost}). An alternative framework also of interest to practitioners is one where portfolio managers and noise traders submit their orders to an aggregator (institutional trading desk), which merges the orders into a block (``metaorder", in the literature), liquidates $X+Z$ for some average price, and allocates shares with the same average price to noise traders and insiders. We refer to this framework as {\em same-price liquidation}. The expected profit of the insiders in this case is 

\be \label{e:insidercostDemocratic}
E^v \big[ V X- {X \over {X+Z}} \int_0^{X+Z} h(y)dy \big].
\ee

In this case the corresponding first order condition for the maximisation problem is given by $V=F(X^*)$, where 
\[
F(x)=E^v\left[\frac{x}{Z+x}\frac{h(Z+x)-\bar{h}(Z+x)}{N}+\bar{h}(Z+x)\right],
\]
where $\bar{h}(x)=\frac{1}{x}\int_0^x h(y)dy$ and $h$ is given by the tail expectation as above.  

The problem with this first order condition is that it is not clear whether it yields a maximum as $F$ defined above is not necessarily increasing, i.e. we may not  have a concave function to maximise in contrast to the problem studied in previous section.

However, if one assumes that $F$ is increasing and obtains the corresponding integral equation, one can still get a fixed point. Moreover, our numerical solutions always  suggest an increasing solution yielding a `numerical' proof of the existence of equilibrium. 

 The solutions are similar in form and share the same asymptotic behaviour for both bounded and unbounded signal distributions. We show as an example the case of log-normal signals in Figure ~\ref{fig:LognormalSolutionsD}.

\begin{figure}[h]
\includegraphics[scale=0.7, angle=0]{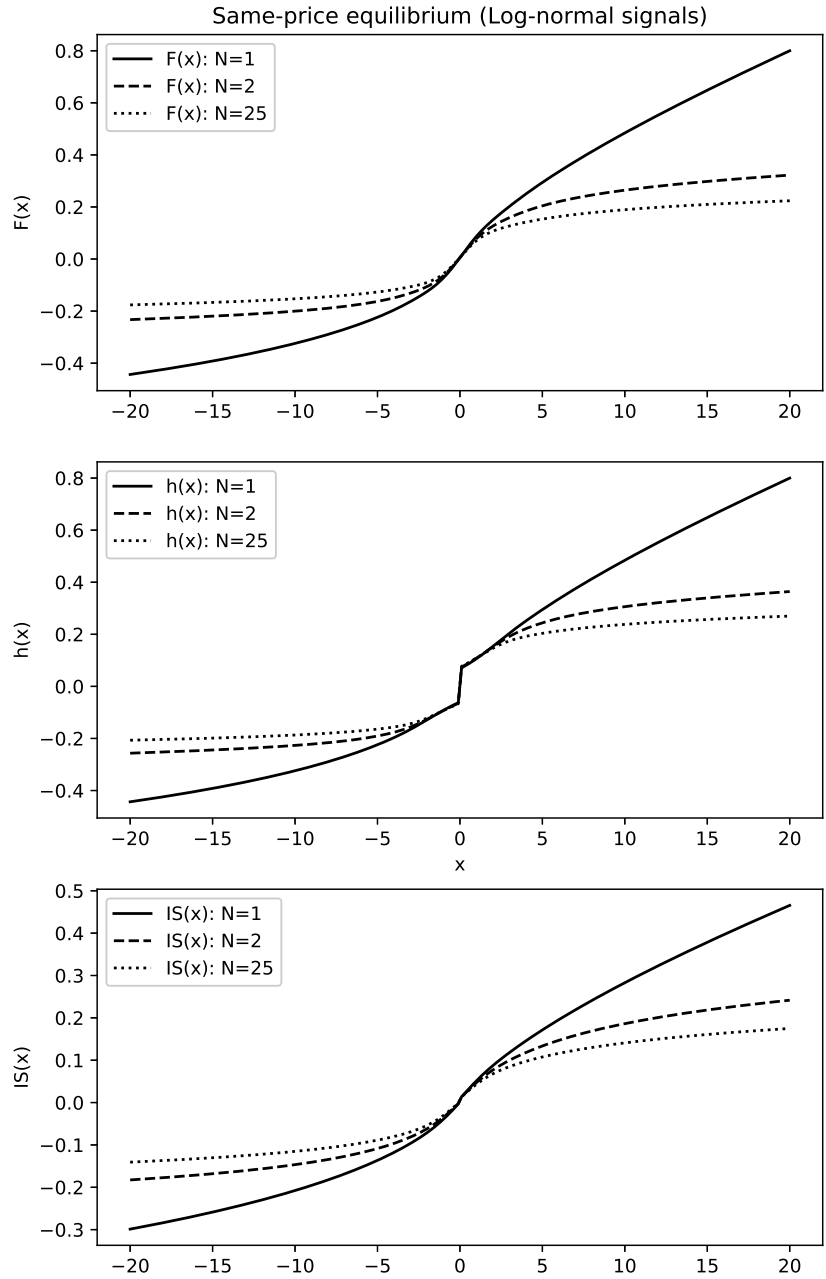}
\caption{Equilibrium solutions for same-price liquidation with log-normal signals for the cases of an insider ($N=1$), and shared signals with $N=2$, $N=25$}
\centering
\label{fig:LognormalSolutionsD}
\end{figure}

Figure ~\ref{fig:SamePriceVsCashDeskGaussian} compares the same-price liquidation model to the dealer inventory model in the case of Gaussian signals. The dotted lines represent the same-price liquidation equilibria. Market impact is somewhat greater with same-price liquidation than for the dealer inventory model, for $N>1$. For the insider case, the two are essentially identical.

\begin{figure}[h]
\includegraphics[scale=0.5, angle=0]{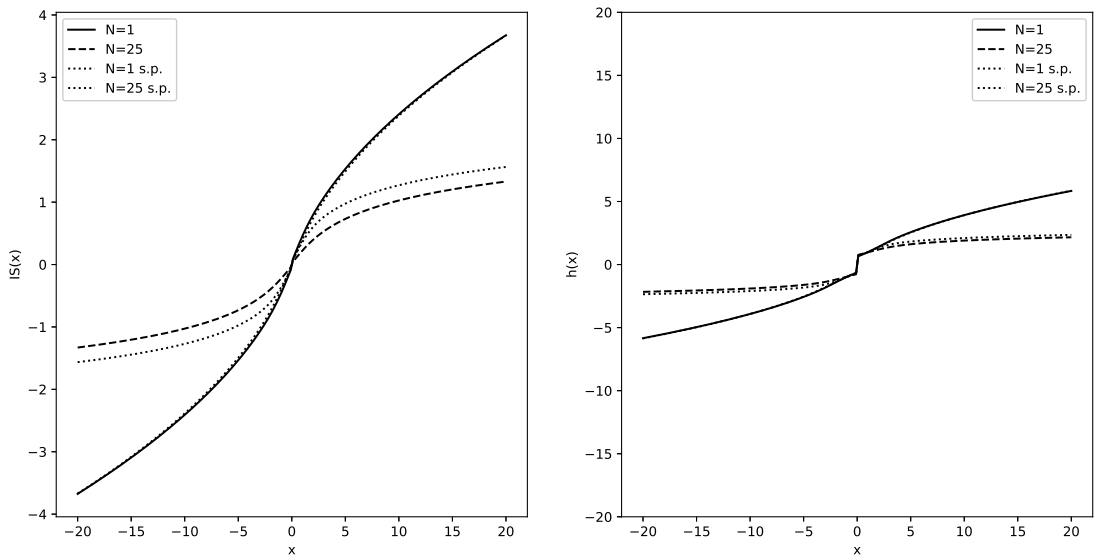}
\caption{Equilibrium for Gaussian signals, comparing the same-price and cash desk liquidation models ($N=1, 2, 25$). Dotted lines represent the same-price liquidation equilibrium}
\centering
\label{fig:SamePriceVsCashDeskGaussian}
\end{figure}

\section{Conclusion}
In this article, we explored how private information is transferred into the market price through a limit order book. Since empirical data on very large trades is sparse and often biased, it is important to develop a theoretical understanding of the process in order to discriminate between various proposals for the shape of the impact function.

We proposed an asymmetric information based equilibrium model where informed investors draw a signal and send their orders to a dealer with an initial position who executes at the net cost to liquidate the aggregate amount against a limit order book.  Unlike the earlier static equilibrium models developed for limit order markets, the informed traders' positions are determined endogenously in equilibrium.  We showed that solutions exist in the case of bounded signals and discussed properties of the equilibrium including the asymptotic behavior of the implementation shortfall for large trades. 

Our results provide the micro-foundations for a  large number of empirical findings  including those on price impact and volume. We found that market impact is asymptotically a power of trade size if the signal has fat tails, whereas the impact becomes of the form $(\log x)^{1/p}$ for some $p>0$ for lighter tails.  

Our numerical experiments show that our results remain  valid for unbounded signals with analogous  impact asymptotics. Moreover, for fat-tailed signal distributions, an equilibrium only exists if there is a sufficient amount of competition.  In the particular case of Student distribution,  if the tail exponent is $\alpha=3$, there is no equilibrium in the case of a monopolistic insider, and the shape of market impact tends to a square root in the limit where the number of informed investors is large, $N\rar \infty$.  Although we do not have an analytic proof, the bid-ask spread seems to be an increasing and bounded function of the number of informed investors. 

A relevant and arguably more realistic extension of our framework while still remaining in a static setting is to consider the scenario in which the insiders receive different but possibly correlated signals regarding the liquidation value. On the other hand, due to our assumption that the insiders' orders arrive simultaneously to the dealer, the optimisation problem of each insider requires the solution of a nonlinear filtering problem even in the case of Gaussian signals. Our present technology is  not yet able to deal with such complications and, therefore, we postpone the discussion of this extension to subsequent research.

In reality limit order markets are dynamic and thus the order books change over time reflecting the changes in market parameters. The analytic characterisation of the equilibrium in the current framework in terms of the fixed point of an integral operator makes us optimistic regarding an extension of the current framework to a dynamic setting in continuous time. However, continuous trading brings extra flexibilities to portfolio choice - including the option to place a market or limit order at each trade - resulting in a more complicated model. This extension, though extremely interesting, will thus be left for future research. 

\bibliographystyle{siam}
\bibliography{ref}

\begin{thebibliography}{10}

\bibitem{Almgren2005}
{\sc R.~Almgren, C.~Thum, E.~Hauptmann, and H.~Li}, {\em Direct estimation of
  equity market impact}, Risk,  (2005).

\bibitem{Back}
{\sc K.~Back}, {\em Insider trading in continuous time}, The Review of
  Financial Studies, 5 (1992), pp.~387--409.

\bibitem{BBLOB}
{\sc K.~Back and S.~Baruch}, {\em Strategic liquidity provision in limit order
  markets}, Econometrica, 81 (2013), pp.~363--392.

\bibitem{Bershova2013}
{\sc N.~Bershova and D.~Rakhlin}, {\em The non-linear market impact of large
  trades: evidence from buy-side order flow}, Quantitative Finance, 13 (2013),
  pp.~1759--1778.

\bibitem{BMR00}
{\sc B.~Biais, D.~Martimort, and J.-C. Rochet}, {\em Competing mechanisms in a
  common value environment}, Econometrica, 68 (2000), pp.~799--837.

\bibitem{BGT}
{\sc N.~H. Bingham, C.~M. Goldie, and J.~L. Teugels}, {\em Regular variation},
  vol.~27, Cambridge university press, 1989.

\bibitem{Dubinsky2006}
{\sc A.~Dubinsky and M.~Johannes}, {\em Fundamental uncertainty, earning
  announcements and equity options}, Columbia University Working paper,
  (2006).

\bibitem{farmer2013}
{\sc J.~D. Farmer, A.~Gerig, F.~Lillo, and H.~Waelbroeck}, {\em How efficiency
  shapes market impact}, Quantitative Finance, 13 (2013), pp.~1743--1758.

\bibitem{Foucault99}
{\sc T.~Foucault}, {\em Order flow composition and trading costs in a dynamic
  limit order market}, Journal of Financial markets, 2 (1999), pp.~99--134.

\bibitem{FM08}
{\sc T.~Foucault and A.~J. Menkveld}, {\em Competition for order flow and smart
  order routing systems}, The Journal of Finance, 63 (2008), pp.~119--158.

\bibitem{Gabaix2006}
{\sc X.~Gabaix, P.~Gopikrishnan, V.~Plerou, and H.~E. Stanley}, {\em
  Institutional investors and stock market volatility}, The Quarterly Journal
  of Economics, 121 (2006), pp.~461--504.

\bibitem{Glosten94}
{\sc L.~R. Glosten}, {\em Is the electronic open limit order book inevitable?},
  The Journal of Finance, 49 (1994), pp.~1127--1161.

\bibitem{powerlawVol}
{\sc P.~Gopikrishnan, V.~Plerou, X.~Gabaix, and H.~E. Stanley}, {\em
  Statistical properties of share volume traded in financial markets}, Physical
  review e, 62 (2000), p.~R4493.

\bibitem{HolSub92}
{\sc C.~W. Holden and A.~Subrahmanyam}, {\em Long-lived private information and
  imperfect competition}, The Journal of Finance, 47 (1992), pp.~247--270.

\bibitem{Kou}
{\sc S.~G. Kou and H.~Wang}, {\em Option pricing under a double exponential
  jump diffusion model}, Management science, 50 (2004), pp.~1178--1192.

\bibitem{Kyle1985}
{\sc A.~S. Kyle}, {\em Continuous auctions and insider trading}, Econometrica,
  53 (1985), pp.~1315--1335.

\bibitem{Lillo2005}
{\sc F.~Lillo, M.~Szabolcs, and J.~D. Farmer}, {\em Theory for long memory in
  supply and demand}, Physical review e, 71 (2005), p.~066122.

\bibitem{ParSep03}
{\sc C.~A. Parlour and D.~J. Seppi}, {\em Liquidity-based competition for order
  flow}, The Review of Financial Studies, 16 (2003), pp.~301--343.

\bibitem{ParSepSurvey}
\leavevmode\vrule height 2pt depth -1.6pt width 23pt, {\em Limit order markets:
  A survey}, Handbook of financial intermediation and banking, 5 (2008),
  pp.~63--95.

\bibitem{peroldIS}
{\sc A.~F. Perold}, {\em The implementation shortfall: Paper versus reality},
  Journal of Portfolio Management, 14 (1988), p.~4.

\bibitem{PBimpact}
{\sc M.~Potters and J.-P. Bouchaud}, {\em More statistical properties of order
  books and price impact}, Physica A: Statistical Mechanics and its
  Applications, 324 (2003), pp.~133--140.

\bibitem{Ranaldo04}
{\sc A.~Ranaldo}, {\em Order aggressiveness in limit order book markets},
  Journal of Financial Markets, 7 (2004), pp.~53--74.

\bibitem{rock}
{\sc K.~Rock}, {\em The specialist's order book and price anomalies}, tech.
  rep., Harvard University, 1989.

\bibitem{Seppi97}
{\sc D.~J. Seppi}, {\em Liquidity provision with limit orders and a strategic
  specialist}, The Review of Financial Studies, 10 (1997), pp.~103--150.

\bibitem{Tarski}
{\sc A.~Tarski}, {\em A lattice-theoretical fixpoint theorem and its
  applications.}, Pacific journal of Mathematics, 5 (1955), pp.~285--309.

\bibitem{Torre97}
{\sc N.~Torre}, {\em BARRA Market impact model handbook}, Berkeley, 1997.

\bibitem{plerou1999}
{\sc P.~V., P.~Gopikrishnan, L.~A. Nunez~Amaral, and E.~H. Stanley}, {\em
  Scaling of the distribution of price fluctuations of individual companies},
  Phys. Rev. E, 60 (1999), p.~6519.

\bibitem{Vaglica2008}
{\sc G.~Vaglica, F.~Lillo, E.~Moro, and R.~N. Mantegna}, {\em Scaling laws of
  strategic behavior and size heterogeneity in agent dynamics}, Physical review
  e, 77 (2008), p.~036110.

\bibitem{Zarinelli15}
{\sc E.~Zarinelli, M.~Treccani, J.~D. Farmer, and F.~Lillo}, {\em Beyond the
  square root: evidence for logarithmic dependence of market impact on size and
  participation rate}, Market Microstructure and Liquidity, 1 (2015).

\end{thebibliography}

\appendix
\section{Auxiliary results}
\begin{lemma} \label{l:Psimonotone}
	$\Psi^+$ and $\Psi^-$ are non-decreasing on the support of $V$.
\end{lemma}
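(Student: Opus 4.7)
The plan is to exhibit $\Psi^+(y_1)$ (resp.\ $\Psi^-(y_2)$) as a convex combination of $\Psi^+(y_2)$ and a conditional mean supported on the interval $(y_1,y_2]$, and then to bound that conditional mean by the endpoint $y_2$ (resp.\ $y_1$). This reduces the claim to the elementary comparison $\Psi^+(y_2)\geq y_2$ and $\Psi^-(y_1)\leq y_1$, which follow directly from the definitions.

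In detail, for $\Psi^+$ pick $y_1<y_2$ with $\Pi^+(y_1)>0$ (otherwise there is nothing to prove). Partitioning $\{V>y_1\}=\{V>y_2\}\sqcup\{y_1<V\leq y_2\}$ and applying this to both $\Phi^+$ and $\Pi^+$, one gets
\[
\Psi^+(y_1)=\alpha\,\Psi^+(y_2)+(1-\alpha)\,E[V\mid y_1<V\leq y_2],\qquad \alpha:=\frac{\Pi^+(y_2)}{\Pi^+(y_1)}\in[0,1],
\]
with the second term omitted if $\alpha=1$. Since $V>y_2$ on the event $\{V>y_2\}$ and $V\leq y_2$ on the event $\{y_1<V\leq y_2\}$,
\[
E[V\mid y_1<V\leq y_2]\leq y_2\leq E[V\mid V>y_2]=\Psi^+(y_2),
\]
so $\Psi^+(y_1)\leq\Psi^+(y_2)$. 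The argument for $\Psi^-$ is symmetric: partitioning $\{V\leq y_2\}=\{V\leq y_1\}\sqcup\{y_1<V\leq y_2\}$ expresses $\Psi^-(y_2)$ as a convex combination of $\Psi^-(y_1)$ and $E[V\mid y_1<V\leq y_2]$, and one uses $\Psi^-(y_1)\leq y_1\leq E[V\mid y_1<V\leq y_2]$.

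There is no real obstacle here; the only bookkeeping concern is the handling of degenerate cases where one of the conditioning events has probability zero. If $\Pi^+(y_1)=\Pi^+(y_2)$ then the mass on $(y_1,y_2]$ vanishes and $\Psi^+(y_1)=\Psi^+(y_2)$ trivially; boundary points of the support where the conditioning probability is zero are simply excluded from the domain of $\Psi^\pm$, consistently with the convention used elsewhere in the paper.
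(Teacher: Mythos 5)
Your proof is correct and follows essentially the same route as the paper's: both arguments hinge on splitting the event $\{V>y_1\}$ into $\{V>y_2\}$ and the strip $\{y_1<V\leq y_2\}$ and exploiting that $V\leq y_2$ on the strip (and the mirror-image splitting for $\Psi^-$). The only difference is cosmetic — you normalize the decomposition into a convex combination of conditional means, while the paper cross-multiplies the denominators and subtracts $y$ before splitting — so nothing new is gained or lost, and your handling of the degenerate cases is consistent with the paper's conventions.
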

\begin{proof}
	Suppose $x<y$. Note that $\Psi^+$ is non-decreasing if
	\[
	E[V\chf_{[V> y]}]P(V> x)-	E[V\chf_{[V> x]}]P(V> y)\geq 0.
	\]
	Indeed, the left side of the above equals
	\bean
	&&E[(V-y)\chf_{[V> y]}]P(V> x)-E[(V-y)\chf_{[V> x]}]P(V> y)\\
	&=&E[(V-y)\chf_{[V> y]}]\left(P(V> x)-P(V> y)\right)-E[(V-y)\chf_{[x< V\leq y]}]P(V> y),
	\eean
	which is non-negative since $V-y\leq 0$ on the set $[x< V\leq y]$. 
	
	The second assertion is proved analogously.
\end{proof}
\begin{lemma} \label{l:key}
	Let $g:\bbR \to \bbR$ be a  continuous function and $u^{+}$ (resp. $u^-$) be the unique solution of 
	\be \label{e:uRN}
	u_t + \sigma^2 u_{xx}=0, \qquad u(1,x)= \Pi^+(g(z)) \; (\mbox{resp. } u(1,x)= \Pi^-(g(z))).
	\ee
	Then, the following hold:
	\begin{enumerate}
		\item There exits a solution $B$ on a filtered probability space $(\Om, \cF, (\cF_t), \bbQ)$ to the following SDE:
		\be \label{e:Pitrans}
		dB_t=\sigma dW_t +\sigma^2 \frac{u_x(t,B_t)}{u(t,B_t)}dt, \quad B_0=x,
		\ee
		where $u$ is either $u^+$ or $u^-$ and $W$ is a Brownian motion with $W_0=0$.
		\item $\phi^+_g(x)=\bbE^{\bbQ^+}\left[\Psi^+(g(B_1))\right]$ and $\phi^-_ g(x)=\bbE^{\bbQ^-}\left[\Psi^-(g(B_1))\right]$, where $(B,\bbQ^+)$ (resp. $(B,\bbQ^-)$) corresponds to the solution of (\ref{e:Pitrans}) if $u=u^+$ (resp $u=u^-$) and $\bbE^{\bbQ}$ stands for the expectation under $\bbQ$. 
		\item $\phi^+_g(0)> \phi^-_g(0)$.
		\item Suppose further that  $g$ is non-decreasing. Then,  $\phi^{\pm}_g$ are non-decreasing, too. Consequently, $\phi_g$ is non-decreasing. Moreover,
		\bea
		\phi^+_g(x)&\leq& \bbE^{\bbQ^+}\left[\Psi^+(g(\sigma W_1+x))\right]\label{e:phi+bd}\\
		\phi^-_g(x)&\geq& \bbE^{\bbQ^-}\left[\Psi^-(g(\sigma W_1+x))\right] \label{e:phi-bd}.
		\eea
		
	\end{enumerate} 
\end{lemma}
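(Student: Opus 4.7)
The plan is to recognise the SDE (\ref{e:Pitrans}) as the Doob $h$-transform of Brownian motion by $u^{\pm}$, and then exploit this probabilistic change of measure to obtain all four conclusions.

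First, $u^{\pm}$ admits the Feynman--Kac representation $u^{\pm}(t,x)=E[\Pi^{\pm}(g(x+\sigma\sqrt{1-t}\,\xi))]$ with $\xi\sim N(0,1)$, from which smoothness and strict positivity on $[0,1)\times\bbR$ follow (the latter because $\Pi^{\pm}\in(0,1)$ on the interior of the support of $V$, combined with the continuity of $g$). For part (1), I would start from a reference probability space $(\Om,\cF,\bbP)$ carrying a Brownian motion $\tilde W$ and set $B_t:=x+\sigma\tilde W_t$. It\^o's formula combined with the heat equation satisfied by $u^{\pm}$ shows that $M_t:=u^{\pm}(t,B_t)/u^{\pm}(0,x)$ is a positive local martingale whose logarithmic volatility is $\sigma u^{\pm}_x/u^{\pm}$. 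The Feynman--Kac identity $u^{\pm}(0,x)=E[u^{\pm}(1,B_1)]$ gives $E[M_1]=1$, so $M$ is a true martingale. Defining $\bbQ^{\pm}$ by $d\bbQ^{\pm}/d\bbP=M_1$, Girsanov then produces a $\bbQ^{\pm}$-Brownian motion $W$ relative to which $B$ satisfies (\ref{e:Pitrans}).

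For part (2), the law of $B_1$ under $\bbQ^{+}$ starting from $x$ has density $p^+(y;x)=\Pi^+(g(y))\,q(\sigma,y-x)/u^+(0,x)$, read off from the change of measure at $t=1$. Integrating $\Psi^+\circ g$ against $p^+(\cdot;x)$ and using $\Psi^+\Pi^+=\Phi^+$ (elsewhere $\Pi^+=0$ kills the contribution) reproduces $\phi^+_g(x)$, and analogously for the minus case. For part (3), specialising to $x=0$ gives $\phi^{+}_g(0)=E[V\mid V>g(Z)]$ and $\phi^{-}_g(0)=E[V\mid V\leq g(Z)]$, where $Z\sim N(0,\sigma^2)$ is independent of $V$. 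A direct computation shows that $\phi^+_g(0)-\phi^-_g(0)$ is a positive multiple of $\mathrm{Cov}(V,\chf_{[V>g(Z)]})$; conditioning on $Z$ and using independence reduces this to $E[\mathrm{Cov}(V,\chf_{[V>w]})|_{w=g(Z)}]$, and the inner covariance is non-negative by the Chebyshev sum / FKG inequality (both factors being non-decreasing in $V$), strictly positive because $V$ is non-degenerate.

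For part (4), monotonicity of $\phi^{\pm}_g$ follows from a monotone likelihood ratio argument: $p^{\pm}(y;x_2)/p^{\pm}(y;x_1)$ depends on $y$ only through $q(\sigma,y-x_2)/q(\sigma,y-x_1)=\exp((x_2-x_1)(2y-x_1-x_2)/(2\sigma^2))$, which is monotone in $y$; MLR yields stochastic monotonicity of $B_1$ in the starting point, and since $\Psi^{\pm}\circ g$ is non-decreasing by Lemma \ref{l:Psimonotone} combined with monotonicity of $g$, $\phi^{\pm}_g$ is non-decreasing too. Combined with part (3), $\phi_g$ itself is non-decreasing. For (\ref{e:phi+bd}) and (\ref{e:phi-bd}), the Feynman--Kac representation reveals that $u^+$ is non-increasing in $x$ (so $u^+_x\leq 0$) because $\Pi^+$ is non-increasing, whereas $u^-$ is non-decreasing (so $u^-_x\geq 0$). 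Hence under $\bbQ^+$ the drift in (\ref{e:Pitrans}) is non-positive and under $\bbQ^-$ non-negative, yielding pathwise $B_1\leq x+\sigma W_1$ and $B_1\geq x+\sigma W_1$ respectively; applying $\Psi^{\pm}\circ g$ (non-decreasing) and taking $\bbQ^{\pm}$-expectations gives the claimed inequalities.

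The most delicate step is verifying that the $h$-transform martingale $M$ is a genuine martingale, since Novikov-type criteria may fail in this generality; the Feynman--Kac identity sidesteps this by delivering $E[M_1]=1$ for free. A secondary technicality is handling the set where $\Pi^{\pm}(g(y))=0$, on which $\Psi^{\pm}\circ g$ is not defined, but this set carries zero $p^{\pm}$-mass and so is immaterial to every integration performed above.
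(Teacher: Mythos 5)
Your proof is correct, and for the core probabilistic construction it follows the same route as the paper: you realize (\ref{e:Pitrans}) as the Doob $h$-transform of $\sigma$-Brownian motion by $u^{\pm}$, change measure with $u(1,B_1)/u(0,x)$ via Girsanov (the paper gets the martingale property directly from boundedness of $u(t,B_t)$, you get it from positivity plus $E[M_1]=1$ -- both fine), and read off part (2) from the tilted density, exactly as in the paper. Where you genuinely diverge is in parts (3) and (4). For (3), the paper manipulates the defining ratios using $\Phi^{+}=E[V]-\Phi^{-}$ and $\Pi^{+}+\Pi^{-}=1$ to reduce the claim to $\int(\Psi^+(g(z))-E[V])\Pi^+(g(z))q(\sigma,z)dz>0$ and invokes monotonicity of $\Psi^+$; your covariance/FKG identity $\phi^+_g(0)-\phi^-_g(0)=\mathrm{Cov}(V,\chf_{[V>g(Z)]})/(p(1-p))$ is an equivalent but more probabilistic phrasing of the same positivity (and both arguments share the same mild looseness about strictness in degenerate cases, e.g. $g$ constant below the support of $V$). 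For the monotonicity in (4), the paper runs a comparison theorem for the SDE (\ref{e:Pitrans}), which requires the local Lipschitz property of $u_x/u$ and a coupling of solutions indexed by the starting point; your monotone-likelihood-ratio argument on the explicit density $\Pi^{\pm}(g(y))q(\sigma,y-x)/u^{\pm}(0,x)$ is more elementary, since the factor $\Pi^{\pm}(g(y))$ cancels in the ratio of two starting points and the Gaussian ratio is monotone, giving stochastic dominance of $B_1$ in $x$ without touching SDE regularity near $t=1$. This buys a cleaner proof of that step; the paper's SDE comparison, on the other hand, is what it also reuses verbatim for the bounds (\ref{e:phi+bd})--(\ref{e:phi-bd}), where your argument (sign of the drift plus direct integration of the SDE) coincides with the paper's.
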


\begin{proof} We shall prove the claims for $u^+$ only, the corresponding proof for $u^-$ being analogous.
	\begin{enumerate}
		\item Note that 
		\[
		u^+(t,x)=\int_{-\infty}^{\infty}\Pi^+(g(z))\frac{1}{\sqrt{2\pi\sigma^2 (1-t)}}\exp\left(-\frac{(y-z)^2}{2\sigma^2(1-t)}\right)dz.
		\]
		Then, if $\beta $ is a  Brownian motion on a filtered probability space $(\Om, \cF, (\cF_t), \bbP)$ with $\beta_0=0$, $u(t,B_t)$ is a bounded martingale with $u(1, B_1)= \Pi^+(g(B_1)$, where $B=\sigma \beta +x$.  Thus, we can define a new measure $\bbQ$ on $(\Om, \cF)$ by 
		\[
		\frac{d\bbQ}{d\bbP}=\frac{u(1, B_1)}{u(0,B_0)}.
		\]
		By means of Girsanov's theorem, under $\bbQ$, $B$ solves (\ref{e:Pitrans}). 
		\item Observe that 
		\bean
		\phi^+ g(x)&=&\frac{\bbE\left[\Psi^+(g(B_1))\Pi^+(g(B_1))\right]}{\bbE\left[\Pi^+(g(B_1))\right]}=\frac{\bbE\left[\Psi^+(g(B_1))u^+(1,B_1)\right]}{\bbE\left[u(1,B_1)\right]}\\
		&=&\bbE^{\bbQ}\left[\Psi^+(g(B_1))\right].
		\eean
		\item The claim is equivalent to
		\[
		\int_{-\infty}^{\infty}\Phi^+(g(z))q(\sigma,z)dz\int_{-\infty}^{\infty}\Pi^-(g(z))q(\sigma,z)dz-\int_{-\infty}^{\infty}\Phi^-(g(z))q(\sigma,z)dz\int_{-\infty}^{\infty}\Pi^+(g(z))q(\sigma,z)dz> 0.
		\]
		Using $\Phi^{+}=E[V]-\Phi^-$ and $\Pi^+ + \Pi^-=1$, the above is valid if and only if
		\bean
		0 &<&\int_{-\infty}^{\infty}\Phi^+(g(z))q(\sigma,z)dz-E[V]\int_{-\infty}^{\infty}\Pi^+(g(z))q(\sigma,z)dz\\
		&=&\int_{-\infty}^{\infty}\left(\Psi^+(g(z))-E[V]\right)q(\sigma,z)\Pi^+(g(z))dz,
		\eean
		which holds since for any $x$ we have $\Psi^+(x)\geq \Psi^+(m)=E[V]$ in view of Lemma \ref{l:Psimonotone} and $\Psi^+$ is not constant.
		\item Now, suppose $g$ is non-decreasing, which in turn implies $u^+_x \leq 0$ since $\Pi^+$ is non-increasing. 
		Therefore, as $\frac{u_x(t,x)}{u(t,x)}$ is Lipschitz on $[0,t]$ for any $t<T$, the standard comparison results for SDEs applied to (\ref{e:Pitrans}) in conjunction with Lemma \ref{l:Psimonotone} imply
		\[
		\bbE^{\bbQ}\left[\Psi^+(g(B_1))\big| B_0=y\right]\geq  \bbE^{\bbQ}\left[\Psi^+(g(B_1))\big| B_0=x\right]\quad \mbox{ if } y\geq x
		\]
		since we can construct all these solutions indexed by their starting point on the same probability space due to the local Lipschitz property of $u_x/u$. This shows the desired monotonicity $\phi^+(g)$. 
		
		Similarly, the same comparison principle yields that the solution of (\ref{e:Pitrans}) is bounded from above by $\sigma W_t +x$ in case of $u=u^+$ since $u^+_x\leq 0$. Combined with the monotonicity property of $\Psi^+(g)$, we deduce $\phi^+_g(x)\leq \bbE^{\bbQ}\left[\Psi^+(g(\sigma W_1+x))\right]$.
	\end{enumerate}
\end{proof}
\section{Proofs}
\begin{proof}[Proof of Proposition \ref{p:wealth}]
\begin{enumerate}
		\item Let $g(x):=E^v[h(x+Z)]$. By direct differentiation, the expression (\ref{e:foc}) implies
		\be\label{e:averageh}
		xF(x)=x\frac{g(x)}{N}+\frac{N-1}{N}\int_0^xg(y)dy,
		\ee which is equivalent to
		\[
		x\frac{g'(x)}{N}+g= F(x)+xF'(x).
		\]
		Recall that $g(0)=E^v[h(Z)]=F(0)$ by construction. Thus, the unique solution of the above ODE with this initial condition is given by
		\[
		g(x)=NF(x)- \frac{N(N-1)}{x^N}\int_0^{x}F(y)y^{N-1}dy=F(x)+ \frac{N(N-1)}{x^N}\int_0^{x}(F(x)-F(y))y^{N-1}dy.
		\] 
		This yields (\ref{e:exph}) after a change of variable.
		\item The above also yields (\ref{e:convergence}) due to  the first order condition $F(X^*)=V$. The remaining assertions  are direct consequences of the strict monotonicity of $F$.
		\item 	Finally, note that the total expected profit is given by
		\bean
		\int_0^{X^*}(v-E^v[h(y+Z)])dy&=&\int_0^{X^*}(v-g(y))dy\\
		&=&vX^*-\frac{N}{N-1}\left(X^*F(X^*)-X^*\frac{g(X^*)}{N}\right)\\
		&=&-\frac{vX^*}{N-1}+\frac{X^*g(X^*)}{N-1}=\frac{X^*}{N-1}\left(g(X^*)-v\right)\\
		&=&	N\int_0^{X^*}(v-F(y))\left(\frac{y}{X^*}\right)^{N-1}dy,
		\eean
		where the second equality follows from (\ref{e:averageh}) and the third is due to $F(X^*)=V$.
	\end{enumerate}
\end{proof}
\begin{proof}[Proof of Proposition \ref{p:IS}]
	Note that $E[h(y+Z)]=E^v[h(y+Z)]$ for all $y$. We shall show the result for $N>1$, the remaining case is similar and easier.
	
	 Using the first representation in (\ref{e:exph}), we obtain
	\bean
	\int_0^x E[h(y+Z)]dy&=& N \int_0^x F(y)dy -N(N-1)\int_0^x dy y^{-N}\int_0^y dz F(z)z^{N-1}\\
	&=& N \int_0^x F(y)dy -N(N-1)\int_0^x dz F(z)z^{N-1} \int_z^x dy y^{-N}\\
	&=& N \int_0^x F(y)dy +N\int_0^x dz F(z)z^{N-1}\left(x^{-N+1}-z^{-N+1}\right)\\
	&=&N\int_0^x dz F(z)\left(\frac{z}{x}\right)^{N-1}=xN \int_0^1 dz F(xz)z^{N-1},
	\eean
	which yields the first assertion once divided by $x$.
	
	The remaining claims follow from $F(xy)<F(x)$ (resp. $F(xy)>F(x)$) for $x>0$ (resp. $x<0$) and for all $y\in (0,1)$ since $F$ is strictly increasing.
\end{proof}
\begin{proof}[Proof of Proposition \ref{p:limitprofit}]
	First observe that $\phi_F$ is bounded since $V$ is a bounded random variable by assumption. Thus the dominated convergence theorem in conjunction with the continuity of $\Pi^+$, and thus that of $\Phi^+$ and $\Phi^-$, show that both of  $\liminf F_N$ and $\limsup F_N$ solve (\ref{e:Flimit}). As (\ref{e:Flimit}) can have at most one solution, $F_{\infty}:=\lim F_N$ exists.
	
	Also note that $F_{\infty}$ is strictly increasing and continuous by Lemma \ref{l:Flim}. Thus, $\lim_{N\rar \infty}F_N^{-1} (v)=F^{-1}_{\infty}(v) \in \bbR.$ Thus,
	\bean
	\lim_{N\rar \infty}\pi^*(v)&=&\lim_{N\rar \infty} F_N^{-1}(v) N\int_0^1 (v- F_N(F_N^{-1}(v) y)y^{N-1}dy\\
	&=& F_{\infty}^{-1}(v)\lim_{N\rar \infty} N\int_0^1 (v- F_N(F_N^{-1}(v) y)y^{N-1}dy
	\eean
	Since each $F_N$ takes values in $(m,M)$, $\lim_{N\rar \infty} F_N(F_N^{-1}(v) y)= F_{\infty}(F_{\infty}^{-1}(v) y)$, and the measure $N y^{N-1}dy$ on $[0,1]$ converges weakly to the point mass at $1$, we have
	\[
	\lim_{N\rar \infty} N\int_0^1 (v- F_N(F_N^{-1}(v) y)y^{N-1}dy= v- F_{\infty}(F_{\infty}^{-1}(v))=0.
	\]
	This completes the proof.
\end{proof}

 \begin{proof}[Proof of Lemma \ref{l:Flim}]
 	We give the proof for the solutions of (\ref{e:F}), the analogous property of the solutions of (\ref{e:Flimit}) can be proven similarly.
 	
 	 Monotone convergence theorem  in conjunction with Assumption \ref{a:Finteg} implies
 	\[
 	\lim_{x\rar\infty }\int_{-\infty}^{\infty}q(\sigma,z)\phi_F(x+z)dz=\int_{-\infty}^{\infty}q(\sigma,z) \lim_{x\rar \infty}\phi_F(x+z)dz.
 	\]
 	Moreover, $\lim_{x\rar \infty}\phi_F(x+z)=\Psi^+(F(\infty)-)$. To see this, first note that 
 	\[
 	\phi^+_F(z+x)=\frac{\int_{-\infty}^{\infty}\Psi^+(F(u))\Pi^+(F(u))\frac{1}{\sqrt{2\pi \sigma^2}}\exp\left(-\frac{(x+z-u)^2}{2\sigma^2}\right)du}{\int_{-\infty}^{\infty}\Pi^+(F(u))\frac{1}{\sqrt{2\pi \sigma^2}}\exp\left(-\frac{(x+z-u)^2}{2\sigma^2}\right)du}.
 	\]
 	Next, the  measure 
 	\[
 	\frac{\Pi^+(F(u)q(\sigma,z+x-u)du}{\int_{-\infty}^{\infty} \Pi^+(F(u)q(\sigma,z+x-u)du}
 	\]
 	converges to the point mass at $\infty$. Indeed, for any $0<a<\infty$, we have
 	\bean
 	\frac{\int_{-\infty}^a \Pi^+(F(u)q(\sigma,z+x-u)du}{\int_{-\infty}^{\infty} \Pi^+(F(u)q(\sigma,z+x-u)du}&=&\frac{\int_{-\infty}^a \Pi^+(F(u)\exp\big(-\frac{u^2}{2\sigma^2}+\frac{u(z+x)}{\sigma^2}\big)du}{\int_{-\infty}^{\infty} \Pi^+(F(u)\exp\big(-\frac{u^2}{2\sigma^2}+\frac{u(z+x)}{\sigma^2}\big)du}\\
 	&	\leq& \frac{\exp(\frac{a(z+x)}{\sigma^2})\int_{-\infty}^a \Pi^+(F(u)\exp(-\frac{u^2}{2\sigma^2})du}{\int_{2a}^{\infty} \Pi^+(F(u)\exp\big(-\frac{u^2}{2\sigma^2}+\frac{u(z+x)}{\sigma^2}\big)du}\\
 	&\leq&\frac{\exp(\frac{-a(z+x)}{\sigma^2})\int_{-\infty}^a \Pi^+(F(u)\exp(-\frac{u^2}{2\sigma^2})du}{\int_{2a}^{\infty} \Pi^+(F(u)\exp\big(-\frac{u^2}{2\sigma^2}\big)du},
 	\eean
 	which converges to $0$ as $x \rar \infty$.
 	
 	Thus, using the representation of $F$ via (\ref{e:FN2}), we deduce
 	\[
 	F(\infty)=\Psi^+(F(\infty)-).
 	\]
 	Note that changing the order of integration is justified thanks to Assumption \ref{a:Finteg}. On the other hand, $\Psi^+(x-)>x$ for any $x<M$ since $P(V>x)>0$ whenever $x<M$. This in turn implies $F(\infty)=M$. Similarly, $\lim_{x\rar -\infty}F(x)=m$. Thus, $F$ is strictly increasing in view of (\ref{e:FN2}) since $m\neq M$ and, therefore, $F$ is not constant. 
 \end{proof}
 
\begin{proof}[Proof of Theorem \ref{t:existence}]
	The proof will be based on an application of Schauder's fixed point theorem to a suitable mapping defined on the space of non-decreasing continuous functions, i.e. the candidate functions for the solution of (\ref{e:F}). Note that since $V$ takes values in $[m,M]$, so does $\Psi^{\pm}$. This justifies the representation (\ref{e:FN2}) for $F$. Since in equilibrium $h$ must also be taking values in $[m,M]$, we must expect $F$ to take values in $[m,M]$, too. Thus, we can concentrate on  functions on $\bbR$ that takes values in $[m,M]$.  Moreover, $F$ will possess a derivative that is bounded by
	\[
	K_0:=(|m|+M) \int_{-\infty}^{\infty}|z|\frac{e^{-\frac{z^2}{2\sigma^2}}}{\sigma^3\sqrt{2\pi}}dz\left(\frac{1}{N}+\frac{N-1}{2N}\right)<\infty.
	\]
	To see this first observe that
	\bean
	\frac{\partial \bar{q}(\sigma,x,z)}{\partial x}&=&\frac{q(\sigma,x-z)-\bar{q}(\sigma,x,z)}{x}\\
	&=&\frac{1}{x^2}\int_0^x\left\{q(\sigma,x-z)-q(\sigma,y-z)dy\right\}\\
	&=&\frac{1}{x^2}\int_0^x u q_x(\sigma,u-z)du.
	\eean
	Therefore, 
	\bean
	\left|\frac{d}{dx}F(x)\right|&\leq& \int_{-\infty}^{\infty}|\phi_F(z)|\left\{\frac{|q_x(\sigma,x-z)|}{N}+\frac{N-1}{Nx^2}\int_0^{|x|}u|q_x(\sigma,u-z)|du \right\}dz\\
	&\leq& (|m|+M)\left\{\frac{1}{N}\int_{-\infty}^{\infty}|z-x|\frac{e^{-\frac{(z-x)^2}{2\sigma^2}}}{\sigma^3\sqrt{2\pi}}dz+ \frac{N-1}{Nx^2}\int_0^{|x|}du u\int_{-\infty}^{\infty}|z-u|\frac{e^{-\frac{(z-u)^2}{2\sigma^2}}}{\sigma^3\sqrt{2\pi}}dz\right\} \\
	&=&(|m|+M) \int_{-\infty}^{\infty}|z|\frac{e^{-\frac{z^2}{2\sigma^2}}}{\sigma^3\sqrt{2\pi}}dz\left(\frac{1}{N}+\frac{N-1}{2N}\right).
	\eean
	We shall show the existence of a fixed point in the normed space $\cX:=L^2(\bbR,\mu_0)$, i.e. the space of Borel measurable functions that are square integrable with respect to $\mu_0$, where
	\[
	\mu_0(dx)=\frac{1}{\sqrt{2\pi}}e^{-\frac{x^2}{2}}dx.
	\]
	Note that $\mu_0$ is equivalent to the Lebesgue measure on $\bbR$. Next define
	\[
	D_0:\{g|g:\bbR \mapsto [m,M] \mbox{ is such that } |g(x)-g(y)|\leq K_0|x-y|, \forall x,y \in \bbR\}
	\]
	and let
	\[
	D:=\{g\in\cX| g=g_0,\, \mu_0\mbox{-a.e. for some } g_0\in D_0\}.\]
	It is easy to see that  $D$ is a convex subset of $\cX$. 
	
	Next define the operator $T$ on $\cX$ via
	\[
	Tg(x):=\int_{-\infty}^{\infty}\left\{\frac{1}{N}q(\sigma,x-z)+\frac{N-1}{N} \bar{q}(\sigma,x,z)\right\}\phi_{\bar{g}}(z)dz,
	\]
	where $\bar{g}:=(g\vee m)\wedge M$ and $\phi_g$  and $\bar{q}$ are as defined in (\ref{d:phig}) and (\ref{e:qbar}), respectively. Note that for each $x$
	\[
	\left\{\frac{1}{N}q(\sigma,x-z)+\frac{N-1}{N} \bar{q}(\sigma,x,z)\right\}dz
	\]
	is a probability measure on $\bbR$.
	
	\begin{itemize}
		\item[Step 1] {\em ($T$ maps $D$ into itself):} It is easy to verify that $Tg$ is continuous and takes values in $[m, M]$  in view of Lemma \ref{l:key} and that 
		\[
		\int_{-\infty}^{\infty}\left\{\frac{1}{N}q(\sigma,x-z)+\frac{N-1}{N} \bar{q}(\sigma,x,z)\right\}dz=1.
		\]
		Moreover, $T_g$ is differentiable with a derivative bounded by $K_0$ by using the above computations that led to the estimate $K_0$. Thus, $Tg\in D_0\subset D$.
		\item[Step 2] {\em ($D$ is compact):} Let $(g_n)\subset D$. Then there exists $(g^0_n)\subset D_0$ such that $\mu_0$-a.e. we have $g_n =g^0_n$ for each $n\geq 1$.  Then by Arzela-Ascoli Theorem there exists a subsequence that converges uniformly on compacts to a continuous function $g^0$. Without loss of generality let us assume that $g^0_n$ converges to $g^0$. Note that necessarily $|g^0(x)-g^0(y)|\leq K_0|x-y|$ for all $x,y \in \bbR$, i.e $g^0\in D_0$.  Finally, as $g_n$s are uniformly bounded and $\mu_0$ is a probability measure, the dominated convergence theorem yields $g_n \rar g^0$ in $L^2(\bbR,\mu_0)$.
		\item[Step 3] {\em ($T:D\to D$ is continuous):} Suppose  $g_n \rar g$ in $D$ as $n \rar \infty$. In view of the definition of $D$ we may assume without loss of generality that that $g_n$s are continuous since changing $g_n$ on a Lebesgue null set does not alter the value of $Tg_n$. By another application of Arzela-Ascoli theorem there exists a subsequence that converges pointwise to some continuous function, which we may identify with $g$ due to the uniqueness of $L^2$-limits up to a null set. Thus, we may assume $g$ is continuous, too. 
		
		Moreover, the same argument shows that every subsequence of $g_n$ has a further subsequence that converges to $g$ pointwise since continuous functions that agree on Lebesgue null sets should agree at every point. Thus, $g_n\rar g$ pointwise as $n\rar \infty$. 
		
		Next, since $\phi_{g_n}$ is uniformly bounded, the dominated convergence theorem yields
		\[
		\lim_{n \rar \infty}Tg_n(x)= \int_{-\infty}^{\infty} dz\left\{\frac{1}{N}q(\sigma,x-z)+\frac{N-1}{N} \bar{q}(\sigma,x,z)\right\}\lim_{n\rar \infty} \phi_{g_n}(z).
		\]
		On the other hand, Lemma \ref{l:key} and Girsanov theorem imply for $z>0$
		\[
		\phi_{g_n}(z)=\frac{\bbE\left[\Psi^+(g_n(\sigma \beta_1+z)) \Pi^+(g_n(\sigma \beta_1+z))\right]}{u^n(0,z)},
		\]
		where $\bbE$ is the expectation operator for $\bbP$ under which $\beta$ is a standard Brownian motion, and $u^n$ is the function $u^+$ in Lemma \ref{l:key} defined by the terminal condition $\Pi^+(g_n)$.  Since $\Psi^+$ and $\Pi^+$ are continuous except on a Lebesgue null set, the dominated convergence theorem yields
		\[
		\lim_{n \rar \infty}\bbE\left[\Psi^+(g_n(\sigma \beta_1+z)) \Pi^+(g_n(\sigma \beta_1+z))\right]=\bbE\left[\Psi^+(g(\sigma \beta_1+z)) \Pi^+(g(\sigma \beta_1+z))\right].
		\]
		Similarly, 
		\[
		\lim_{n \rar \infty}u^n(0,z)=\lim_{n \rar \infty}\bbE\left[ \Pi^+(g_n(\sigma \beta_1+z))\right]=\bbE\left[\Pi^+(g(\sigma \beta_1+z))\right].
		\]
		Thus, we have shown $\lim_{n \rar \infty}\phi_{g_n}(z)=\phi_{g}(z)$ for $z >0$. Analogous arguments yields the convergence for $z\leq 0$, which in turn establishes the pointwise convergence of $Tg_n$ to $T_g$; i.e. 
		\[
		\lim_{n \rar \infty}Tg_n(x)= Tg(x), \qquad x\in \bbR.
		\]
		This yields the claim by an application of the dominated convergence theorem since $Tg_n$s are uniformly bounded. 
	\end{itemize} 
	Therefore, $T$ admits a fixed point in $D$ by Schauder's fixed point theorem. That is, there exists a $g \in D$ such that $g= Tg$. Hence, there exists a solution to (\ref{e:F}). The claim now follows from Theorem \ref{t:eq}.
\end{proof} 
\begin{proof}[Proof of Theorem \ref{t:comparison}]\begin{enumerate}
	\item Let $\mathcal{D}$ be the space of  nondecreasing functions on $\bbR$ that take values in $[m,M]$ and define an operator $T$ on $\mathcal{D}$ via
	\bean
	T r(x)&=&\int_{0}^{\infty}dz\left\{\frac{1}{N}q(\sigma,x-z)+\frac{N-1}{N} \bar{q}(\sigma,x,z)\right\}\int_{-\infty}^{\infty}\Psi^+(r(y))q(\sigma,z-y)dy\nn \\
	&&+ E[V]\int^{0}_{-\infty}dz\left\{\frac{1}{N}q(\sigma,x-z)+\frac{N-1}{N} \bar{q}(\sigma,x,z)\right\}.
	\eean
	First observe that $T r_1\geq T r_2$ if $r_1\leq r_2$ since $\Psi^+$ is non-decreasing. Thus, setting $r_0\equiv m$ and $r_n=T r_{n-1}$ for $n\geq 1$, we observe that $r_n$ is an increasing sequence of nondecreasing continuous functions taking values in $[m,M]$. Thus, the limit, $r_{\infty}$ exists, is nondecreasing,  and  continuous by Dini's theorem. It also follows from the dominated convergence theorem and that $\Psi^+$ has only countably many discontinuities that $r_{\infty}$ is a solution of (\ref{e:compareup}).
	
	Now let $\mathcal{E}$ denote the set of all solutions of (\ref{e:compareup}) in $\mathcal{D}$ and define
	\[
	R^*(x):=\sup_{r\in \mathcal{E}}r(x).
	\]
	Clearly, $R^*$ takes values in $[m,M]$. Due to the aforementioned monotonicity $TR^* \geq Tr=r$ for all $r\in \mathcal{E}$. Consequently, $TR^* \geq R^*$. Setting $g_0=R^*$ and $g_n=T g_{n-1}$ for $n\geq 1$, we again obtain an increasing sequence which converges to some element $r$ of  $\mathcal{E}$. Moreover, $r\geq R^*$, which in turn implies $r=R^*$ by the construction  of $R^*$.
	
	Note that if $R^*$ is constant, it has to equal $M$ since $R^*(\infty)=M$. However, under this assumption, the right side of (\ref{e:compareup}) equals
	\bean
	&&M \int_{0}^{\infty}dz\left\{\frac{1}{N}q(\sigma,x-z)+\frac{N-1}{N} \bar{q}(\sigma,x,z)\right\}\\
	&&+ E[V]\int^{0}_{-\infty}dz\left\{\frac{1}{N}q(\sigma,x-z)+\frac{N-1}{N} \bar{q}(\sigma,x,z)\right\}<M
	\eean
	 as $E[V]<M$. Thus, $R^*$ cannot be constant.
	
	To show the final assertion let $F$ be a solution of (\ref{e:F}). Since $\Psi^- \leq E[V]$ and that 
	$\phi_F^+(z)\leq \int_{-\infty}^{\infty}\Psi^+(F(y))q(\sigma,z-y)dy$ by (\ref{e:phi+bd}), we deduce that
	\[
	F(x)\leq TF(x).
	\]
	As above, setting $g_0=F$ and $g_n=T g_{n-1}$ for $n\geq 1$, we again obtain an increasing sequence which converges to a member of $\mathcal{E}$, which proves the claim.
	\item Since $\Psi^-\leq E[V]$, the proof follows the similar lines as above and, hence, omitted
\end{enumerate}
\end{proof}
\begin{proof}[Proof of Theorem \ref{t:RV}]
	We shall give a proof of the first statement as the second one can be proven along similar lines. We can assume without loss of generality that $E[V]=0$ since, otherwise, we can replace $V$ by $V-E[V]$ and redefine $\Psi^+$ and $R$ accordingly.
	
	By means of a straightforward change of variable we obtain for $x>0$
	\bea 
	\frac{G(\alpha x)}{G(\alpha)}&&=\frac{1}{N}\int_{0}^{\infty}dzq\Big(\frac{\sigma}{\alpha},x-z\Big)\int_{-\infty}^{\infty}dyq\Big(\frac{\sigma}{\alpha},z-y\Big)\frac{M-\Psi^+(R(\alpha y))}{M-R(\alpha)} \nn \\
	&&+\frac{N-1}{N}\int_0^{\infty} dz \bar{q}\Big(\frac{\sigma}{\alpha},x,z\Big) \int_{-\infty}^{\infty}dyq\Big(\frac{\sigma}{\alpha},y-z\Big)\frac{M-\Psi^+(R(\alpha y))}{M-R(\alpha)}.\label{e:Gratio}\\
	&&+\frac{M}{M-R(\alpha)}\int^{0}_{-\infty}dz\left\{\frac{1}{N}q\Big(\frac{\sigma}{\alpha},x-z\Big)+\frac{N-1}{N} \bar{q}\Big(\frac{\sigma}{\alpha},x,z\Big)\right\}\nn
	\eea
	
	Observe that the measure $q(\frac{\sigma}{\alpha},x-z)dz$ converges to the Dirac measure at point $x>0$ as $\alpha \rar \infty$. 
	\begin{enumerate}
		\item[Step 1:] For $y>0$
		\[
		\frac{(M-\Psi^+(R(\alpha y))}{M-R(\alpha)}=\frac{(M-R(\alpha y))\Psi^+_x(y^*)}{M-R(\alpha)}
		\]
		for some $y^*\geq R(\alpha y)$ by the Mean Value Theorem, where $\Psi^+_x$ stands for the derivative of $\Psi^+$, since $\Psi^+(M)=M$. Thus,
		\be \label{e:RVinlim}
		\lim_{\alpha \rar \infty}\frac{(M-\Psi^+(R(\alpha y))}{M-R(\alpha)}=\frac{(M-R(\alpha y))\Psi^+_x(y^*)}{M-R(\alpha)}=\Psi^+_x(M)\lim_{\alpha \rar \infty}\frac{G(\alpha y)}{G(\alpha)}
		\ee
		as $R(\infty)=M$.
		\item[Step 2:] Let $\gamma_{*}(x):=\liminf_{\alpha \rar \infty}\frac{G(\alpha x)}{G(\alpha)}$. Then,  in view of the first step, Fatou's lemma yields
		\[
			\gamma_*(x)\geq  \frac{\Psi^+_x(M)}{N}\gamma_*(x) + \frac{N-1}{Nx}\Psi^+_x(M)\int_0^x \gamma_*(y)dy.\]
		Thus, if $\gamma^*(x)=0$ for some $x>0$, it must be $0$ for almost all $x>0$. However, $\gamma_*(x)\geq 1$ for all $x\in (0,1)$ since $G$ is decreasing. Thus, $\gamma_*>0$ for all $x>0$. In particular, $\gamma_*$ is bounded away from $0$ on $[0,n]$ for any $n\geq 1$. 
		\item[Step 3:] It follows from Step 2 and Corollary 2.0.5 in \cite{BGT} that 
		\[
		\frac{x^d}{C}\leq \frac{G(\alpha x)}{G(\alpha)}\leq C x^c \mbox{ for} x\geq 1 \mbox{ and } \alpha \geq \alpha_0
		\]
		for some constants $c, d,C$ and $\alpha _0$. Moreover, since for $x<1$ we have 
		\[
		\frac{G(\alpha x)}{G(\alpha)}= \left(\frac{G(\alpha x x^{-1})}{G(\alpha x))}\right)^{-1},
		\]
		and $\alpha x >\alpha_0$ for large enough $\alpha$, we deduce that the mapping $(\alpha,x) \mapsto \frac{G(\alpha x)}{G(\alpha)}$ is bounded when $x$ belongs to bounded intervals in $(0,\infty)$.
		\item[Step 4:] Moreover, 
		\[
		(M-R(\alpha))\alpha\geq \frac{N-1}{N}\int_0^{\alpha}du\int_{0}^{\infty}dz q(\sigma,u-z)\int_{-\infty}^{\infty}dy(M-\Psi^+(R(y)))q(\sigma,z-y),
		\]
		which in turn implies
		
		\be \label{e:R*bd}
		\frac{1}{M-R(\alpha)}\leq K\alpha, \qquad \alpha>1 \mbox{ for some } K<\infty.
		\ee
		Since $\frac{G(\alpha x)}{G(\alpha)}$ is bounded when $x$ belongs to bounded intervals in $(0,\infty)$ by Step 3, we obtain, for any $\eps>0$,
		\be
		\lim_{\alpha\rar \infty}\int_{-\infty}^{\infty}dzq\Big(\frac{\sigma}{\alpha},x-z\Big)\int_{z-\eps}^{z+\eps}dyq\Big(\frac{\sigma}{\alpha},z-y\Big)\frac{M-\Psi^+(R(\alpha y))}{M-R(\alpha)} = \Psi^+_x(M)\gamma(x)\label{e:rvlim1},
		\ee
		where $\gamma(x):=\lim_{\alpha\rar \infty}\frac{G(\alpha x)}{G(\alpha)}$ in view of (\ref{e:RVinlim})  provided that the limit exists. 
		
		Furthermore, in view of (\ref{e:R*bd}) we also have
	\bea
	&&\int_{-\infty}^{\infty}dzq\Big(\frac{\sigma}{\alpha},x-z\Big)\int_{\bbR\backslash (z-\eps,z+\eps)}dyq\Big(\frac{\sigma}{\alpha},z-y\Big)\frac{M-\Psi^+(R(\alpha y))}{M-R(\alpha)}\nn\\
	&\leq&K M \int_{-\infty}^{\infty}dzq\Big(\frac{\sigma}{\alpha},x-z\Big)\int_{\bbR\backslash (z-\eps,z+\eps)}dy\alpha q\Big(\frac{\sigma}{\alpha},z-y\Big)\nn\\
	&\rar &0 \mbox{ as } \alpha \rar \infty.\label{e:Rv0lim1}
	\eea
	\item[Step 5:] Applying the arguments of Step 4 to the second and the third integrals in (\ref{e:Gratio}) now shows that for $x>0$
	\be \label{e:ODEgamma}
	\gamma(x)= \frac{\Psi^+_x(M)}{N}\gamma(x) + \frac{N-1}{Nx}\Psi^+_x(M)\int_0^x \gamma(y)dy.
	\ee
	In particular, $\lim_{\alpha\rar \infty}\frac{G(\alpha x)}{G(\alpha)}$ exists. Using the initial condition that $\gamma(1)=1$, direct manipulations show that
	\[
	\gamma(x)=x^{\rho^+}.
	\]
		
	\end{enumerate}
\end{proof}
\begin{proof}[Proof of Theorem \ref{t:SVlog}] We shall again only give the proof the first statement and assume without loss of generality that $E[V]=0$. 
	
	For any $\alpha>0$ define 
	\[
	r(\alpha,x):=\frac{R(\alpha x)-R(\alpha)}{(M-R(\alpha))^{n+1}}.
	\]
	Straightforward manipulations similar to the ones employed in the proof of Theorem \ref{t:RV} leads to
	\bean
	r(\alpha,x)&=&-\frac{R(\alpha)}{(M-R(\alpha))^{n+1}}\int^{0}_{-\infty}dz\left\{\frac{1}{N}q\Big(\frac{\sigma}{\alpha},x-z\Big)+\frac{N-1}{N} \bar{q}\Big(\frac{\sigma}{\alpha},x,z\Big)\right\}\\
	&+&\int_{0}^{\infty}dz\left\{\frac{1}{N}q\Big(\frac{\sigma}{\alpha},x-z\Big)+\frac{N-1}{N} \bar{q}\Big(\frac{\sigma}{\alpha},x,z\Big)\right\}\int_{1}^{\infty}dyq\Big(\frac{\sigma}{\alpha},z-y\Big)\frac{\Psi^+(R(\alpha y))-R(\alpha)}{(M-R(\alpha))^{n+1}}  \\
	&+&\int_{0}^{\infty}dz\left\{\frac{1}{N}q\Big(\frac{\sigma}{\alpha},x-z\Big)+\frac{N-1}{N} \bar{q}\Big(\frac{\sigma}{\alpha},x,z\Big)\right\}\int_{-\infty}^{1}dyq\Big(\frac{\sigma}{\alpha},z-y\Big)\frac{\Psi^+(R(\alpha y))-R(\alpha)}{(M-R(\alpha))^{n+1}}\\
	&\leq&\int_{0}^{\infty}dz\left\{\frac{1}{N}q\Big(\frac{\sigma}{\alpha},x-z\Big)+\frac{N-1}{N} \bar{q}\Big(\frac{\sigma}{\alpha},x,z\Big)\right\}\int_{1}^{\infty}dyq\Big(\frac{\sigma}{\alpha},z-y\Big)\frac{\Psi^+(R(\alpha y))-R(\alpha)}{(M-R(\alpha))^{n+1}}  \\
	&+&\int_{0}^{\infty}dz\left\{\frac{1}{N}q\Big(\frac{\sigma}{\alpha},x-z\Big)+\frac{N-1}{N} \bar{q}\Big(\frac{\sigma}{\alpha},x,z\Big)\right\}\int_{-\infty}^{1}dyq\Big(\frac{\sigma}{\alpha},z-y\Big)\frac{\Psi^+(R(\alpha ))-R(\alpha)}{(M-R(\alpha))^{n+1}}\\
	&\leq&K+\int_{0}^{\infty}dz\left\{\frac{1}{N}q\Big(\frac{\sigma}{\alpha},x-z\Big)+\frac{N-1}{N} \bar{q}\Big(\frac{\sigma}{\alpha},x,z\Big)\right\}\int_{1}^{\infty}dyq\Big(\frac{\sigma}{\alpha},z-y\Big)\frac{\Psi^+(R(\alpha y))-R(\alpha)}{(M-R(\alpha))^{n+1}} 
	\eean
	for some $K>0$ independent of $\alpha$, where the first inequality follows from that $R\geq E[V]=0$ and $R(\alpha y)\leq R(\alpha)$ for $y\leq 1$ and the second is due to the boundedness of $\frac{\Psi^+(R(\alpha y))-R(\alpha)}{(M-R(\alpha))^{n+1}}$.
	
	Moreover, for $y>1$
	\bean
	\frac{\Psi^+(R(\alpha y))-R(\alpha)}{(M-R(\alpha))^{n+1}}&=&\frac{\Psi^+(R(\alpha y))-R(\alpha y)}{(M-R(\alpha))^{n+1}}+r(\alpha,y)\\
	&\leq&\frac{\Psi^+(R(\alpha y))-R(\alpha y)}{(M-R(\alpha y))^{n+1}}+r(\alpha,y)
	\eean
	due to the monotonicity of $R$. Thus, utilising the boundedness of $\frac{\Psi^+(R(\alpha y))-R(\alpha)}{(M-R(\alpha))^{n+1}}$ once more,  we arrive at
	\be 
	r(\alpha,x)\leq \kappa+ \int_{0}^{\infty}dz\left\{\frac{1}{N}q\Big(\frac{\sigma}{\alpha},x-z\Big)+\frac{N-1}{N} \bar{q}\Big(\frac{\sigma}{\alpha},x,z\Big)\right\}\int_{1}^{\infty}dyq\Big(\frac{\sigma}{\alpha},z-y\Big)r(\alpha,y). \label{e:ralphabd} 
	\ee
	for some $\kappa\in (0,\infty)$ that is independent of $\alpha$.
	\begin{enumerate}
		\item[Step 1:] Let $\kappa$ be as above and consider the operator $T: C([1,\infty), [0,\infty)) \to  C([1,\infty), [0,\infty))$ defined by
		\[
		Tf(x)= \kappa+ \int_{0}^{\infty}dz\left\{\frac{1}{N}q\Big(\frac{\sigma}{\alpha},x-z\Big)+\frac{N-1}{N} \bar{q}\Big(\frac{\sigma}{\alpha},x,z\Big)\right\}\int_{1}^{\infty}dyq\Big(\frac{\sigma}{\alpha},z-y\Big)f(y).
		\]
		Clearly $T$ is increasing, i.e. $Tf \geq Tg$ if $f\geq g$.
		\item[Step 2:] For $z\geq 0$ and $c\in \bbR$,
		\be \label{e:innest1}
		\int_c^{\infty}dy q\Big(\frac{\sigma}{\alpha},z-y\Big)y=\frac{\sigma^2}{\alpha^2}q\Big(\frac{\sigma}{\alpha},z-c\Big)+z\int_{c-z}^{\infty}dy q\Big(\frac{\sigma}{\alpha},y\Big)\leq z+ \frac{\sigma}{\alpha\sqrt{2\pi}}.
		\ee
Thus, if $f(x)=\beta x+\delta$ for some $\beta>0$ and $\delta\geq 0$, we have
\bean
Tf(x)&\leq& \delta+ \kappa + \frac{\sqrt{2}\sigma\beta}{\alpha\sqrt{\pi}}+ \frac{\beta x}{N} +\beta \frac{N-1}{x N}\int_0^x ydy\\
&=&\delta +\kappa + \frac{\sqrt{2}\sigma\beta}{\alpha\sqrt{\pi}}+ \frac{\beta x (N+1)}{2 N}\leq \delta+ \kappa + \beta\left(\frac{\sqrt{2}\sigma}{\alpha\sqrt{\pi}}+\frac{3x}{4}\right),
\eean
where the last inequality is due to the hypothesis that $N\geq 2$.

In particular,  given  $\beta =\gamma \kappa$ for some $\gamma>4$,  $Tf \leq f$ whenever $\alpha\geq \frac{4\sqrt{2}\sigma \gamma}{\sqrt{\pi}(\gamma-4)}$.
\item[Step 3:] Define 
\[
\cD(\alpha):=\{f:[1,\infty)\to [0,\infty): f \mbox{ is continuous and } f(x)\leq 5\kappa x +\frac{1}{(M-R(\alpha))^n} \forall x\geq 1\}.
\]
and observe that the restriction of $r(\alpha,\cdot)$ to $[1,\infty)$ belongs to $\cD(\alpha) $ for all $\alpha>0$.

In view of Step 2 we have that $T \cD(\alpha) \subset \cD(\alpha)$ for large enough $\alpha$. Thus, it admits a fixed point in $  \cD(\alpha)$ for large enough $\alpha$ by Tarski's theorem (Theorem 1 in \cite{Tarski}). In fact, it admits a unique fixed point. Indeed, if $f$ and $g$ are two fixed points of $T$ in $ \cD(\alpha)$, 
\[
(f-g)(x)=\int_{0}^{\infty}dz\left\{\frac{1}{N}q\Big(\frac{\sigma}{\alpha},x-z\Big)+\frac{N-1}{N} \bar{q}\Big(\frac{\sigma}{\alpha},x,z\Big)\right\}\int_{1}^{\infty}dyq\Big(\frac{\sigma}{\alpha},z-y\Big)(f-g)(y).
\]
Thus, 
\bean
\inf_{x \geq 1} (f-g)(x)&\geq& \inf_{x \geq 1} (f-g)(x)\inf_{x \geq 1}\int_{0}^{\infty}dz\left\{\frac{1}{N}q\Big(\frac{\sigma}{\alpha},x-z\Big)+\frac{N-1}{N} \bar{q}\Big(\frac{\sigma}{\alpha},x,z\Big)\right\}\int_{1}^{\infty}dyq\Big(\frac{\sigma}{\alpha},z-y\Big)\\
&=&c\inf_{x \geq 1} (f-g)(x)
\eean
for some $c\in (0,1)$. Since $\inf_{x \geq 1} (f-g)(x)<\infty$, the above implies $\inf_{x \geq 1} (f-g)(x)\leq 0$. Applying the same argument to $g-f$, we deduce $f=g$.

Moreover,  taking $f^*(x)=5\kappa x$ and utilising the previous step we deduce that the unique fixed point is bounded from above by $f^*$.  Theorem 1 in \cite{Tarski} now yields that $r(\alpha, \cdot)\leq f^*$ for large enough $\alpha$ in view of (\ref{e:ralphabd}).
\item[Step 4:] In view of Step 3 we have that $r(\alpha,x)\leq 10\kappa$ for all $x\in [1,2]$ for large enough $\alpha$. Thus, Theorem 3.1.5 in \cite{BGT} yields $r(\alpha,\cdot)$ is bounded in bounded subintervals of $(0,\infty)$ uniformly in $\alpha$.
\item[Step 5:] In view of Step 4 and using arguments similar to the ones used in Steps 4 and 5 of the proof of Theorem \ref{t:RV}, we arrive at
\[
\gamma(x)=\frac{1}{k}+ \frac{\gamma(x)}{N}+ \frac{N-1}{xN}\int_0^x \gamma(y)dy,
\] 
where $\gamma(x):=\lim_{\alpha \rar\infty}r(\alpha,x)$ for $x>0$. The unique solution of the above equation with $\gamma(1)=0$ is given by 
\[
\gamma(x)=\frac{N}{k (N-1)}\log x.
\]
\item[Step 6:] Consider
\[
g(x):=\exp\big((M-R(x))^{-n}\big).
\]
Observe that
\bean
\frac{g(\alpha x)}{g(\alpha)}&=&\exp\left(\frac{(M-R(\alpha))^{n}-(M-R(\alpha x))^{		n}}{(M-R(\alpha))^{n}(M-R(\alpha x))^{n}}\right)\\
&=&\exp\left(\frac{(R(\alpha x)-R(\alpha))\sum_{i=0}^{n-1}(M-R(\alpha)^{n-1-i}(M-R(\alpha x)^i}{(M-R(\alpha))^{n}(M-R(\alpha x))^{n}}\right)\\
				&=&\exp\left(\frac{R(\alpha 		x)-R(\alpha)}{(M-R(\alpha))^{n+1}}\frac{M-R(\alpha)}{M-R(\alpha x)}\sum_{i=0}^{n-1}\left(\frac{M-R(\alpha)}{M-R(\alpha x)}\right)^{n-1-i}\right),
\eean
which converges to $\exp(n \gamma(x))$ as $\alpha \rar \infty$ since $M-R$ is slowly varying at $\infty$. Thus, 
\[
g(x)=x^{\frac{N}{N-1}\frac{n}{k}}s(x), \quad x>0.
\]
where $s$ is a slowly varying function at $\infty$.  That is,
\[
(M-R(x))^{-n}= \frac{N}{N-1}\frac{n}{k}\log x+\log s(x).
\]
Since $\lim_{x \rar \infty}\frac{\log s(x)}{\log x}=0$ (cf. Proposition 1.3.6 (i) and (iii) in \cite{BGT}), we have
\[
M-R(x) \sim \left(\frac{N}{N-1}\frac{n}{k}\right)^{-\frac{1}{n}}(\log x)^{-\frac{1}{n}}. \quad x \rar \infty.
\]
	\end{enumerate}
\end{proof}
\begin{proof}[Proof of Corollary \ref{c:SVlog}] Again we only prove the first statement. The hypothesis implies $\Psi^+(x)-x$ is regularly varying with index $n+1$ at $M$. Thus, $\Psi^+(R)-R$ is regularly varying of index $(n+1)\rho^+$ at $\infty$, which in particular implies
	\be \label{e:psirRV}
	\lim_{x\rar \infty}\frac{\Psi^+(R(x))-R(x)}{\int_x^{\infty}\frac{\Psi^+(R(y))-R(y)}{y}}dy=-\lim_{x\rar \infty}\frac{x(\Psi^+_x(R(x)R'(x)-R'(x))}{\Psi^+(R(x))-R(x)}=-(n+1)\rho^+,
	\ee
	in view of Theorem 1.5.11(ii) in \cite{BGT}.
	
	Moreover, direct manipulations yield
	\[
	\Pi^+(x)=\frac{\int_x^M\Pi^+(y)dy}{\Psi^+(x)-x},
		\]
		which in turn implies
		\[
		-\frac{\Pi^+_x(x)}{\Pi^+(x)}=\frac{\Psi^+_x(x)}{\Psi^+(x)-x}.
		\]
		Therefore,
		\bean
			\lim_{x\rar \infty}\frac{x \Pi^+_x(R(x))R'(x)}{\Pi^+(R(x))}&=&-	\lim_{x\rar \infty}\frac{x R'(x)\Psi^+_x(R(x))}{\Psi^+(R(x))-R(x)}\\
		&=&-(n+1)\rho^+-	\lim_{x\rar \infty}\frac{x R'(x)}{\Psi^+(R(x))-R(x)}\\
			&=&-(n+1)\rho^+-k	\lim_{x\rar \infty}\frac{x R'(x)}{(M-R(x))^{n+1}},
		\eean
		where the second equality follows from (\ref{e:psirRV}).
		
		Note that if $\Psi^+_x(M)<1$, $n=0$ and $k=\frac{1}{1-\Psi^+(M)}$ (see Remark \ref{r:remarkn}). In this case, 
		\[
		\lim_{x\rar \infty}\frac{x R'(x)}{M-R(x)}=-\rho^+
		\]
		by Theorem Theorem 1.5.11(ii ) in \cite{BGT}. Thus,
		\[
		\lim_{x\rar \infty}\frac{x \Pi^+_x(R(x))R'(x)}{\Pi^+(R(x)}=\rho^+(k-1)=-\frac{\Psi^+_x(M)}{1-\frac{\Psi^+_x(M)}{N}}.
		\]
		An application of Exercise 1.11.13 in \cite{BGT} to $1/\Pi^+(R)$ establishes the claim.
		
		Now, suppose $\Psi^+_x(M)=1$ and observe that $n$ is necessarily bigger than $0$ and $\rho^+=0$ in this case.  Recall the function $g$ in Step 6 of the proof of Theorem \ref{t:SVlog} and note that
		\[
		\lim_{x\rar \infty}\frac{xg'(x)}{g(x)}=\frac{Nn}{(N-1)k}
		\]
		by another application of Theorem 1.5.11(i) in \cite{BGT}. Since
		\[
		\frac{xg'(x)}{g(x)}=\frac{x n R'(x)}{(M-R(x))^{n+1}},
		\]
		we arrive at
		\[
			\lim_{x\rar \infty}\frac{x R'(x)}{(M-R(x))^{n+1}}=\frac{N}{k(N-1)}.
		\]
		Therefore,
		\[
			\lim_{x\rar \infty}\frac{x \Pi^+_x(R(x))R'(x)}{\Pi^+(R(x)}=-\frac{N}{N-1}=-\frac{\Psi^+_x(M)}{1-\frac{\Psi^+_x(M)}{N}},
		\]
		and we again  conclude by means of Exercise 1.11.13 in \cite{BGT}.
	\end{proof}
\begin{proof}[Proof of Theorem \ref{t:asympF}]
	We shall give a proof of the first statement as the second one can be proven along similar lines. 
	
	By means of a change of variable employed in earlier proofs  we obtain for $x>0$
	\bea 
	\frac{G(\alpha x)}{G(\alpha)}&&=\frac{1}{N}\int_{0}^{\infty}dzq\Big(\frac{\sigma}{\alpha},x-z\Big)\int_{-\infty}^{\infty}\nu(\alpha,z,dy)dy\frac{M-\Psi^+(F(\alpha y))}{M-F(\alpha)} \nn \\
	&&+\frac{N-1}{N}\int_0^{\infty} dz \bar{q}\Big(\frac{\sigma}{\alpha},x,z\Big) \int_{-\infty}^{\infty}\nu(\alpha,z,dy)dy\frac{M-\Psi^+(F(\alpha y))}{M-F(\alpha)}\label{e:GratioF}\\
	&&+\frac{1}{M-F(\alpha)}\int^{0}_{-\infty}dz\left\{\frac{1}{N}q\Big(\frac{\sigma}{\alpha},x-z\Big)+\frac{N-1}{N} \bar{q}\Big(\frac{\sigma}{\alpha},x,z\Big)\right\}\phi_F(\alpha z),\nn
	\eea
	where
	\be \label{e:measnu}
	\nu(\alpha,z,dy):=\frac{\Pi^+(F(\alpha y)q\Big(\frac{\sigma}{\alpha},y-z\Big))}{\int_{-\infty}^{\infty}du\Pi^+(F(\alpha  u)q\Big(\frac{\sigma}{\alpha},u-z\Big))}dy.
	\ee
	
	We shall first demonstrate that for $z>0$ the measure $ \nu(\alpha,z,dy)$ converges to the Dirac measure at point $z$ as $\alpha \rar \infty$.  Indeed, let $R^*$ be the maximal solution of (\ref{e:compareup}) and observe that  for any $\eps>0$ and $z>0$
	\[
	\nu(\alpha,z,(\bbR\backslash(z-\eps,z+\eps)))=\frac{\int_{\bbR\backslash(z-\eps,z+\eps)}dy\Pi^+(F(\alpha y)q\Big(\frac{\sigma}{\alpha},y-z\Big)}{\int_{-\infty}^{\infty}du\Pi^+(F(\alpha  u)q\Big(\frac{\sigma}{\alpha},u-z\Big)}\leq\frac{\int_{\bbR\backslash(z-\eps,z+\eps)}dyq\Big(\frac{\sigma}{\alpha},y-z\Big)}{\int_{0}^{\infty}du\Pi^+(R(\alpha  u)q\Big(\frac{\sigma}{\alpha},u-z\Big)},
	\]
	where the last inequality is due to the fact that $F\leq R^*$ by Theorem \ref{t:comparison}.
	
	Moreover, since $\Pi^+(R)$ is regularly varying at $\infty$ with some index $r\leq 0$ by Corollary \ref{c:SVlog}, we have $\Pi^+(R(\alpha  u)\geq c(1+u)^{r-\delta}\alpha^{r-\delta}$ for some $c>0$ for all $u\geq 0$ by Proposition 1.3.6(v) in \cite{BGT}. Therefore,
	\be \label{e:nubound}
	\nu(\alpha,z,(\bbR\backslash(z-\eps,z+\eps)))\leq \frac{\alpha ^{\delta-r}\int_{\bbR\backslash(z-\eps,z+\eps)}dyq\Big(\frac{\sigma}{\alpha},y-z\Big)}{c\int_{0}^{\infty}du (1+u)^{r-\delta}q\Big(\frac{\sigma}{\alpha},u-z\Big)},
	\ee
	the right side of which converges to $0$ as $\alpha \rar \infty$. Similary, we can show 
	\[
	\lim_{\alpha \rar \infty}\nu(\alpha,z,(-\infty,z-\eps))=0.
	\]
	\begin{enumerate}
		\item[Step 1:] Let $\gamma_{*}(x):=\liminf_{\alpha \rar \infty}\frac{G(\alpha x)}{G(\alpha)}$.  It follows from the same argument in Step 2 of the proof of Theorem \ref{t:RV} that $\gamma_*$ is bounded away from $0$ on $[0,n]$ for any $n\geq 1$. This in turn implies the mapping $(\alpha,x) \mapsto \frac{G(\alpha x)}{G(\alpha)}$ is bounded when $x$ belongs to bounded intervals in $(0,\infty)$ as in Step 3 of the same proof.
		\item[Step 2:] Moreover, since $F\leq R^*$, (\ref{e:R*bd}) yields
		\[
		\frac{1}{M-R(\alpha)}\leq K\alpha, \qquad \alpha>1 \mbox{ for some } K<\infty.\]
Thus, the arguments of Steps 4 and 5 of the proof of Theorem \ref{t:RV} are still applicable due to (\ref{e:nubound}) and that $\Psi^-_F \leq E[V]$.  Consequently, $\gamma$ still satisfies (\ref{e:ODEgamma}), where $\gamma(x)=\lim_{x\rar \infty}\frac{G(\alpha x)}{G(\alpha)}$. In particular, $\gamma(x)=x^{\rho^+}$.
\item[Step 3:] If $\Psi^+_x(M)$, the proof of Theorem \ref{t:SVlog} can be applied verbatim once we show that $f(\alpha,\cdot)	$ is bounded in $[1,2]$ uniformly in $\alpha$, where 
\[
f(\alpha,x):=\frac{F(\alpha x)-F(\alpha)}{(M-F(\alpha))^{n+1}}.
\]
Since $F(\alpha)$ is eventually larger than $E[V]$ and $\phi_F(z)\leq E[V]$ for $z<0$, we have for large enough $\alpha$
\bean
f(\alpha,x)&=&\int^{0}_{-\infty}dz\left\{\frac{1}{N}q\Big(\frac{\sigma}{\alpha},x-z\Big)+\frac{N-1}{N} \bar{q}\Big(\frac{\sigma}{\alpha},x,z\Big)\right\}\frac{\phi_F(\alpha z)-F(\alpha)}{(M-F(\alpha))^{n+1}}\\
&+&\int_{0}^{\infty}dz\left\{\frac{1}{N}q\Big(\frac{\sigma}{\alpha},x-z\Big)+\frac{N-1}{N} \bar{q}\Big(\frac{\sigma}{\alpha},x,z\Big)\right\}\int_{-\infty}^{\infty}\nu(\alpha,z,dy)\frac{\Psi^+(F(\alpha y))-F(\alpha)}{(M-F(\alpha))^{n+1}}  \\
&\leq&\int_{0}^{\infty}dz\left\{\frac{1}{N}q\Big(\frac{\sigma}{\alpha},x-z\Big)+\frac{N-1}{N}\bar{q}\Big(\frac{\sigma}{\alpha},x,z\Big)\right\}\int_{-\infty}^{\infty}q\Big(\frac{\sigma}{\alpha},z-y\Big)\frac{\Psi^+(F(\alpha y))-F(\alpha)}{(M-F(\alpha))^{n+1}},
\eean
where the last inequality follows from Part (4) of Lemma \ref{l:key} since $\Psi^+(F(\alpha y))$ is increasing in $y$ for positive $\alpha$. Thus, $f(\alpha,\cdot)$ can be shown to be bounded by the same function that bounds $r(\alpha,\cdot)$ introduces in the proof of Theorem \ref{t:SVlog}. Repeating the remaining arguments therein yields the claim. 
\end{enumerate}
\end{proof}
\begin{proof}[Proof of Corollary \ref{c:ISasymptotics}]
	It follows from Theorem \ref{t:eq} that $F^*$ must solve (\ref{e:F}). In particular $F^*$ is regularly varying at $\infty$ of order $\rho^+$. 
	
	In view of Proposition \ref{p:IS} we have
	\[
	\frac{M-IS^*(x)}{M-F^*(x)}=N\int_0^1\frac{M-F^*(xy)}{M-F^*(x)}y^{N-1}dy=\frac{N\int_0^x(M-F^*(y))y^{N-1}dy}{x^N(M-F^*(x))}.
	\]
	Observe that $\lim_{y\rar \infty}(M-F^*(y))y^{-\rho^+ \eps}=\infty$ by Proposition 1.3.6 (v) in \cite{BGT}. Thus, $\int_0^{\infty}(M-F^*(y))y^{N-1}dy=\infty$. This justifies the application of the L'Hospital rule to arrive at
	\bean
	\lim_{x\rar \infty}\frac{M-IS^*(x)}{M-F^*(x)}&=&\lim_{x\rar \infty}\frac{N(M-F^*(x))x^{N-1}}{Nx^{N-1}(M-F^*(x))- x^N F^*_x(x)}\\
	&=&\frac{1}{1-\lim_{x\rar \infty} \frac{x F^*_x(x)}{N (M-F^*(x))} }\\
	&=&\frac{1}{1+\frac{\rho^+}{N}}=\frac{N}{N+\rho^+},
	\eean
	where the third equality follows from Exercise 1.11.13 applied to $1/(M-F)$.
	
	Asymptotic relationship near $-\infty$ is proved the same way.
\end{proof}
\end{document}